\def\final{1}
\definecolor{DarkGreen}{rgb}{0.1,0.5,0.1}
\definecolor{DarkRed}{rgb}{0.6,0.2,0.2}
\definecolor{DarkBlue}{rgb}{0.2,0.2,0.6}
\newcommand{\mynote}[1]{\marginpar{\tiny\sf #1}}
\newcommand{\mynote}[1]{}
\newcommand{\jnote}[1]{\mynote{Jon: {#1}}}
\newcommand{\pr}[2]{\underset{#1}{\mathbb{P}}\left[ #2 \right]}
\newcommand{\ex}[2]{\underset{#1}{\mathbb{E}}\left[ #2 \right]}
\newcommand{\err}{\theta}
\newcommand\N{\mathbb{N}}
\newcommand{\cD}{\mathcal{D}}
\newcommand{\cE}{\mathcal{E}}
\newcommand{\cO}{\mathcal{O}}
\newcommand{\cP}{\mathcal{P}}
\newcommand{\poly}{\mathrm{poly}}
\newcommand{\bits}{\{0,1\}}
\newcommand{\pmo}{\{\pm1\}}
\newcommand{\getsr}{\gets_{\mbox{\tiny R}}}
\newcommand{\set}[1]{\left\{#1\right\}} 
\newcommand{\from}{:}
\newcommand{\negl}{\mathrm{negl}}
\newcommand{\eps}{\varepsilon}
\newcommand{\INDSTATE}[1][1]{\STATE\hspace{#1\algorithmicindent}}
\newcommand{\cryptoalg}[1]{\mathit{#1}}
\newcommand{\cryptoadv}[1]{\mathcal{#1}}
\newcommand{\gen}{\cryptoalg{Gen}}
\newcommand{\enc}{\cryptoalg{Enc}}
\newcommand{\dec}{\cryptoalg{Dec}}
\newcommand{\trace}{\cryptoalg{Trace}}
\newcommand{\encgen}{\cryptoalg{\gen}}
\newcommand{\encenc}{\cryptoalg{\enc}}
\newcommand{\encdec}{\cryptoalg{\dec}}
\newcommand{\ifpc}{\cryptoadv{F}}
\newcommand{\FPCmat}{C}
\newcommand{\FPCcol}{c}
\newcommand{\fpcfail}{\eps}
\newcommand{\fpcfalse}{\delta}
\newcommand{\fpcgen}{\cryptoalg{\gen}}
\newcommand{\fpctrace}{\cryptoalg{\trace}}
\newcommand{\rob}{\beta}
\newcommand{\irob}{\left( \frac12-\beta \right)}
\newcommand{\mass}{\zeta}
\newcommand{\accuracy}{\alpha}
\newcommand{\errwt}{\gamma}
\newcommand{\security}{\lambda}
\newcommand{\pirate}{\cP}
\newcommand{\oracle}{\cO}
\newcommand{\cryptogame}[1]{\mathsf{#1}}
\newcommand{\realgame}{\cryptogame{Attack}}
\newcommand{\idealgame}{\cryptogame{IdealAttack}}
\newcommand{\accuracygame}{\cryptogame{Acc}}
\newcommand{\bnpgame}{\cryptogame{NonPrivacy}}
\newcommand{\bnpattack}{\cryptogame{PrivacyAttack}}
\newcommand{\idealbnpattack}{\cryptogame{IdealPrivacyAttack}}
\newcommand{\ifpcgame}{\cryptogame{IFPC}}
\newcommand{\encadv}{\cryptoadv{B}}
\newcommand{\fpcadv}{\cryptoadv{P}}
\newcommand{\accadv}{\cryptoadv{A}}
\newcommand{\bnpadv}{\cryptoadv{A}_{\mathrm{priv}}}
\newcommand{\consistent}{\mathrm{Con}}
\newcommand{\sk}{sk}
\newcommand{\ssk}{\overline{\sk}}
\newcommand{\encoracle}{\cE}
\newcommand{\ct}{ct}
\newcommand{\pop}{N}
\newcommand{\sample}{n}
\newcommand{\length}{\ell}
\newcommand{\dimension}{d}
\newcommand{\dist}{\cD}
\newcommand{\queries}{\length}
\newcommand{\query}{q}
\newcommand{\collusion}{\sample}
\newcommand{\answer}{a}
\newcommand{\rindex}{j}
\newcommand{\errind}[2]{\mathbb{I}\left( #1 \in \{0,1\} \wedge #2 \neq 2#1-1\right)}
\newtheorem{theorem}{Theorem}[section]
\newtheorem{thm}[theorem]{Theorem}
\newtheorem{lem}[theorem]{Lemma}
\newtheorem{claim}[theorem]{Claim}
\newtheorem{remark}[theorem]{Remark}
\newtheorem{prop}[theorem]{Proposition}
\theoremstyle{definition}
\newtheorem{definition}[theorem]{Definition}
\title{Interactive Fingerprinting Codes and the \\ Hardness of Preventing False Discovery}
\author{Thomas Steinke\thanks{Harvard University School of Engineering and Applied Sciences.  Supported by NSF grant CCF-1116616. \newline Email: \href{mailto:tsteinke@seas.harvard.edu}{tsteinke@seas.harvard.edu}.} \and Jonathan Ullman\thanks{Harvard University Center for Research on Computation and Society and Columbia University.  Supported by NSF Grant CNS-1237235 and a Simons Society of Fellows Junior Fellowship.  Email: \href{mailto:jullman@cs.columbia.edu}{jullman@cs.columbia.edu}.}}
\begin{document}
\maketitle

\pagenumbering{gobble}
\begin{abstract}
We show an essentially tight bound on the number of adaptively chosen statistical queries that a computationally efficient algorithm can answer accurately given $n$ samples from an unknown distribution.  A statistical query asks for the expectation of a predicate over the underlying distribution, and an answer to a statistical query is accurate if it is ``close'' to the correct expectation over the distribution.  This question was recently \linebreak studied by Dwork et al.~\cite{DworkFHPRR14}, who showed how to answer $\tilde{\Omega}(n^2)$ queries efficiently, and also by Hardt and Ullman~\cite{HardtU14}, who showed that answering $\tilde{O}(n^3)$ queries is hard.  We close the gap between the two bounds and show that, under a standard hardness \linebreak assumption, there is no computationally efficient algorithm that, given $n$ samples from an unknown distribution, can give valid answers to $O(n^2)$ adaptively chosen statistical queries.  An implication of our results is that computationally efficient algorithms for answering \linebreak arbitrary, adaptively chosen statistical queries may as well be \emph{differentially private}.  

We obtain our results using a new connection between the problem of answering adaptively chosen statistical queries and a combinatorial object called an \emph{interactive fingerprinting code}~\cite{FiatT01}.  In order to optimize our hardness result, we give a new Fourier-analytic approach to analyzing fingerprinting codes that is simpler, more flexible, and yields better parameters than previous constructions.
\end{abstract}

\vfill
\newpage

\tableofcontents

\vfill
\newpage

\pagenumbering{arabic}
\section{Introduction}
Empirical research commonly involves asking multiple ``queries'' on a finite sample drawn from some population(e.g., summary statistics, hypothesis tests, or learning algorithms).  The outcome of a query is deemed significant if it is unlikely to have occurred by chance alone, and a ``false discovery'' occurs if the analyst incorrectly declares an observation significant.  For decades statisticians have been devising methods for preventing false discovery, such as the ``Bonferroni correction''~\cite{Bonferroni36,Dunn61} and the widely used and highly influential method of Benjamini and Hochberg~\cite{BenjaminiH79} for controlling the ``false discovery rate.''  

Nevertheless, false discovery persists across all empirical sciences, and both popular and scientific articles report on an increasing number of invalid research findings.  Typically false discovery is attributed to misuse of statistics.  However, another possible explanation is that methods for preventing false discovery do not address the fact that data analysis is inherently \emph{adaptive}---the choice of queries depends on previous interactions with the data.  The issue of adaptivity was recently investigated in a striking paper by Dwork, Feldman, Hardt, Pitassi, Reingold, and Roth~\cite{DworkFHPRR14} and also by~\cite{HardtU14}.

These two papers formalized the problem of adaptive data analysis in Kearns' \emph{statistical-query (SQ) model}~\cite{Kearns93}. In the SQ model, there is an algorithm called the \emph{oracle} that is given $n$ samples from an unknown distribution ${\cal D}$ over some finite universe $\mathcal{X} = \{0,1\}^d$, where the parameter~$d$ is the dimensionality of the distribution. The oracle must answer \emph{statistical queries} about $\cal D$.  A statistical query $q$ is specified by a predicate 
$p\colon {\cal X}\to \{0,1\}$ and the answer to a statistical query is $$q({\cal D}) = \ex{x\sim{\cal D}}{p(x)}.$$

The oracle's answer $a$ to a query $q$ is \emph{accurate} if $|a-q({\cal D})| \le \alpha$ with high probability (for suitably small $\alpha$).  Importantly, the goal of the oracle is to provide answers that ``generalize'' to the underlying distribution, rather than answers that are specific to the sample. The latter is easy to achieve by outputting the empirical average of the query predicate on the sample.

The analyst makes a sequence of queries $q^1,q^2,\dots,q^k$ to the oracle, which responds with answers $a^1, a^2, \dots, a^k$. In the adaptive setting, the query $q^i$ may depend on the previous queries and answers $q^1,a^1,\dots,q^{i-1},a^{i-1}$ arbitrarily.  We say the oracle is \emph{accurate} given $n$ samples for $k$ adaptively chosen queries if, when given $n$ samples from an arbitrary distribution ${\cal D}$, the oracle accurately responds to any adaptive analyst that makes at most $k$ queries with high probability. A computationally efficient oracle answers each query in time polynomial in $n$ and $d$.\footnote{We assume that the analyst only asks queries that can be evaluated on the sample in polynomial time.}

When the queries are specified \emph{non adaptively} (i.e. independent of previous answers), then the empirical average of each query on the sample is accurate with high probability as long as $k \leq 2^{o(n)}$. However, the situation turns out to be very different when the queries are asked adaptively.  
Using a connection to \emph{differential privacy}~\cite{DworkMNS06}, Dwork et al.~\cite{DworkFHPRR14} showed that there is a computationally efficient oracle that accurately answers $\tilde{\Omega}(n^2)$ many adaptively chosen queries.  They also showed that there is an exponential-time oracle that can answer exponentially in $n$ many queries, and left it open whether this guarantee could be achieved by an efficient oracle.  Unfortunately,~\cite{HardtU14}, building on hardness results for differential privacy~\cite{Ullman13, BunUV14} showed that, assuming the existence of one-way functions, there is no computationally efficient algorithm that answers $\tilde{O}(n^3)$ queries.  Given the importance of preventing false discovery in adaptive data analysis, we would like to know if there is an efficient algorithm that answers as many as $n^3$ queries.

Unfortunately, we show that this is not the case, and prove the following nearly optimal hardness result for preventing false discovery.
\begin{theorem}[Informal] \label{thm:main1}
Assuming the existence of one-way functions, there is no computationally efficient oracle that given $n$ samples is accurate on $O(n^2)$ adaptively chosen queries.
\end{theorem}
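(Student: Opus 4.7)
\section*{Proof Proposal for Theorem \ref{thm:main1}}

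The plan is to obtain the hardness result through a three-layer reduction that converts any efficient oracle answering $O(n^2)$ adaptive statistical queries into an adversary that breaks a cryptographic traitor-tracing scheme built from one-way functions. The bridge between the statistical-query world and cryptography will be an \emph{interactive fingerprinting code} (IFPC) with essentially optimal length.

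\textbf{Step 1: Construct an IFPC of length $\tilde O(n^2)$ with $n$ users.} The first and hardest step is to build an efficient interactive fingerprinting code $\fpcscheme$ in which the generator interacts with a pirate for $\tilde O(n^2)$ rounds, producing in each round one ``bit'' per user, and at the end traces with high probability some user that the pirate used. Here the gain of a factor $n$ over the Tardos-type codes underlying \cite{HardtU14} is what allows the final bound to drop from $n^3$ to $n^2$. I would aim to carry this out by the Fourier-analytic approach advertised in the abstract: sample column biases $p_j \in (0,1)$ from a carefully chosen distribution (concentrated near $0$ and $1$), expand the pirate's behaviour as a Fourier polynomial in the user bits, and use concentration of low-degree Fourier mass to argue that any pirate strategy producing a response correlated with the generator's columns must have non-trivial Fourier correlation with some individual user's column. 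The main technical obstacle will be controlling this Fourier mass under \emph{adaptive} column generation while keeping the total length at $\tilde O(n^2/\alpha^2)$.

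\textbf{Step 2: Compile the IFPC into a traitor-tracing scheme $\ttscheme$.} Next I combine the IFPC from Step 1 with a (secret-key) public-trace traitor-tracing scheme in the standard cryptographic style. Using pseudorandom functions (which follow from one-way functions), I would assign each user a PRF key $\sk_i$; the $j$-th ciphertext encodes the $j$-th IFPC column so that user $i$ can decrypt iff the IFPC codeword for user $i$ has a $1$ in column $j$. Decryption returns $2c_{i,j}-1 \in \pmo$. Soundness of tracing reduces to soundness of the IFPC: any efficient pirate decoder built from a subset $S$ of keys produces responses on each ciphertext that, by PRF security, are essentially a function of the IFPC bits of $S$, so the IFPC tracer identifies some $i \in S$ with high probability.

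\textbf{Step 3: Reduce adaptive SQ accuracy to breaking $\ttscheme$.} Finally I use $\ttscheme$ to define the hard distribution $\dist$ and query sequence. The universe $\cX$ is a set of key--ciphertext index pairs; the distribution $\dist$ is uniform over the population of $N \gg n$ keys produced by $\ttgen$, so the oracle $\oracle$ receives a sample of $n$ keys. Each adaptively chosen query $q^j$ is the predicate ``decrypt the $j$-th adaptively produced ciphertext using your key and output whether it is $1$.'' If $\oracle$ is $\alpha$-accurate on $\tilde O(n^2)$ such queries, then rounding its answers gives a pirate that correctly decrypts each ciphertext, which by Step 2 lets the IFPC tracer output one of the $n$ sampled keys. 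Choosing $N$ super-polynomial in $n$, no efficient algorithm can output a sampled key except with negligible probability (a standard ``fingerprinting lower bound'' argument), contradicting the existence of $\oracle$.

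\textbf{Expected main obstacle.} The conceptual glue in Steps 2 and 3 is essentially a repackaging of \cite{HardtU14} and the traitor-tracing literature, so I expect it to go through with routine care. The real work is in Step 1: giving a Fourier-analytic IFPC whose length is $\tilde O(n^2)$ rather than $\tilde O(n^3)$, and in particular proving that the tracing guarantee survives the adaptivity of the pirate and the column generator simultaneously. The key lemma I would target is a statement of the form ``if the pirate's response vector is $\alpha$-correlated with the generator's column over $\tilde O(n^2/\alpha^2)$ rounds, then some single user's column has inverse-polynomial Fourier correlation with the pirate's response,'' after which a union bound over users finishes the argument.
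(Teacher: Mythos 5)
Your Steps 1 and 2 are broadly aligned with the paper: the paper does construct a length-$O(n^2)$ IFPC by a Fourier-analytic (biased hypercube) analysis, and it does wrap the IFPC with a cryptographic primitive that follows from one-way functions (the paper uses a semantically secure secret-key encryption scheme rather than PRFs, and each user decrypts to \emph{read} their $\pm 1$ bit rather than ``can decrypt iff the bit is $1$,'' but these are cosmetic differences). The genuine gap is in Step 3, where the stated contradiction is not one.

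You write that ``the IFPC tracer output[s] one of the $n$ sampled keys,'' then claim a contradiction because ``no efficient algorithm can output a sampled key except with negligible probability.'' This is false, and it is not what the paper argues. The tracer is run by the \emph{analyst}, who built $\dist$ and therefore already knows every key in the population; the analyst outputting a sampled key is not a cryptographic break of anything. The actual contradiction in the paper is between two opposing guarantees about the \emph{response sequence}: (i) if $\oracle$ is accurate for the distribution, then (after removing accused users from later queries and paying a small error for the shrinking population) the rounded answers are consistent with the IFPC columns on all but a $\rob$ fraction of rounds; but (ii) IFPC \emph{completeness} says no adversary seeing only $\FPCcol^j_{S\setminus T^{j-1}}$ can be $\rob$-consistent except with negligible probability. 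To make (ii) applicable one first passes to an ``ideal'' game where ciphertexts for non-sampled users encrypt $0$, which is computationally indistinguishable from the real game by encryption security. Nothing in your Step 3 uses completeness in this way; it instead appeals to an unrelated ``no one can output a sampled key'' principle, which does not hold here.

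Two further substantive omissions. First, you never mention that accused users must be \emph{removed} from subsequent queries (the paper defines $\query^{j}(i',\sk')=0$ for $i' \in T^{j}$). Without this, the pirate's interface never matches the IFPC game and completeness cannot be invoked; moreover you then need the IFPC's \emph{soundness} (few false accusations) to show the removal shifts the true query value by at most $|T^{j}|/N \leq 1/500$, which is the step that lets you translate distribution-accuracy into IFPC-consistency. Second, the paper takes $N = 2000n$, not super-polynomial in $n$; a super-polynomial population would break the quantitative counting (the false-accusation budget is a constant fraction of $N-n$ and must be small relative to $N$ so that $|T^{j}|/N$ stays bounded). The super-polynomial-$N$ remark seems to come from porting a differential-privacy traitor-tracing lower bound argument that does not apply in this reduction.
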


Conceptually, our result gives further evidence that there may be an inherent computational barrier to preventing false discovery in interactive data analysis.  It also shows that in the worst case, an efficient oracle for answering adaptively chosen statistical queries may as well be differentially private.  That is, the oracle used in Dwork et al.~\cite{DworkFHPRR14} to answer $\tilde{\Omega}(n^2)$ queries gives the strong guarantee of differential privacy for the sample, and no efficient oracle regardless of privacy can answer significantly more arbitrary, adaptively chosen queries.  It would be interesting to see whether this sort of equivalence holds in more restricted settings.

As in~\cite{HardtU14}, our hardness result applies whenever the dimensionality $d$ of the data grows with the sample size such that $2^d$ is not polynomial in~$n.$\footnote{This is under the stronger, but still standard, assumption that exponentially-hard one-way functions exist.}  This requirement is both mild and necessary.  If $n \gg 2^d$ then the empirical distribution of the $n$ samples will be close to the underlying distribution in statistical distance, so every statistical query can be answered accurately given the sample.  Thus, the dimensionality of the data has a major effect on the hardness of the problem.  In fact, we can prove a nearly optimal \emph{information theoretic} lower bound when the dimensionality of the data is much larger than $n$.

\begin{theorem}[Informal] \label{thm:main2}
There is no oracle (even a computationally unbounded one) that given $n$ samples in dimension $d = O(n^2)$ is accurate on $O(n^2)$ adaptively chosen queries.
\end{theorem}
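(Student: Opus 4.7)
The plan is to derive the impossibility from the combinatorial security of an interactive fingerprinting code (IFPC), whose construction and Fourier-analytic analysis form the technical core of the paper. The IFPC consists of a randomized generator $\cG$ producing exchangeable codewords $w_1,\dots,w_n \in \bits^d$, for instance with each coordinate $w_{ij}$ drawn i.i.d.\ from $\mathrm{Bernoulli}(p_j)$ for biases $p_j$ sampled from a carefully chosen prior. In the interactive game, a tracer specifies a column index $c^j \in [d]$ at each round $j \in [k]$, and a pirate, knowing the codebook, responds with a value $b^j \in [0,1]$. The paper's ``fingerprinting lemma'' establishes that for any pirate that respects the marking assumption, the cumulative per-user scores $s_i = \sum_{j=1}^k (w_{ij} - p_{c^j})(b^j - p_{c^j})$ satisfy $\sum_{i=1}^n \mathbb{E}[s_i] = \Omega(k)$, which forces $\max_i s_i$ to be large enough for the tracer to accuse a user with high probability.

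The reduction proceeds as follows. Suppose, toward a contradiction, that an oracle $\cO$ is $\alpha$-accurate on $k$ adaptively chosen statistical queries given $n$ samples in $\bits^d$ for $k, d = O(n^2)$. Draw $W = (w_1,\dots,w_n) \sim \cG$ and present $W$ to $\cO$ as $n$ i.i.d.\ samples from the product distribution $D$ with bias vector $(p_j)_{j \in [d]}$. In each round $j$ the IFPC challenger picks column $c^j$; translate this into the statistical query $q^j(x) = x[c^j]$, feed it to $\cO$, and forward $\cO$'s answer $a^j$ as the pirate's response. Accuracy guarantees $|a^j - p_{c^j}| \leq \alpha$, and standard concentration of i.i.d.\ Bernoulli samples implies $|p_{c^j} - \tfrac{1}{n}\sum_i w_{ij}| = O(1/\sqrt{n})$, so $a^j$ lies within $O(\alpha + 1/\sqrt{n})$ of the coalition's column mean and thus respects the marking assumption.

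The contradiction now pits $\cO$'s accuracy against the fingerprinting lemma. Because $|a^j - p_{c^j}| \leq \alpha$ for every round $j$, a Bernstein-style argument (exploiting that $w_{ij} - p_{c^j}$ is centered and bounded, and that $a^j - p_{c^j}$ has magnitude $\alpha$) bounds $\max_i |s_i|$ above by a quantity scaling as $\alpha \cdot \sqrt{k \log n}$ with high probability. The fingerprinting lemma in contrast yields the lower bound $\max_i s_i \geq \Omega(k/n)$. Plugging in $k = \Theta(n^2)$ produces a lower bound of $\Omega(n)$ against an upper bound of $O(\alpha n \sqrt{\log n})$, incompatible once $\alpha$ lies below a $1/\polylog(n)$ threshold; this yields the desired contradiction and proves the theorem.

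The central obstacle is the construction and analysis of the IFPC achieving length $k = O(n^2)$ with the fingerprinting lemma's quantitative guarantee: previous interactive constructions such as Fiat--Tassa are too lossy to match the $\tilde{\Omega}(n^2)$ upper bound of Dwork et al., and the paper's new Fourier-analytic approach is what tightens the trade-off. Once the IFPC is in hand, the reduction is mostly mechanical; the delicate task is the parameter matching, so that the oracle's accuracy-driven upper bound on $s_i$ and the fingerprinting lemma's lower bound land in quantitative opposition for the claimed regime of $n$, $d$, $k$, and $\alpha$.
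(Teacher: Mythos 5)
Your proposal omits the central device that makes the paper's information-theoretic argument go through: the use of encryption (here, short one-time pads of length $O(n^2)$) to hide the fingerprinting codewords of users \emph{outside} the sample from the oracle. In the paper's attack, the population consists of $N = 2000n$ pairs $(i, sk_i)$, the IFPC issues column vectors $c^j \in \pmo^N$, and each coordinate is delivered encrypted under $sk_i$. The oracle, holding only the $n$ sampled keys, can decrypt only the coalition's entries $c^j_{S}$; to answer accurately with respect to $\cD$ it must nevertheless produce something close to the full population mean $\frac{1}{N}\sum_{i \notin T^j} c^j_i$. That gap between what the oracle can see and what it must reproduce is precisely the consistency constraint the fingerprinting code needs, and the dimension requirement $d = O(n^2)$ exists solely to accommodate keys of length $O(n^2)$ that are perfectly secure for the $O(n^2)$ encryptions used. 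Your reduction instead presents the $n$ codewords themselves as the samples and the query as a bare coordinate projection, so the oracle sees everything the codebook contains, and there is no information asymmetry left to exploit.

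This collapses the intended contradiction. Your upper bound $\max_i |s_i| = O(\alpha\sqrt{k\log n})$ via Bernstein requires that $a^j - p_{c^j}$ behave as though independent of $w_{ij} - p_{c^j}$, but the oracle's answer can (and any sensible oracle will) depend on $w_{ij}$. Already the empirical-mean oracle $a^j = \frac{1}{n}\sum_{i'} w_{i'j}$ gives each term $(w_{ij}-p_{c^j})(a^j-p_{c^j})$ an expectation of order $1/n$, so $\mathbb{E}[s_i] = \Theta(k/n)$; for $k=\Theta(n^2)$ this is $\Theta(n)$, which coincides with, rather than contradicts, your claimed lower bound from the fingerprinting lemma. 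Even if the Bernstein bound were valid, the stated contradiction kicks in only for $\alpha = o(1/\sqrt{\log n})$, whereas the paper proves hardness at the constant accuracy level $\alpha = 0.99$. Finally, your queries never exclude accused users, so the accusation/removal dynamic that powers the interactive code's completeness argument never comes into play. To fix the proof you would need to reintroduce the population-versus-sample separation and the hiding mechanism; the $d=O(n^2)$ hypothesis is there exactly to pay for that hiding against an unbounded oracle.
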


Our result builds on the techniques of~\cite{HardtU14}, who use \emph{fingerprinting code}~\cite{BonehS98, Tardos03} to prove their hardness result.  In this work, we identify a variant called an \emph{interactive fingerprinting code}~\cite{FiatT01}, which abstracts the technique in~\cite{HardtU14} and gives a more direct way of proving hardness results for adaptive data analysis.  A slightly weaker version of our results can be obtained using the nice recent construction of interactive fingerprinting codes due to Laarhoven et al.~\cite{LaarhovenDRSdW13} as a black box.  However, we give a new analysis of (a close variant of) their code, which is simpler and achieves stronger parameters.
\jnote{We should add a forward reference to some text where we very clearly summarize the improvements.}

Thus, we can summarize the contributions of this work as follows.
\begin{enumerate}
\item We identify \emph{interactive fingerprinting codes} as the key combinatorial object underlying the hardness of preventing false discovery in adaptive environments, analogous to the way in which (non interactive) fingerprinting codes are the key combinatorial object underlying the hardness of differential privacy.
\item We use this connection to prove nearly optimal hardness results for preventing false discovery in interactive data analysis.
\item We give a new Fourier-analytic method for analyzing both interactive and non-interactive fingerprinting codes that we believe is more intuitive, more flexible, and also leads to even stronger hardness results.  In particular, using our analysis we are able to prove that these codes are \emph{optimally robust}\footnote{In this context, optimal robustness means that all of our hardness results apply even when the oracle answers only a $1/2 + \Omega(1)$ fraction of the queries accurately.}~\cite{BunUV14}, which can be used to strengthen the hardness results in~\cite{Ullman13, BunUV14, SteinkeU15}.  Given the importance of fingerprinting codes to adaptive data analysis and privacy, we believe this new analysis will find further applications.
\end{enumerate}

\subsection{Techniques}

The structure of our proof is rather simple, and closely follows the framework in~\cite{HardtU14}. We will design a challenge distribution~${\cal D}$ and a computationally efficient adaptive analyst~${\cal A}$ who knows $\cal D$. If any computationally efficient oracle~${\cal O}$ is given $n$ samples $S=\{x_1,\dots, x_n\}$ drawn from ${\cal D}$, then our analyst~${\cal A}$ can use the answers of $\cal O$ to reconstruct the set $S$.  Using this information, the adversary can construct a query on which $S$ is not representative of $\cal D$.

Our adversary $\cal A$ and the distribution $\cal D$, like that of \cite{HardtU14}, is built from a combinatorial object with a computational ``wrapper.'' The computational wrapper uses queries that cryptographically ``hide'' information from the oracle $\cal O$.  In our work he combinatorial object will be an \emph{interactive fingerprinting code (IFPC)}.  An IFPC is a generalization of a \emph{(standard) fingerprinting code}, which was originally introduced by Boneh and Shaw~\cite{BonehS98} as a way to watermark digital content.

An interactive fingerprinting code $\ifpc$ is an efficient interactive algorithm that defeats any adversary $\fpcadv$ in the following game (with high probability). The adversary $\fpcadv$ picks $S \subset [\pop]$ unknown to $\ifpc$. The goal of $\ifpc$ is to identify $S$ by making $\length$ interactive queries to $\fpcadv$. $\ifpc$ specifies each query by a vector $\FPCcol \in \pmo^\pop$. In response, the adversary $\fpcadv$ must simply output $\answer \in \pmo$ such that $\answer = \FPCcol_i$ for some $i \in [\pop]$. However, the adversary $\fpcadv$ is restricted to only see $\FPCcol_i$ for $i \in S$. At any time, $\ifpc$ may \emph{accuse} some $i \in [\pop]$. If $i \in S$ is accused, then $i$ is removed from $S$ (i.e. $S \leftarrow S \backslash \{i\}$), thereby further restricting $\fpcadv$. If $i \notin S$ is accused, then this is referred to as a \emph{false accusation}. To win, the interactive fingerprinting code $\ifpc$ must accuse all of $S$, without making ``too many'' false accusations.  

In contrast~\cite{HardtU14} use only standard fingerprinting codes.  The difference between interactive and non interactives fingerprinting codes is that a non interactive fingerprinting code must give all $\length$ queries to $\fpcadv$ at once, but is (necessarily) only required to identify one $i \in S$.  The suboptimal parameters achieved by~\cite{HardtU14}, as well as some of the additional technical work, are there result of having to boost non interactive fingerprinting codes to recover all of $S$.  Using this new perspective of interactive fingerprinting codes, the technique of~\cite{HardtU14} can be seen as a construction of an interactive fingerprinting code with length $\length = \tilde{O}(\pop^3)$ by concatenating $\pop$ independent copies of Tardos'~\cite{Tardos03} non interactive fingerprinting code of length $\length = \tilde{O}(\pop^2)$.

However, one can construct more clever and shorter interactive fingerprinting codes.  Specifically, Laarhoven et al.~\cite{LaarhovenDRSdW13} (building on Tardos~\cite{Tardos03}) give a construction that would be suitable for our application with $\length = \tilde{O}(\pop^2)$.  Extending their results, we give a new analysis of their interactive fingerprinting code as well as Tardos' non interactive fingerprinting code that allows us to achieve length $\length = O(\pop^2)$ while still being sufficiently secure for our application.
\begin{theorem}[Informal]
For every $\pop$, there exists an interactive fingerprinting code with $\length = O(\pop^2)$ that, except with negligible probability, makes at most $\pop/1000$ false accusations.
\end{theorem}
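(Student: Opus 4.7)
The plan is to analyze a Tardos-style interactive fingerprinting code via a Fourier-analytic argument on biased product measures. In round $j \in [\length]$, sample a bias $p_j$ from a (possibly mildly truncated) arcsine density $f(p) \propto 1/\sqrt{p(1-p)}$; independently draw a column $\FPCcol_j \in \pmo^\pop$ with $\Pr[(\FPCcol_j)_i = +1] = p_j$; reveal the restriction $((\FPCcol_j)_i)_{i \in S_j}$ to the pirate $\fpcadv$; and upon receiving $\answer_j \in \pmo$ satisfying marking ($\answer_j \in \{(\FPCcol_j)_i : i \in S_j\}$), update every user's score by
\[
 s_i \;\gets\; s_i + \answer_j \cdot \chi_{p_j}((\FPCcol_j)_i), \qquad \chi_p(b) := \frac{b - (2p-1)}{\sqrt{4p(1-p)}}.
\]
The function $\chi_p$ is exactly the normalized degree-one Fourier character on $\pmo$ under the biased measure $\mu_p$. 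Accuse $i$ the moment $s_i$ crosses a threshold $\tau := C\sqrt{\length}$, removing $i$ from $S_j$ if present. I will take $\length = K\pop^2$ for absolute constants $C,K$ to be tuned.

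The first step is soundness. For any $i \notin S$, the bits $(\FPCcol_j)_i$ are independent of the entire transcript available to $\fpcadv$, so $s_i$ is a mean-zero bounded-difference martingale, and Doob's maximal inequality together with Azuma/Hoeffding gives $\Pr[\max_t s_i^{(t)} > \tau] \leq e^{-\Omega(C^2)}$. Crucially, conditional on the full transcript (biases, the pirate's answers, and the columns restricted to $S$), the accusation indicators of the users in $[\pop]\setminus S$ are mutually independent, since their randomness lies in disjoint coordinates of the unrevealed columns. A Chernoff bound on this conditionally independent sum concentrates the number of false accusations around $\pop \cdot e^{-\Omega(C^2)}$, which for $C$ a sufficiently large constant is at most $\pop/2000$; the upper tail exceeds $\pop/1000$ only with probability $e^{-\Omega(\pop)}$, which is negligible. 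Using a conditional Chernoff rather than a naive union bound is precisely what removes a $\log \pop$ factor from $\tau$ and permits $\length = O(\pop^2)$ instead of $O(\pop^2 \log \pop)$.

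The second step is completeness: every $i \in S$ must cross $\tau$ within $\length$ rounds. The plan is to control the joint moment-generating potential $\Phi_t(\lambda) := \mathbb{E}\bigl[\prod_{i \in S_t} e^{-\lambda s_i^{(t)}}\bigr]$ for a carefully chosen $\lambda = \Theta(1/\sqrt{\length})$, and to prove the round-wise multiplicative contraction $\Phi_t(\lambda) \leq \Phi_{t-1}(\lambda) \cdot e^{-\Omega(\lambda)}$ uniformly over $\fpcadv$'s strategy whenever $S_t \neq \emptyset$. The Fourier identity behind this contraction: expanding $\exp(-\lambda \answer_j \chi_{p_j}((\FPCcol_j)_i))$ to second order and integrating against $\mu_{p_j}^{S_j}$ coordinate-wise, the first-order piece becomes the degree-one Fourier coefficient of the pirate's output function $\answer_j$ at coordinate $i$ under the biased measure, while the second-order piece is a bounded variance. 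Averaging the resulting per-round inequality against the arcsine density on $p_j$ causes the arcsine weight to precisely cancel the pirate's worst-case level-one choices against any $\pmo$-valued function subject to marking. Iterating over $\length$ rounds and applying Markov to $\Phi_\length(\lambda)$, and using that on $\{S_\length \neq \emptyset\}$ one has $\Phi_\length \geq e^{-\lambda \tau \pop}$, yields $\Pr[S_\length \neq \emptyset] \leq e^{-\Omega((\sqrt{K}-O(C))\pop)}$, which is negligible once $K$ is chosen a sufficiently large constant multiple of $C^2$.

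The main obstacle is the Fourier contraction lemma itself. A naive first-moment analysis fails: the per-round expected coalition drift $\mathbb{E}\bigl[\sum_{i \in S_j} \answer_j \chi_{p_j}((\FPCcol_j)_i)\bigr]$ can be driven to $-\Omega(\sqrt{|S_j|})$ by a pirate that plays the minority-vote strategy on mixed columns, so no linear-potential drift argument can work. One is therefore forced onto the exponential potential $\Phi_t(\lambda)$, where what has to be controlled tightly is the interaction between the marking constraint (which pins down the level-one Fourier mass of $\answer_j$) and the arcsine weighting on $p_j$ (which produces the sign cancellation). Proving this cancellation at the right quantitative scale---strong enough to give $\length = O(\pop^2)$ with a negligible failure probability in $\pop$, and flexible enough to hold against a fully adaptive pirate on a shrinking coalition $S_t$---is the place where all of the technical work concentrates, and is precisely where the new Fourier-analytic viewpoint improves on the original combinatorial Tardos analysis.
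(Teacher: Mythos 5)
Your construction (arcsine-biased columns, normalized degree-one biased-character scoring, thresholded accusation) and your soundness argument (a maximal-inequality tail bound for a single innocent user's score, followed by a Chernoff bound over the conditionally independent innocent users to control the total number of false accusations) match the paper's; the paper uses Etemadi's inequality where you invoke Doob's maximal inequality, but that difference is immaterial.

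The divergence is in completeness, and it rests on an incorrect premise. You assert that the per-round expected coalition drift $\mathbb{E}\bigl[\sum_{i\in S_j}\answer^j\,\chi_{p_j}((\FPCcol^j)_i)\bigr]$ can be driven to $-\Omega(\sqrt{|S_j|})$ by a minority-vote pirate, and hence that ``no linear-potential drift argument can work.'' This is false once consistency is imposed. Averaging over the arcsine-distributed $p_j$, the coalition drift of \emph{every} consistent pirate is bounded below by a positive constant: in the paper's notation, Lemma~\ref{lem:ExpectationDerivative} and Proposition~\ref{prop:ExpectationDifference} show the arcsine-weighted expectation telescopes to $(g(1-\accuracy)-g(\accuracy))/\pi$ with $g(p)=\mathbb{E}_{\FPCcol\sim p}[f(\FPCcol)]$, and consistency forces $g(1-\accuracy)\approx 1$, $g(\accuracy)\approx -1$, giving $\Omega(1)$. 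The minority strategy cannot evade this: the arcsine density concentrates mass near $p\in\{0,1\}$, where the restricted column is constant with high probability and consistency forces the matching answer, whose positive contribution dominates whatever negative drift the pirate collects at moderate $p$. The paper's completeness proof is precisely a linear-potential argument built on this $\Omega(1)$ per-round drift: Theorem~\ref{thm:ScoreBound} shows the aggregate coalition score $\sum_{i\in S^1}s_i^{\length}$ is $\Omega(\length)$ with high probability (via an Azuma-type bound for the unbounded increments), while Lemma~\ref{lem:TotalScoreBound} shows it is also $O(\collusion\sigma)=O(\length)$ with high probability because an accused user's score cannot run far past the threshold before they stop being shown columns; the constants are set so these bounds conflict unless the pirate is inconsistent in more than a $\rob$ fraction of rounds.

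Your exponential potential $\Phi_t(\lambda)=\mathbb{E}\bigl[\prod_{i\in S_t}e^{-\lambda s_i^{(t)}}\bigr]$ also has two unaddressed defects. First, the empty product is $1$, so $\Phi_{\length}\geq\Pr[S_{\length}=\emptyset]$; since you want to conclude $\Pr[S_{\length}=\emptyset]$ is close to $1$, the potential cannot become small and the Markov step is vacuous. One would need to insert an indicator $\mathbb{I}(S_t\neq\emptyset)$ into the definition. Second, when a user $i$ is removed in passing from $S_{t-1}$ to $S_t$, the factor $e^{-\lambda s_i^{(t)}}\leq e^{-\lambda\tau}$ disappears from the product, \emph{increasing} it by at least $e^{\lambda\tau}=e^{\Theta(C)}$; each of the up-to-$\collusion$ removals anti-contracts, so the asserted uniform per-round contraction $\Phi_t\leq\Phi_{t-1}e^{-\Omega(\lambda)}$ fails at the rounds where removal occurs, and this loss must be budgeted against the contraction in the remaining rounds. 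Your route might be reparable, but the motivation for abandoning the first-moment approach is wrong, and the paper's two-sided bound on the linear score sum both avoids these complications and is what the paper actually does.
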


This result suffices for the informal statements made above, but our construction is somewhat more general and has additional parameters and security properties, which we detail in Section~\ref{sec:IFPC}.  

\subsection{Applications to Data Privacy}

The adversary used to show hardness of preventing false discovery is effectively carrying out a \emph{reconstruction attack} against the database of samples.  Roughly, if there is an adversary who can reconstruct the set of samples $S$ from the oracle's answers, then the oracle is said to be ``blatantly non-private''---it reveals essentially all of the data it holds, and so cannot guarantee any reasonable notion of privacy to the owners of the data.  Since the seminal work of Dinur and Nissim~\cite{DinurN03}, such reconstruction attacks have been used to establish strong limitations on the accuracy of privacy-preserving oracles.  

Using interactive fingerprinting codes, combined with the framework of~\cite{HardtU14}, we obtain the following results. In both cases, \cite{HardtU14} show similar results, in which our $O(n^2)$ bounds are replaced with $\tilde{O}(n^3)$.

\begin{theorem}[Informal] \label{thm:main3}
Assuming the existence of one-way functions, every computationally \linebreak efficient oracle that, given $n$ samples, is accurate on $O(n^2)$ adaptively chosen queries is blatantly non private.
\end{theorem}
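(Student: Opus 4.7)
The plan is to observe that the adversary $\cA$ used to prove Theorem~\ref{thm:main1} is inherently a \emph{reconstruction} attack: it spends its $\length = O(n^2)$ queries identifying almost all elements of the hidden sample $S = \{x_1,\dots,x_n\}$, and only as a final step uses the recovered identities to exhibit a query on which the oracle must be inaccurate. For Theorem~\ref{thm:main3} I would simply expose the intermediate reconstruction as the attack's output and check that, whenever the oracle is assumed accurate throughout, this reconstruction succeeds with high probability, which is by definition a blatant non-privacy attack.

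Concretely, I would instantiate the interactive fingerprinting code from the theorem above with population $\pop = n$ and length $\length = O(n^2)$. Its guarantee is that, against any strategy returning a ``marked'' bit in $\pmo$ on every query, the tracer accuses every element of $S$ while making at most $n/1000$ false accusations. I would then wrap this code in the cryptographic shell developed in \cite{HardtU14}: using a one-way function, obtain a PRF family and a semantically secure symmetric encryption scheme with secret key $\sk$, known only to the analyst. Let the challenge distribution $\dist$ be uniform on $\bits^d$ with $d \gg \log n$, so that the $n$ samples are distinct with overwhelming probability, and let each $x_i$ play the role of identity $i$ in the IFPC game, with the string-to-identity map cryptographic and hidden from the oracle. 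At round $t$, the analyst translates the IFPC's next query vector $\FPCcol^{t} \in \pmo^{\pop}$ into a statistical query $q^t$ whose predicate on input $x$ decrypts (using $\sk$ embedded inside the query description) the identity associated with $x$ and returns a bit whose empirical average on $S$ equals $\frac{1}{n}\sum_{i\in S}(1+\FPCcol^{t}_{i})/2$.

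A standard hybrid argument against the encryption scheme shows that no efficient oracle can distinguish these queries from queries that encode the all-zero vector beyond negligible advantage. Consequently, an accurate answer $a^t$ is, up to negligible slack, within $\accuracy$ of the true distributional mean and hence within $\accuracy + O(1/\sqrt{n})$ of the honest empirical mean $\frac{1}{n}\sum_{i\in S}(1+\FPCcol^{t}_{i})/2$; rounding $a^t$ to $\pm 1$ therefore yields a marked bit, i.e.\ a valid IFPC response. After all $\length$ queries, the IFPC tracer has accused a set $T$ containing $S$ plus at most $n/1000$ spurious extras, so $\cA$ recovers at least $0.999\, n$ of the hidden samples, which is the definition of blatant non-privacy.

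The main obstacle is formalizing the hybrid step above: one must show that an efficient, adaptive, real-valued oracle is no more powerful against the IFPC than an information-theoretic adversary constrained to output cryptographic ``marks.'' This is exactly where the \emph{optimal robustness} of the new IFPC analysis highlighted in the excerpt becomes essential: since $\accuracy$ need only be bounded away from $1/2$, rounding $a^t$ to $\pm 1$ still produces a marked response with high probability, and any small number of inaccurate or boundary-case rounds are absorbed into the robustness slack without shifting the tracer's final accusation set by more than $o(n)$ elements. A smaller technical point is to check that the samples being distinct (guaranteed by $d \gg \log n$) lets us identify the reconstructed identity set $T$ with a subset of the actual sample strings, which is what blatant non-privacy formally requires.
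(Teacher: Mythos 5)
The high-level strategy is right: re-run the IFPC-plus-encryption attack and output the accused set $T$ as the reconstruction. But the proposal does not engage with the main technical twist that distinguishes Theorem~\ref{thm:main3} from Theorem~\ref{thm:main1}, namely that blatant non-privacy is defined against \emph{sample accuracy}, not distributional accuracy. In the $\bnpgame$, the oracle is only guaranteed that $|\oracle(x,q^j) - q^j(x)| \le \alpha$ where $q^j(x) = \frac1n\sum_i q^j(x_i)$ is the empirical mean on the hidden sample. Because each query $q^j$ is designed to return $0$ on the already-accused indices $T^{j-1}$, this empirical mean equals $\frac{|x\setminus T^{j-1}|}{n} \cdot \mathbb{E}_{i\in x\setminus T^{j-1}}[\FPCcol^j_i]$, which shrinks toward $0$ as the accusation set grows. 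Rounding a sample-accurate answer $a^j$ directly to $\pm 1$ therefore stops yielding a marked bit once a constant fraction of $x$ has been accused: in the all-ones column the true sample mean may be, say, $0.1$, and an $\alpha$-accurate answer near $0.1$ rounds to $+1$ only by luck. Your appeal to distributional accuracy plus $O(1/\sqrt n)$ concentration does not fix this, because the distributional mean over the full set $y$ suffers the same systematic shrinkage (and, more to the point, distributional accuracy is not the hypothesis here).

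The paper handles this by rescaling: the rounded IFPC response is $\mathrm{sign}\bigl((\sample/(\sample-|T^{j-1}|))\,a^j\bigr)$, which undoes the shrinkage up to a small multiplicative error controlled by $|T^{j-1}\setminus x|/(\sample-|T^{j-1}|)$. This in turn forces two further ingredients that your sketch omits: the false-accusation parameter has to be taken smaller (the paper uses $\fpcfalse = 1/20000$, not $1/1000$) so the error term stays below a constant like $1/20$, and the attack must halt early once $|T^j| > 499\sample/500$ because the rescaling factor blows up as the uncrossed-out portion of the sample vanishes. The final argument is then a two-case analysis on the stopping round $L$: if $L < \length$ the reconstruction has already captured essentially all of $x$ directly, and if $L = \length$ the IFPC completeness theorem gives a contradiction as in Theorem~\ref{thm:main1}. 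Without the rescaling and the early-stop you cannot run the IFPC for the full $\length$ rounds, so the ``absorb into robustness'' step you gesture at does not go through: the inconsistent rounds would not be sporadic noise but a systematic $1-o(1)$ fraction once most of $x$ is accused.

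Two smaller discrepancies: the IFPC population should be $\pop = 2\sample$ (the size of the known superset $y$, with collusion bound $\sample$), not $\pop = \sample$; taking $\pop = \sample$ would make the pirate's hidden set equal the whole population, which trivializes the IFPC game. And the cryptographic shell needs only a CPA-secure symmetric encryption scheme, not a PRF; the identities $i$ in $y_i=(i,\sk_i)$ are in the clear, and only the per-round column bits $\FPCcol^j_i$ are encrypted.
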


Theorem~\ref{thm:main3} should be compared with the result in~\cite{Ullman13}, which showed that any computationally efficient oracle that, given $n$ samples, is accurate for $\tilde{O}(n^2)$ \emph{non-adaptively chosen} queries cannot satisfy the strong guarantee of ``differential privacy'' \cite{DinurN03,DworkMNS06}.  Theorem \ref{thm:main3} shows that, in the adaptive setting, we can obtain a stronger privacy violation using fewer queries than \cite{Ullman13}.

\begin{theorem}[Informal] \label{thm:main4}
Every (possibly computationally unbounded) oracle that, given $n$ samples in dimension $d = O(n^2)$, is accurate on $O(n^2)$ adaptively chosen queries is blatantly non private.
\end{theorem}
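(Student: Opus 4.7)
The plan is to remove the cryptographic wrapper from the proof of Theorem~\ref{thm:main3} and apply the IFPC directly to the oracle's samples. I would instantiate the IFPC of Section~\ref{sec:IFPC} with population $\pop = \Theta(\sample)$, producing $\length = O(\sample^2)$ columns and making at most $\pop/1000$ false accusations. Take the universe to be $\cX = \pmo^d$ with $d = \length = O(\sample^2)$, and let the challenge distribution $\cD$ be uniform over $\pop$ codewords $\FPCmat_1,\dots,\FPCmat_{\pop} \in \pmo^d$ drawn jointly according to the IFPC's codebook distribution. The oracle's $\sample$ i.i.d.\ samples then correspond, with overwhelming probability, to a random subset $S \subseteq [\pop]$ of size~$\sample$.

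The adversary $\cA$ simulates the IFPC in its head as the tracer. At round~$t$, the IFPC produces a column $\FPCcol^t \in \pmo^{\pop}$, which $\cA$ translates into the statistical query with predicate $p^t(x)=(x_t+1)/2$, so that $q^t(\cD) = \frac{1}{\pop}\sum_i (\FPCcol^t_i+1)/2$. The oracle answers with some $a^t$ within $\accuracy$ of $q^t(\cD)$, and $\cA$ rounds $\tilde a^t = \mathrm{sign}(2a^t-1) \in \pmo$ before feeding it back. The key technical step is to verify the IFPC marking condition $\tilde a^t \in \{\FPCcol^t_i : i \in S\}$: if $\FPCcol^t$ is not too biased, then both signs appear in its restriction to~$S$ except with probability $2^{-\Omega(\sample)}$, whereas if $\FPCcol^t$ is heavily biased then the sign of $q^t(\cD)-1/2$ must equal the majority sign, which is automatically in $\{\FPCcol^t_i : i \in S\}$. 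A union bound over the $\length = O(\sample^2)$ rounds makes the marking condition hold throughout, and the IFPC guarantee then says that $\cA$ accuses every $i \in S$ with at most $\pop/1000 = O(\sample/1000)$ false accusations. Outputting the set of accused codewords reconstructs the oracle's samples up to $o(\sample)$ error, certifying blatant non-privacy.

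The main obstacle is that the IFPC adaptively sets the distribution of its $t$-th column based on history, whereas the codebook $\FPCmat$ must be fixed before the oracle draws its samples. I plan to handle this by exploiting the Fourier-analytic robustness of our IFPC, which accommodates columns drawn from a non-adaptive dominant distribution (e.g., a Tardos-style arcsine) with the adaptive choices absorbed into the analysis; equivalently, the analyst treats $\FPCmat$ as given and runs the IFPC's adaptive accusation schedule on the resulting fixed sequence of columns. The dimension $d = O(\sample^2)$ is essential precisely because each sample must carry an entire codeword of length $\length = O(\sample^2)$ in the clear, which is exactly the role that the cryptographic wrapper plays in the computational version of Theorem~\ref{thm:main3}.
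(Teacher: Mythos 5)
Your proposal does not match the paper's approach, and more importantly it has a genuine gap in the reduction to the IFPC.

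The paper's proof of Theorem~\ref{thm:main4} keeps the entire attack of Theorem~\ref{thm:main3} intact---including the cryptographic wrapper---and only swaps the computational encryption scheme for a statistically secure one (essentially a one-time pad good for $O(\sample^2)$ messages, which requires $O(\sample^2)$-bit keys and hence $\dimension = O(\sample^2)$). You correctly identify that the dimension budget is there to hide $O(\sample^2)$ bits of codeword from the oracle, and your idea of storing the codeword directly in the sample in the clear is a reasonable substitute for that particular function of the encryption. The part you drop that you should not have dropped is the \emph{masking} of accused users in the query: in the paper's attack, $\query^{j}(i',\sk')$ is defined to be $0$ whenever $i' \in T^{j}$. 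This is not a cryptographic mechanism; it is built into the predicate, and it is what makes the correct answer to $\query^{j}$ depend only on $\FPCcol^{j}$ restricted to non-accused indices. That in turn is what lets the oracle be modeled as an IFPC adversary whose view of $\FPCcol^{j}$ shrinks to $\FPCcol^{j}_{S^{j}}$ as accusations accumulate. Your query $\query^{t}(x) = (x_t + 1)/2$ never removes an accused index from the picture: the oracle holds its samples forever, and the correct answer $\query^{t}(\cD)$ is a fixed quantity determined once and for all by the codebook.

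This breaks the reduction. The IFPC completeness guarantee (Theorem~\ref{thm:Completeness}, via Lemmas~\ref{lem:HiddenBound} and~\ref{lem:TotalScoreBound}) is proved \emph{only} against adversaries that lose access to $\FPCcol^{j}_{i}$ once $i$ is accused; that independence is exactly what caps the total score $\sum_{i \in S^1} s_i^{\length}$ from above and forces a contradiction with the lower bound. An oracle in your scheme is outside that adversary class, so the guarantee that all of $S$ gets accused simply does not apply. Concretely, here is a strategy that defeats your attack: the oracle fixes a random $S' \subseteq S$ of size $\sample/2$ at the outset and answers every query with the empirical mean over $S'$ alone. Because your columns are non-adaptive (the codebook $\FPCmat$ is fixed in advance and does not depend on the oracle's answers), a union bound over $\length = O(\sample^2)$ queries shows this is $\alpha$-accurate with overwhelming probability for any constant $\alpha$, and the rounded answers are consistent by exactly the argument you sketch. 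But the answers are statistically independent of $\FPCcol^{j}_{i}$ for $i \in S \setminus S'$, so those users' scores never cross the threshold and they are never accused; your adversary reconstructs at most $|S'| + O(\sample/1000) \approx \sample/2$ of the sample, missing roughly half of $x$ and hence failing to certify blatant non-privacy. (Two smaller points: you analyze distributional accuracy over $\cD$ rather than sample-accuracy with respect to a fixed $2\sample$-element set $y$, which is what $\bnpgame$ actually uses; and the ``adaptivity obstacle'' you flag does not exist---the paper's IFPC in Figure~\ref{fig:IFPCconstruction} already draws its columns from a fixed product-of-products distribution that is independent of the answers, so no fix is needed there.)
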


Theorem~\ref{thm:main4} should be compared with the result in~\cite{BunUV14} that showed any (possibly computationally unbounded) oracle that answers a \emph{fixed} family of $\tilde{O}(n^2)$ simple queries in dimension $d = \tilde{O}(n^2)$ cannot satisfy differential privacy.  

In contrast with Theorems \ref{thm:main3} and \ref{thm:main4}, the well-known result of \cite{DworkMNS06} shows that there is an efficient differentially private algorithm that answers $\tilde{\Omega}(n^2)$ adaptively chosen queries. Our results show that, in the adaptive setting, there is a sharp threshold for the number of queries where, below this threshold, the strong notation of differential privacy can be achieved and, above this threshold, even minimal notions of privacy are unachievable.


\subsection{Additional Related Work}
Our work and~\cite{HardtU14} is part of a line of work connecting technology for secure watermarking to lower bounds for private and interactive data analysis tasks.  This connection first appeared in the work of Dwork, Naor, Reingold, Rothblum, and Vadhan~\cite{DworkNRRV09}, who showed that the existence of \emph{traitor-tracing schemes} implies hardness of differential privacy.  Traitor-tracing schemes were introduced by Chor, Fiat, and Naor~\cite{ChorFN94}, also for the problem of watermarking digital content.  The connection between traitor-tracing and differential privacy was strengthened in~\cite{Ullman13}, which introduced the use of fingerprinting codes in the context of differential privacy, and used them to show optimal hardness results for certain settings.  \cite{BunUV14} showed that fingerprinting codes can be used to prove nearly-optimal information-theoretic lower bounds for differential privacy, which established fingerprinting codes as the key information-theoretic object underlying lower bounds in differential privacy.  

Since there introduction by Boneh and Shaw~\cite{BonehS98} there has been extensive work on fingerprinting codes, most of which is beyond the scope of this discussion.  For the standard, non-interactive definition of fingerprinting codes,~\cite{Tardos03} gave an essentially optimal construction, which has been very influential in most of the subsequent work on the topic.  The interactive model of fingerprinting codes was first studied by~\cite{FiatT01} under the name ``dynamic traitor-tracing schemes.''  Formally their results are in a significantly different model and cannot be used to prove hardness of preventing false discovery.  \cite{Tassa05} gave the first construction in the model we use, but achieved suboptimal code length.  Recently Laarhoven, Doumen, Roelse, \v{S}kori\'{c}, and de Wegner~\cite{LaarhovenDRSdW13}, gave a construction with nearly optimal length by generalizing Tardos' code to the interactive setting.  Their construction is quite similar to ours, but our analysis is substantively different and leads to sharper and more general guarantees (and we feel is more intuitive).

In an exciting recent paper,~\cite{DworkFHPRR14} gave the first algorithms for answering arbitrary adaptively chosen statistical queries.  Their algorithms rely on known algorithms for answering statistical queries under differential privacy in a black box manner.  Recently,~\cite{Ullman14} showed how to design differentially private mechanisms for answering exponentially many adaptively chosen queries from the richer class of \emph{convex empirical risk minimization queries}.  By the results of~\cite{DworkFHPRR14}, this algorithm is also a (computationally inefficient) oracle that is accurate for exponentially many adaptively chosen convex empirical risk minimization queries.

\subsection{Organization}

In Section~\ref{sec:IFPC} we define and construct interactive fingerprinting codes, the main technical ingredient we use to establish our results.  In Sections~\ref{sec:falsedisc} and~\ref{sec:nonprivacy} we show how interactive fingerprinting codes can be used to obtain hardness results for preventing false discovery and blatant non privacy, respectively.  The definition of interactive fingerprinting codes is contained in Section~\ref{sec:IFPCdef} and is necessary for Sections~\ref{sec:falsedisc} and~\ref{sec:nonprivacy}, but the remainder of Section~\ref{sec:IFPC} and Sections~\ref{sec:falsedisc} and~\ref{sec:nonprivacy} can be read in either order.

\section{Interactive Fingerprinting Codes}\label{sec:IFPC}

In order to motivate the definition of interactive fingerprinting codes, it will be helpful to review the motivation for standard, non interactive fingerprinting codes. 

Fingerprinting codes were introduced by Boneh and Shaw~\cite{BonehS98} for the problem of watermarking digital content (such as a movie or a piece of software).  Consider a company that distributes some content to $N$ users. Some of the users may illegally distribute copies of the content. To combat this, the company gives each user a unique version of the content by adding distinctive ``watermarks'' to it. Thus, if the company finds an illegal copy, it can be traced back to the user who originally purchased it. Unfortunately, users may be able to remove the watermarks. In particular, a coalition of users may combine their copies in a way that mixes or obfuscates the watermarks. A fingerprinting code ensures that, even if up to $n$ users collude to combine their codewords, an illegal copy can be still be traced to at least one of the users.

Formally, every user $i \in [\pop]$ is given a codeword $(\FPCcol^1_i, \FPCcol^2_i, \dots, \FPCcol^\length_i)  \in \pmo^\length$ by the fingerprinting code, which represents the combination of watermarks in that user's copy. A subset $S \subset [\pop]$ of at most $\collusion$ users can arbitrarily combine their codewords to create a ``pirate codeword'' $\answer = (\answer^1, \answer^2, \dots, \answer^\length) \in \pmo^\length$. The only constraint is so-called \emph{consistency}---for every $\rindex \in [\length]$, if, for every colluding user $i \in S$, we have $\FPCcol^\rindex_i = b$, then $\answer^\rindex=b$. That is to say, if each of the colluding users receives the same watermark, then their combined codeword must also have that watermark. Given $\answer$, the fingerprinting code must be able to trace at least one user $i \in S$.
Tardos \cite{Tardos03} constructed optimal fingerprinting codes with $\ell = O(n^2 \log N)$.

A key drawback of fingerprinting codes is that we can only guarantee that a single user $i \in S$ is traced. This is inherent, as setting the pirate codeword $\answer$ to be the codeword of a single user prevents any other user from being identified. We will see that this can be circumvented by moving to an interactive setting.

Suppose the company is instead distributing a \emph{stream} of content (such as a TV series) to $N$ users---that is, the content is not distributed all at once and the illegal copies are obtained whilst the content is being distributed (e.g. the episodes of the TV series appear on the internet before the next episode is shown). Again, the content is watermarked so that each user receives a unique stream and a subset $S \subset [\pop]$ of at most $\collusion$ users combine their streams and distribute an illegal stream. The company obtains the illegal stream and uses this to trace the colluding users $S$. As soon as the company can identify a colluding user $i \in S$, that user's stream is terminated (e.g. their subscription is cancelled). This process continues until every $i \in S$ has been traced and the distribution of illegal copies ceases.

Another twist on fingerprinting codes is robustness \cite{BunUV14}.  Suppose that the consistency constraint only holds for $(1-\rob) \length$ choices of $\rindex \in [\length]$. That is to say, the colluding users can somehow remove a $\rob$ fraction of the watermarks.  \cite{BunUV14} showed how to modify the Tardos fingerprinting code to be robust to a small constant fraction of inconsistencies. In this work, we show that robustness to any $\rob < 1/2$ fraction of inconsistencies can be achieved.

\subsection{Definition and Existence} \label{sec:IFPCdef}

We are now ready to formally define interactive fingerprinting codes.  To do so we make use of the following game between an adversary $\fpcadv$ and the fingerprinting code $\ifpc$.  Both $\fpcadv$ and $\ifpc$ may be stateful.
\begin{figure}[ht]
\begin{framed}
\begin{algorithmic}
\INDSTATE[0]{$\fpcadv$ selects a subset of users $S^{1} \subseteq [\pop]$ of size $\collusion$, unknown to $\ifpc$.}
\INDSTATE[0]{For $\rindex = 1,\dots,\length$:}
\INDSTATE[1]{$\ifpc$ outputs a column vector $\FPCcol^{\rindex} \in \pmo^{\pop}$.}
\INDSTATE[1]{Let $\FPCcol^{\rindex}_{S^{\rindex}} \in \pmo^{|S^{\rindex}|}$ be the restriction of $\FPCcol^{\rindex}$ to coordinates in $S^{\rindex}$, which is given to $\fpcadv$.}
\INDSTATE[1]{$\fpcadv$ outputs $\answer^{\rindex} \in \pmo$, which is given to $\ifpc$.}
\INDSTATE[1]{$\ifpc$ accuses a (possibly empty) set of users $I^{j} \subseteq [\pop]$.  Let $S^{\rindex+1} = S^{\rindex} \setminus I^{\rindex}$.}
\end{algorithmic}
\end{framed}
\vspace{-6mm}
\caption{$\ifpcgame_{\pop, \collusion, \length}[\fpcadv, \ifpc]$}
\end{figure}
\newcommand{\errors}{\theta}
\newcommand{\falseaccs}{\psi}
For a given execution of $\ifpc$, we let $\FPCmat \in \pmo^{\pop \times \length}$ be the matrix with columns $\FPCcol^{1},\dots,\FPCcol^{\length}$ and let $\answer \in \pmo^{\length}$ be the vector with entries $\answer^{1},\dots,\answer^{\length}$.  We want to construct the fingerprinting code so that, if $\answer$ is consistent, then the tracer succeeds in recovering every user in $S$.  For convenience, we will define the notation $\errors^j$ to be the number of rounds $1,\dots,j$ in which $\answer^j$ is not consistent with $\FPCcol^{j}$.  Formally, for a given execution of $\ifpc$,
$$
\errors^\rindex = \left| \set{1 \leq k \leq \rindex \left|\; \not\exists \; i \in [\pop],\; \answer^{k} = \FPCcol^{k}_{i} \right. } \right|.
$$
Using this notation, $\answer$ is $\rob$-consistent if $\errors^{\length} \leq \rob \length$.  We also define the notation $\falseaccs^\rindex$ to be the number of users in $I^{1},\dots,I^{\rindex}$ who are falsely accused (i.e.~not in the coalition $S^1$).  Formally,
$$
\falseaccs^\rindex = \left| \left(\bigcup_{1 \leq k \leq \rindex} I^{k} \right) \setminus S^1 \right|.
$$
Using this notation, we require $\falseaccs^\length \leq \fpcfalse (\pop-|S^1|)$ - that is, the tracing algorithm does not make too many false accusations. ``Too many'' is defined as more than a $\delta$-fraction of innocent users.

\begin{definition}[Interactive Fingerprinting Codes]
We say that an algorithm $\ifpc$ is an \emph{$\collusion$-collusion-resilient interactive fingerprinting code of length $\length$ for $\pop$ users robust to a $\rob$ fraction of errors with failure probability $\fpcfail$ and false accusation probability $\fpcfalse$} if for every adversary $\fpcadv$, it holds that
$$
\pr{\ifpcgame_{\pop, \collusion, \length}[\fpcadv, \ifpc]}{\left(\errors^{\length} \leq \rob \length\right) \lor \left(\falseaccs^\length > \fpcfalse (\pop-\collusion)\right)} \leq \fpcfail
$$
The length $\length$ may depend on $\pop, \collusion, \rob, \fpcfail, \fpcfalse$.
\end{definition}
The constraint $\falseaccs^\length \leq \fpcfalse \pop$ is called \emph{soundness}---the interactive fingerprinting code should not make (many) false accusations. The constraint $\errors^\length > \rob \length$ is called \emph{completeness}---the interactive fingerprinting code should force the adversary $\fpcadv$ to be inconsistent.  Although it may seem strange that we make no reference to recovering the coalition $S^1$, notice that if $S^{j} \not\eq \emptyset$, then $\pirate$ can easily be consistent.  Therefore, if the pirate cannot be consistent, it must be the case that $S^{j} = \emptyset$ for some $j$, meaning all of $S^1$ has been accused.

In the remainder of this section, we give a construction of interactive fingerprinting codes, and establish the following theorem.

\begin{theorem}[Existence of Interactive Fingerprinting Codes] \label{thm:ifpcthm}
For every $1 \leq \collusion \leq \pop$, $0 \leq \rob  < 1/2$,  and $0 < \fpcfalse \leq 1$, there is a $\collusion$-collusion-resilient interactive fingerprinting code of length $\length$ for $\pop$ users robust to a $\rob$ fraction of errors with failure probability $$\fpcfail \leq \min\{\fpcfalse (\pop-\collusion), 2^{-\Omega(\fpcfalse (\pop-\collusion))}\} + \fpcfalse^{\Omega\left(\irob\collusion\right)}$$ and false accusation probability $\fpcfalse$ for
$$
\length = O\left( \frac{\collusion^2 \log\left(1/\fpcfalse \right)}{\irob^4} \right).
$$
\end{theorem}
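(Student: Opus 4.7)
The plan is to construct an interactive variant of Tardos' fingerprinting code and to analyze it via a $p$-biased Fourier expansion. In each round $\rindex$, the tracer $\ifpc$ samples a bias $p^{\rindex}\in(0,1)$ from a truncation of the arcsine-like density $1/\sqrt{p(1-p)}$, then independently samples $\FPCcol^{\rindex}_i\in\pmo$ for each still-active user with $\Pr[\FPCcol^{\rindex}_i=1]=p^{\rindex}$. After the pirate returns $\answer^{\rindex}$, I would update a running score $Z^{\rindex}_i=Z^{\rindex-1}_i+\phi(\FPCcol^{\rindex}_i,\answer^{\rindex},p^{\rindex})$ for each active $i$, where $\phi(c,a,p)=a(c-(2p-1))/\sqrt{p(1-p)}$ is (up to normalization) the degree-one $p$-biased Fourier character of $c$ correlated with $a$. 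User $i$ is placed in $I^{\rindex}$ the first time $Z^{\rindex}_i$ crosses a threshold $\tau$.

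For soundness, fix any $i\notin S^1$. Since $\FPCcol^{\rindex}_i$ is independent of everything the adversary ever sees (in particular of $\answer^{\rindex}$), the sequence $(Z^{\rindex}_i)_{\rindex}$ is a martingale. Truncating $p^{\rindex}$ away from $\{0,1\}$ by $\Theta(\irob^2)$ caps each increment at $O(1/\irob)$, and Doob's maximal inequality together with Azuma--Hoeffding gives $\Pr[\max_{\rindex}Z^{\rindex}_i\geq\tau]\leq\exp(-\Omega(\tau^2\irob^2/\length))$. Choosing $\tau=\Theta(\sqrt{\length\log(1/\fpcfalse)}/\irob)$ makes this at most $\fpcfalse$ per innocent user; a standard concentration argument over the $\pop-\collusion$ innocent users then yields both the $\fpcfalse(\pop-\collusion)$ and the $2^{-\Omega(\fpcfalse(\pop-\collusion))}$ branches of the failure term, together with the $\fpcfalse$ false-accusation guarantee.

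For completeness, I would assume the pirate is $\rob$-consistent ($\errors^{\length}\leq\rob\length$) and show that some user still in the coalition must see its score reach $\tau$ before round $\length$. In any round where $\answer^{\rindex}$ is consistent with $\FPCcol^{\rindex}_{S^{\rindex}}$, I expand $\answer^{\rindex}$, viewed as a $\pmo$-valued function of $\FPCcol^{\rindex}_{S^{\rindex}}$, in the $p^{\rindex}$-biased Fourier basis. Consistency forces the empty coefficient $\widehat{\answer^{\rindex}}(\emptyset)$ to lie bounded away from $\pm 1$ in a $p^{\rindex}$-dependent way; Parseval combined with the arcsine-type bias then converts this into a lower bound of order $\Omega(\irob)$ on the conditional expectation of $|S^{\rindex}|^{-1}\sum_{i\in S^{\rindex}}\phi(\FPCcol^{\rindex}_i,\answer^{\rindex},p^{\rindex})$. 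Inconsistent rounds can subtract at most $O(1/\irob)$ each, but there are at most $\rob\length$ of them and $\rob<\tfrac12-\irob$, so the net per-round expected gain stays $\Omega(\irob^2)$. Summing across $\length$ rounds and applying Azuma once more, the sum $\sum_{i\in S^1}Z^{\length}_i$ concentrates above $\collusion\tau$ once $\length=\Omega(\collusion^2\log(1/\fpcfalse)/\irob^4)$, except with probability $\fpcfalse^{\Omega(\irob\collusion)}$. By pigeonhole some user's individual score crosses $\tau$ and is accused, contradicting the assumption that $S^{\rindex}$ was never exhausted.

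The hard part I expect is obtaining the tight $\irob^{-4}$ scaling as $\rob\to\tfrac12$: previous analyses either cannot push $\rob$ arbitrarily close to $1/2$ or lose larger polynomial factors in $\irob^{-1}$. The four powers emerge as two from the $O(1/\irob)$ per-increment cap (entering the Azuma variance for both soundness and completeness) and two from the $\Omega(\irob^2)$ per-round mean gain (one $\irob$ from the arcsine truncation, one from the slack absorbing inconsistent rounds). The Fourier viewpoint is what makes these exponents transparent: consistency is the statement that low-degree $p$-biased Fourier mass of $\answer^{\rindex}$ is large, Parseval distributes this mass across the surviving coalition, and the arcsine measure is exactly the integral transform that converts Fourier mass into accumulated score at the right scale. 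A secondary subtlety is that mid-stream accusations adaptively shrink $S^{\rindex}$; the innocent-user martingale still survives because $\ifpc$'s decisions depend only on past $(\answer^k,\FPCcol^k_{S^k})$ and are therefore independent of any innocent $\FPCcol^k_i$, and the completeness bound is stated per round relative to the surviving $S^{\rindex}$, so it carries through unchanged.
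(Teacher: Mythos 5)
Your high-level design --- a Tardos-style code with arcsine-biased columns, a cumulative score $\sum_\rindex \answer^\rindex\phi^{p^\rindex}(\FPCcol^\rindex_i)$ per user, accusation on threshold crossing, and soundness by martingale concentration --- matches the paper's construction, but the completeness and robustness portions have genuine gaps. Most importantly, your bias distribution is a purely truncated arcsine law, while the paper's $\overline{D}_{\accuracy,\mass}$ puts mass $\mass = \frac12 - \frac{\irob}{4}$ (close to $\frac12$) on each of the atoms $p=0$ and $p=1$; these ``special rounds'' are the key to achieving any $\rob < 1/2$. On a special round $\phi^p \equiv 0$, so the score is unaffected, but an inconsistent answer still counts as an inconsistency, and since the pirate cannot distinguish a special round from a normal round whose codeword happens to be constant, its inconsistency rate on special rounds tracks its overall rate. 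Your handling (``each inconsistent round subtracts at most $O(1/\irob)$; there are at most $\rob\length$ of them'') cannot close the argument: an inconsistent normal round can drop the total score by $\Theta(\collusion/\sqrt{\accuracy})$, which dwarfs the $O(1)$ gain of a consistent round, so with $\rob\length \approx \length/2$ inconsistent rounds the budget goes negative. The paper instead analyzes $\errwt\err^\rindex + \sum_i s_i^\rindex$ (the $\xi$ variable of Section~\ref{sec:Robustness}), which charges each inconsistency a \emph{fixed} $\errwt = \Theta(\irob)$, decoupled from the score increments; this is what makes the accounting work.

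There are two further quantitative slips. Your truncation scale $\Theta(\irob^2)$ is independent of $\collusion$, but completeness needs $|g(1-\accuracy)-g(1)|$ and $|g(\accuracy)-g(0)|$ to be $O(\irob)$, which by Lemma~\ref{lem:gBounds} forces $\accuracy = \Theta(\irob/\collusion)$ (the paper uses $\accuracy = \irob/4\collusion$). And your claimed $\Omega(\irob)$ lower bound on the \emph{per-user average} $|S^\rindex|^{-1}\sum_{i\in S^\rindex}\phi(\cdot)$ is off by a factor of $\collusion$: a dictator pirate already shows the expected \emph{sum} is only $\Theta(1)$ before the $\irob$ discount from the atoms. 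The relevant mechanism here is not Parseval (which controls total Fourier weight) but the integration identity $\ex{\FPCcol\sim p}{f(\FPCcol)\sum_i\phi^p(\FPCcol_i)} = g'(p)\sqrt{p(1-p)}$ of Lemma~\ref{lem:ExpectationDerivative}; the arcsine density makes this exactly integrable in $p$, and consistency enters through $g(1)-g(0)=2$, not by bounding the empty Fourier coefficient away from $\pm1$ uniformly in $p$ (the dictator's $g(p)=2p-1$ shows that fails). With the correct parameters, $\irob^{-4}$ arises because soundness forces $\sigma \gtrsim \sqrt{\length\log(1/\fpcfalse)}$ while the completeness contradiction leaves only a $\Theta(\irob^2)$-fraction of the correlation budget to absorb the $O(\collusion\sigma/\length)$ term (the $\Theta(\irob)$ per-round gain and the $\errwt\rob = \Theta(\irob)$ per-round robustness charge nearly cancel), giving $\length \gtrsim \collusion\sigma/\irob^2$ and then $\length \gtrsim \collusion^2\log(1/\fpcfalse)/\irob^4$; your accounting (two powers from the increment cap, two from an $\irob^2$ per-round gain) does not reproduce this. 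Finally, Azuma--Hoeffding as you invoke it would be too lossy because the increments are bounded only by $1/\sqrt{\accuracy} = \Theta(\sqrt{\collusion/\irob})$ despite having unit variance; the paper uses a Bernstein-type MGF estimate (Lemma~\ref{lem:PhiMGF}) together with Etemadi's maximal inequality instead.
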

We remark on the parameters of our construction and how they relate to the literature.
\begin{remark}
~

\begin{itemize}
\item The expression for the failure probability $\fpcfail$ is a bit mysterious. To interpret it, we fix $\rob = 1/2 - \Omega(1)$ and consider two parameter regimes: $\fpcfalse (\pop-\collusion) \ll 1$ and $\fpcfalse (\pop-\collusion) \gg 1$.

In the traditional parameter regime for fingerprinting codes $\fpcfalse (\pop-\collusion) = \eps' \ll 1$, and so no users are falsely accused.  Then our fingerprinting code has length $O(\sample^2 \log ((\pop-\collusion)/\eps'))$ and a failure probability of $\eps'$. This matches the result of \cite{LaarhovenDRSdW13}.

However, if we are willing to tolerate falsely accusing a small constant fraction of users, then we can set, for example, $\fpcfalse (\pop-\collusion) = .01 \pop$, and our fingerprinting code will have length $O(\sample^2)$ and failure probability $2^{-\Omega(\sample)}$. To our knowledge, such large values of $\fpcfalse$ have not been considered before. It saves a logarithmic factor in our final result.

\item Our construction works for any robustness parameter $\rob < 1/2$. Previously \cite{BunUV14} gave a construction for $\rob=1/75$ in the non-interactive setting.  Previous constructions in the interactive setting do not achieve \emph{any} robustness $\rob > 0$, even for the weaker model of robustness to erasures~\cite{BonehN08}.

\item Our completeness condition differs subtly from previous work. We require that, with high probability, $$\errors^\length = \left| \set{1 \leq k \leq \length \left|\; \not\exists \; i \in [\pop],\; \answer^{k} = \FPCcol^{k}_{i} \right. } \right| > \rob \length,$$ rather than the weaker  condition $$\left| \set{1 \leq k \leq \length \left|\; \not\exists \; i \in S^1,\; \answer^{k} = \FPCcol^{k}_{i} \right. } \right| > \rob \length.$$ While our version is less natural in the watermarking setting, it is important to our application to false dicsovery. Our interactive fingerprinting code ensures that the adversary cannot be consistent with respect to the population, rather than that it cannot be consistent with respect to the sample.
\end{itemize}
\end{remark}


\subsection{The Construction}

Our construction and analysis is based on the optimal (non interactive) fingerprinting codes of Tardos~\cite{Tardos03}, and the robust variant by Bun et al.~\cite{BunUV14}.  The code is essentially the same, but columns are generated and shown to the adversary one at a time, and tracing is modified to identify users interactively.

We begin with some definitions and notation.
For $0 \leq a < b \leq 1$, let $D_{a,b}$ be the distribution with support $(a,b)$ and probability density function $\mu(p) = C_{a,b} / \sqrt{p(1-p)}$, where $C_{a,b}$ is a normalising constant.\footnote{To sample from $D_{a,b}$, first sample $\varphi \in (\sin^{-1}(\sqrt{a}),\sin^{-1}(\sqrt{b}))$ uniformly, then output $\sin^2(\varphi)$ as the sample.} For $\accuracy,\mass \in (0,1/2)$, let $\overline{D}_{\accuracy,\mass}$ be the distribution on $[0,1]$ that returns a sample from $D_{\accuracy,1-\accuracy}$ with probability $1-2\mass$ and $0$ or $1$ each with probability $\mass$.

For $p \in [0,1]$, let $\FPCcol \sim p$ denote that $\FPCcol \in \{\pm 1\}$ is drawn from the distribution with $\pr{}{\FPCcol=1}=p$ and $\pr{}{\FPCcol=-1}=1-p$. Let $\FPCcol_{1 \cdots n} \sim p$ denote that $\FPCcol \in \{\pm 1\}^n$ is drawn from a product distribution in which $\FPCcol_i \sim p$ independently for all $i \in [n]$. 

Define $\phi^p : \{\pm 1\} \to \mathbb{R}$ by $\phi^0(\FPCcol)=\phi^1(\FPCcol)=0$ and, for $p \in (0,1)$, $\phi^p(1)=\sqrt{(1-p)/p}$ and $\phi^p(-1)=-\sqrt{p/(1-p)}$. The function $\phi^p$ is chosen so that $\phi^p(\FPCcol)$ has mean $0$ and variance $1$ when $\FPCcol \sim p$.

\begin{figure}[ht]
\begin{framed}
\begin{algorithmic}
\INDSTATE[0]{Given parameters $1 \leq \collusion \leq \pop$ and  $0 < \fpcfalse, \rob < 1/2$}
\INDSTATE[0]{Set parameters:
\begin{align*}
\accuracy =&  \frac{\irob}{4\collusion} &&\color{DarkRed}{\geq \Omega \left(  \frac{\irob}{\collusion} \right)}\\
\mass =& \frac{3}{8} + \frac{\rob}{4} &&\color{DarkRed}{= \frac12 - \frac{1}{4}\irob} \\
\sigma =& 64 \cdot \left\lceil \frac{6 \pi \collusion}{\irob^2} \right\rceil \cdot  \left\lceil \log_{e}\left(\frac{32}{\fpcfalse}\right) \right\rceil &&\color{DarkRed}{\leq O \left( \frac{\collusion}{\irob^2} \log\left(\frac{1}{\delta} \right) \right)}\\
\length =& \left\lceil \frac{6 \pi \collusion}{\irob^2} \right\rceil \cdot \sigma  &&\color{DarkRed}{\leq O \left(\frac{\collusion^2}{\irob^4} \log\left(\frac{1}{\delta} \right) \right)}
\end{align*}}
\INDSTATE[0]{Let $s^0_i = 0$ for every $i \in [\pop]$.}
\INDSTATE[0]{For $j = 1,\dots,\length$:}
\INDSTATE[1]{Draw $p^j \sim \overline{D_{\accuracy,\mass}}$ and $\FPCcol^{j}_{1\cdots\pop} \sim p^j$.}
\INDSTATE[1]{Issue $\FPCcol^{j} \in \pmo^{\pop}$ as a challenge and receive $\answer^j \in \pmo$ as the response.}
\INDSTATE[1]{For $i \in [\pop]$, let $s^{j}_{i} = s^{j-1}_{i} + \answer^j \cdot \phi^{p^j}(\FPCcol^{j}_i)$.}
\INDSTATE[1]{Accuse $I^j = \set{i \in [\pop] \mid s^j_{i} > \sigma}$.}
\end{algorithmic}
\end{framed}
\vspace{-6mm}
\caption{The interactive fingerprinting code $\ifpc=\ifpc_{\collusion,\pop,\fpcfalse,\rob}$} \label{fig:IFPCconstruction}
\end{figure}

The fingerprinting code $\ifpc$ is defined in Figure \ref{fig:IFPCconstruction}.  In addition to the precise setting of parameters, we have given asymptotic bounds to help follow the analysis.  We now analyze $\ifpc$ and establish Theorem~\ref{thm:ifpcthm}.  The proof of Theorem \ref{thm:ifpcthm} is split into Theorems \ref{thm:Soundness} and \ref{thm:Completeness}. For convenience, define $I = \bigcup_{\rindex \in [\length]} I^\rindex.$

\subsection{Analysis Overview}

Intuitively, the quantity $s_i^\rindex$, which we call the \emph{score} of user $i$, measures the ``correlation'' between the answers $(\answer^1, \cdots, \answer^\rindex)$ of $\fpcadv$ and the $i$-th codeword $(\FPCcol_i^1, \cdots, \FPCcol_i^\rindex)$, using a particular measure of correlation that takes into account the choices $p^1,\dots,p^\rindex$.  If $s_i^\rindex$ ever exceeds the threshold $\sigma$, meaning that the answers are significantly correlated with the $i$-th codeword, then we accuse user $i$.  Thus, our goal is to show two things:  \emph{Soundness}, that the score of an \emph{innocent} user (i.e.~$i \not\in S^1$) never exceeds the threshold, as the answers cannot be correlated with the unknown $i$-th codeword.  And \emph{completeness}, that the score of every \emph{guilty} user (i.e.~$i \in S^1$) will at some point exceed the threshold, meaning that the answers must correlate with the $i$-th codeword for every $i \in S^1$.

\subsubsection{Soundness}

The proof of soundness closely mirrors Tardos' analysis~\cite{Tardos03} of the non-interactive case.  If $i$ is innocent, then, since $\fpcadv$ doesn't see the codeword $(\FPCcol_i^1, \cdots, \FPCcol_i^\rindex)$ of the $i^\text{th}$ user, there cannot be too much correlation.  In this case, one can show that $s_i^\rindex$ is the sum of $j$ independent random variables, each with mean $0$ and variance $1$, where we take the answers $a^1,\dots,a^\rindex$ as fixed and the randomness is over the choice of the unknown codeword.  By analogy to Gaussian random variables, one would expect that $s_i^\rindex \leq \sigma = \Theta( \sqrt{\length \log(1/\delta)})$ with probability at least $1-\delta$.  Formally, the fact that the score in each round is not bounded prevents the use of a Chernoff bound.  But nonetheless, in Section~\ref{sec:Soundness}, we prove soundness using a Chernoff-like tail bound for $s_i^{\rindex}$.

\subsubsection{Completeness}

To prove completeness, we must show that, for guilty users $i \in S^1$, we have $s_i^\rindex > \sigma$ for some $\rindex \in [\length]$ with high probability.  In Sections \ref{sec:BiasedFourierAnalysis} and \ref{sec:Concentration}, we prove that if $\fpcadv$ gives consistent answers in a $1-\rob$ fraction of rounds, then the sum of the scores for each of the guilty users is large.  Specifically, in Theorem \ref{thm:ScoreBound}, we prove that with high probability
\begin{equation} \label{eqn:CorrelationLB} 
\sum_{i \in S^1} s_i^\length \geq \Theta\left(\length\right)
\end{equation}

The constants hidden by the asymptotic notation are set to imply that, for at least one $i \in S^1$, the score $s_i^\length$ is above the threshold $\sigma = \Theta\left(\length/\collusion\right)$.  This step is not too different from the analysis of Tardos and Bun et al.~\cite{Tardos03,BunUV14} for the non-interactive case.
To show that, for \emph{every} $i \in S^1$, we will have $s_i^j > \sigma$ at some point, we must depart from the analysis of non-interactive fingerprinting codes.  If $s_i^\rindex > \sigma$, and user $i$ is accused in round $\rindex$, then the adversary will not see the suffix of codeword $(\FPCcol_i^{\rindex+1}, \cdots, \FPCcol_i^\length)$.  By the same argument that was used to prove soundness, the answers will not be correlated with this suffix, so with high probability the score $s_i^\length$ does not increase much beyond $\sigma$.  Thus,
\begin{equation}\label{eqn:CorrelationUB}
\sum_{i \in S^1} s_i^\length \leq n \cdot O( \sigma ) = \Theta\left(\length\right).
\end{equation}

The hidden constants are set to ensure that Equation \eqref{eqn:CorrelationUB} conflicts with Equation \eqref{eqn:CorrelationLB}. Thus, we can conclude that $\fpcadv$ cannot give consistent answers for a $1-\rob$ fraction of rounds. That is to say, $\fpcadv$ is forced to be inconsistent because all of $S^1$ is accused and eventually $\fpcadv$ sees none of the codewords and is reduced to guessing an answer $\answer^\rindex$.

\subsubsection{Establishing Correlation}

Proving Equation \eqref{eqn:CorrelationLB} is key to the analysis. Our proof thereof combines and simplifies the analyses of \cite{Tardos03} and \cite{BunUV14}.  For this high level overview, we ignore the issue of robustness and fix $\rob = 0$.

First we prove that the correlation bound holds in expectation and then we show that it holds with high probability using an Azuma-like concentration bound.  (Again, as the random variables being summed are not bounded, we are forced to use a more tailored analysis to prove concentration.)  We show that it holds in expectation for each round.  In Proposition \ref{prop:xiExpectation}, we show that the concentration grows in expectation in each round.  For every $\rindex \in [\length]$,
\begin{equation} \label{eqn:ExCorrelation} 
\ex{}{\sum_{i \in S^1} s_i^\rindex - s_i^{\rindex-1}} = \ex{}{\sum_{i \in S^1} \answer^\rindex \cdot \phi^{p^\rindex}(\FPCcol_i^\rindex)} \geq \Omega(1),
\end{equation} 
where the expectations are taken over the randomness of $p^\rindex$, $\FPCcol^\rindex$, and $\answer^\rindex$. 
Equation~\eqref{eqn:ExCorrelation}, combined with a concentration result, implies Equation~\eqref{eqn:CorrelationLB}. 


The intuition behind Equation \eqref{eqn:ExCorrelation} and the choice of $p^\rindex$ is as follows. Consistency guarantees that, if $\FPCcol^\rindex_i=b$ for all $i \in S^1$, then $\answer^\rindex=b$. This is a weak correlation guarantee, but it suffices to ensure correlation between $\answer^\rindex$ and $\sum_{i \in S^1} \FPCcol^\rindex_i$. The affine scaling $\phi^{p^\rindex}$ ensures that $\phi^{p^\rindex}(\FPCcol^\rindex_i)$ has mean zero (i.e. is uncorrelated with a constant) and and unit variance (i.e. has unit correlation with itself). The expectation of $\answer^\rindex \cdot \phi^{p^\rindex}(\FPCcol^\rindex_i)$ can be interpreted as the $i$-th first-order Fourier coefficient of $\answer^\rindex$ as a function of $\FPCcol^\rindex$. To understand first-order Fourier coefficients, consider the ``dictator'' function: Suppose $\answer^\rindex = \FPCcol^\rindex_{i^*}$ for some $i^* \in S^1$ - that is, $\fpcadv$ always outputs the $i^*$-th bit. Then $$\ex{\answer^\rindex,\FPCcol^\rindex, p^\rindex}{\answer^\rindex \sum_{i \in S^1} \phi^{p^\rindex}(\FPCcol^\rindex_i)} = \ex{\FPCcol^\rindex, p^\rindex}{\FPCcol^\rindex_{i^*} \cdot \phi^{p^\rindex}(\FPCcol^\rindex_{i^*})} = \ex{p^\rindex}{2\sqrt{p^\rindex(1-p^\rindex)}} = \Theta(1).$$
This example can be generalised to $\answer^\rindex$ being an arbitrary function of $\FPCcol^\rindex_{S^1}$ using Fourier analysis. This calculation also indicates why we choose the probability density function of $p^\rindex \sim D_{\accuracy,1-\accuracy}$ to be proportional to $1/\sqrt{p(1-p)}$. 

To handle robustness ($\rob>0$) we use the ideas of \cite{BunUV14}.  With probability $2\mass$ each round is a ``special'' constant round---i.e. $\FPCcol^\rindex=(1)^\pop$ or $\FPCcol^\rindex=(-1)^\pop$.  Otherwise it is a ``normal'' round where $\FPCcol^\rindex$ is sampled as before. Intuitively, the adversary $\fpcadv$ cannot distinguish the special rounds from the normal rounds in which $\FPCcol$ happens to be constant. If the adversary gives inconsistent answers on normal rounds, then it must also give inconsistent answers on special rounds. Since there are many more special rounds than normal rounds, this means that a small number of inconsistencies in normal rounds implies a large number of inconsistencies on special rounds. Conversely, inconsistencies are absorbed by the special rounds, so we can assume there are very few inconsistencies in normal rounds.  Thus $\fpcadv$ is forced to behave consistently on the normal rounds and the analysis on these rounds proceeds as before.

\subsection{Proof of Soundness} \label{sec:Soundness}

We first show that no user is falsely accused except with probability $\fpcfalse/2$. This boils down to proving a concentration bound. Then another concentration bound shows that with high probability at most a $\fpcfalse$ fraction of users are falsely accused.

These concentrations bounds are essentially standard. However, we are showing concentration of sums of variables of the form $\phi^p(\FPCcol)$, which may be quite large if $p \approx 0$ or $p \approx 1$. This technical problem prevents us from directly applying standard concentration bounds. Instead we open up the standard proofs and verify the desired concentration. We take the usual approach of bounding the moment generating function and using that to give a tail bound.

\begin{lem} \label{lem:PhiMGF}
For $p \in [\accuracy,1-\accuracy] \cup \{0,1\}$ and $t \in [-\sqrt{\accuracy}/2,\sqrt{\accuracy}/2]$, we have
$$\ex{\FPCcol \sim p}{e^{t \phi^p(\FPCcol)}} \leq e^{t^2}.$$
\end{lem}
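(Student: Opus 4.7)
The plan is to treat the cases $p \in \{0,1\}$ and $p \in [\accuracy, 1-\accuracy]$ separately. When $p \in \{0,1\}$, by definition $\phi^p \equiv 0$, so the left side equals $1 \leq e^{t^2}$, and there is nothing to prove. So the substantive work is the case $p \in [\accuracy, 1-\accuracy]$.

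The central observation will be that $\phi^p(\FPCcol)$ is bounded uniformly in this range. Specifically, $|\phi^p(1)| = \sqrt{(1-p)/p} \leq \sqrt{(1-\accuracy)/\accuracy} \leq 1/\sqrt{\accuracy}$ and similarly for $|\phi^p(-1)|$, so $|\phi^p(\FPCcol)| \leq 1/\sqrt{\accuracy}$. Combined with the hypothesis $|t| \leq \sqrt{\accuracy}/2$, this gives $|t\phi^p(\FPCcol)| \leq 1/2$.

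With this boundedness in hand, I will invoke the elementary inequality $e^x \leq 1 + x + x^2$ valid for $|x| \leq 1$ (which follows from $\sum_{k \geq 2} |x|^k / k! \leq x^2 \sum_{k \geq 2} 1/k! = x^2(e-2) \leq x^2$). Applying it pointwise to $x = t\phi^p(\FPCcol)$ and taking the expectation yields
\[
\ex{\FPCcol \sim p}{e^{t\phi^p(\FPCcol)}} \;\leq\; 1 + t \cdot \ex{\FPCcol \sim p}{\phi^p(\FPCcol)} + t^2 \cdot \ex{\FPCcol \sim p}{\phi^p(\FPCcol)^2}.
\]
By construction $\phi^p$ is affinely chosen so that $\phi^p(\FPCcol)$ has mean $0$ and variance $1$ under $\FPCcol \sim p$ (this is stated right after the definition of $\phi^p$ in the paper), so the right side collapses to $1 + t^2 \leq e^{t^2}$, finishing the proof.

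There is no real obstacle: the only thing one has to get right is the constant balancing between the assumed range of $t$ and the uniform bound on $\phi^p$, which conveniently gives $|t\phi^p(\FPCcol)| \leq 1/2 \leq 1$ exactly so that the quadratic Taylor bound on $e^x$ applies. The mean-zero/unit-variance normalisation of $\phi^p$ then does all the remaining work for free.
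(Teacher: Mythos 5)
Your proof is correct and is essentially the same as the paper's: handle $p\in\{0,1\}$ trivially, bound $|t\,\phi^p(\FPCcol)|\leq 1/2$ using $|\phi^p|\leq 1/\sqrt{\accuracy}$ and $|t|\leq\sqrt{\accuracy}/2$, apply the quadratic bound $e^u\leq 1+u+u^2$ pointwise, and use the mean-zero, unit-variance normalization of $\phi^p$ to collapse to $1+t^2\leq e^{t^2}$. The only cosmetic difference is that you state the elementary inequality for $|x|\leq 1$ while the paper states it for $|u|\leq 1/2$; both cover what's needed.
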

\begin{proof}
If $p \in \{0,1\}$, $\phi^p=0$ and the result is trivial.
We have $\ex{\FPCcol \sim p}{\phi^p(\FPCcol)}=0$, $\ex{\FPCcol \sim p}{\phi^p(\FPCcol)^2} = 1$, and, for $\FPCcol \in \{\pm 1\}$, $|\phi^p(\FPCcol)| \leq 1/\sqrt{\accuracy}$. In particular, $|\phi^p(\FPCcol) \cdot t| \leq 1/2$. For $u \in [-1/2,1/2]$, we have $e^u \leq 1 + u + u^2$. Thus $$\ex{\FPCcol \sim p}{e^{t \phi^p(\FPCcol)}} \leq 1 + t\ex{\FPCcol \sim p}{\phi^p(\FPCcol)} + t^2 \ex{\FPCcol \sim p}{\phi^p(\FPCcol)^2} = 1 + t^2 \leq e^{t^2}.$$
\end{proof}

\begin{lem} \label{lem:PhiSum}
Let $p_1 \cdots p_m \in [\accuracy,1-\accuracy] \cup \{0,1\}$ and $\FPCcol_1 \cdots \FPCcol_m$ drawn independently with $\FPCcol_i \sim p_i$. Let $\answer_1 \cdots \answer_m \in [-1,1]$ be fixed. For all $\lambda\geq 0$, we have $$\pr{}{\sum_{i \in [m]} \answer_i \phi^{p_i}(\FPCcol_i) \geq \lambda} \leq e^{-\lambda^2/4m} + e^{ - \sqrt\accuracy \lambda /4}.$$
\end{lem}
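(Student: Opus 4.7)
The plan is a standard Cramér--Chernoff argument with a truncated range of the parameter $t$, which is where the two terms in the bound come from. Let $X = \sum_{i \in [m]} \answer_i \phi^{p_i}(\FPCcol_i)$. For any $t \geq 0$, Markov's inequality on $e^{tX}$ and independence of the $\FPCcol_i$ give
\[
\pr{}{X \geq \lambda} \;\leq\; e^{-t\lambda} \prod_{i \in [m]} \ex{\FPCcol_i \sim p_i}{e^{t \answer_i \phi^{p_i}(\FPCcol_i)}}.
\]
Since $|\answer_i| \leq 1$, whenever $t \in [0,\sqrt{\accuracy}/2]$ we also have $t\answer_i \in [-\sqrt{\accuracy}/2, \sqrt{\accuracy}/2]$, so Lemma~\ref{lem:PhiMGF} applied with parameter $t\answer_i$ bounds each factor by $e^{(t\answer_i)^2} \leq e^{t^2}$. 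This yields
\[
\pr{}{X \geq \lambda} \;\leq\; e^{-t\lambda + m t^2} \qquad \text{for all } t \in [0,\sqrt{\accuracy}/2].
\]

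Now I would optimize over $t$ in this restricted range, which naturally splits into two cases. The unconstrained minimizer is $t^* = \lambda/(2m)$, giving bound $e^{-\lambda^2/(4m)}$, and this choice is \emph{admissible} precisely when $\lambda \leq m\sqrt{\accuracy}$. Otherwise, i.e.~when $\lambda > m\sqrt{\accuracy}$, I set $t = \sqrt{\accuracy}/2$, which gives
\[
\pr{}{X \geq \lambda} \;\leq\; \exp\!\left( -\tfrac{\sqrt{\accuracy}}{2}\lambda + \tfrac{m\accuracy}{4} \right) \;\leq\; \exp\!\left( -\tfrac{\sqrt{\accuracy}}{4}\lambda \right),
\]
where the last inequality uses $m\accuracy/4 = \sqrt{\accuracy} \cdot (m\sqrt{\accuracy})/4 \leq \sqrt{\accuracy}\lambda/4$ in this regime. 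In the first regime we have the sub-Gaussian tail and in the second the sub-exponential tail; taking the maximum (and then the sum) of the two bounds yields the claim
\[
\pr{}{X \geq \lambda} \;\leq\; e^{-\lambda^2/4m} + e^{-\sqrt{\accuracy}\lambda/4}.
\]

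There is no real obstacle here beyond being careful with the MGF truncation: the mild technical point is that $\phi^p$ is \emph{not} a bounded-variance sub-Gaussian variable uniformly in $p$ (its range scales like $1/\sqrt{\accuracy}$), which is exactly why Lemma~\ref{lem:PhiMGF} only controls the MGF on $[-\sqrt{\accuracy}/2,\sqrt{\accuracy}/2]$, and hence why we cannot simply take $t \to \infty$. The two-regime optimization above is the standard Bernstein-style way of exploiting both the variance-$1$ behavior (for small $\lambda$) and the $1/\sqrt{\accuracy}$ range (for large $\lambda$), and it reproduces the Gaussian-plus-exponential shape of the claimed tail bound.
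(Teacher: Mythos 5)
Your proposal is correct and follows essentially the same route as the paper: bound the MGF of the sum on $t \in [0,\sqrt{\accuracy}/2]$ via Lemma~\ref{lem:PhiMGF} (using $|\answer_i|\leq 1$ to keep $t\answer_i$ in range), apply Markov, and split on whether the unconstrained optimizer $t=\lambda/(2m)$ is admissible. The only difference is presentational: the paper writes $t=\min\{\sqrt{\accuracy}/2,\lambda/2m\}$ and then case-splits, while you phrase it as a two-regime optimization; the calculations and the final bound are identical.
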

\begin{proof}
By Lemma \ref{lem:PhiMGF}, for all $t \in [-\sqrt{\accuracy}/2,\sqrt{\accuracy}/2]$, $$\ex{\FPCcol}{e^{t \sum_{i \in [m]} \answer_i \phi^{p_i}(\FPCcol_i)}} \leq \prod_{i \in [m]} \ex{\FPCcol_i}{e^{t\answer_i\phi^{p_i}(\FPCcol_i)}} \leq e^{t^2m}.$$ By Markov's inequality, $$\pr{}{\sum_{i \in [m]} \answer_i \phi^{p_i}(\FPCcol_i) \geq \lambda} \leq \frac{\ex{}{e^{t \sum_{i \in [m]} \answer_i \phi^{p_i}(\FPCcol_i)}}}{e^{t\lambda}} \leq e^{t^2m-t\lambda}.$$ Set $t = \min \{ \sqrt{\accuracy}/2, \lambda/2m \}$. If $\lambda \in [0,m\sqrt\accuracy]$, then $$\pr{}{\sum_{i \in [m]} \answer_i \phi^{p_i}(\FPCcol_i) \geq \lambda} \leq e^{-\lambda^2/4m}.$$
On the other hand, if $\lambda \geq m\sqrt\accuracy$, then $$\pr{}{\sum_{i \in [m]} \answer_i \phi^{p_i}(\FPCcol_i) \geq \lambda} \leq e^{\accuracy m/4 - \sqrt\accuracy \lambda /2} \leq e^{ - \sqrt\accuracy \lambda /4}.$$ The result is obtained by adding these expressions.
\end{proof}

The following theorem shows how we can beat the union bound for tail bounds on partial sums.
\begin{thm}[Etemadi's Inequality \cite{etemadi}] \label{thm:Etemadi}
Let $X_1 \cdots X_n \in \mathbb{R}$ be independent random variables. For $k \in [n]$, define $S_k = \sum_{i \in [k]} X_i$ to be the $k^\text{th}$ partial sum. Then, for all $\lambda > 0$, $$\pr{}{\max_{k \in [n]} |S_k| > 4 \lambda } \leq 4 \cdot \max_{k \in [n]} \pr{}{|S_k| > \lambda}.$$
\end{thm}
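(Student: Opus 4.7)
The plan is to decompose the event $\{\max_{k \in [n]} |S_k| > 4\lambda\}$ according to the first index $\tau$ at which a partial sum crosses the threshold $4\lambda$, and then exploit independence between the prefix that determines $\tau$ and the suffix that comes after. Concretely, I would set $E_k = \{|S_k| > 4\lambda \text{ and } |S_j| \leq 4\lambda \text{ for all } j < k\}$, so that the $E_k$ are disjoint with $\bigcup_{k \in [n]} E_k = \{\max_k |S_k| > 4\lambda\}$, and write $M = \max_{k \in [n]} \pr{}{|S_k| > \lambda}$, reducing the claim to $\sum_k \pr{}{E_k} \leq 4M$.

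The first key step is a purely geometric observation: on $E_k$, if additionally $|S_n - S_k| \leq 3\lambda$, then $|S_n| \geq |S_k| - |S_n - S_k| > 4\lambda - 3\lambda = \lambda$. Therefore the disjoint events $E_k \cap \{|S_n - S_k| \leq 3\lambda\}$ are all contained in the single event $\{|S_n| > \lambda\}$, contributing a total of at most $M$ to $\sum_k \pr{}{E_k}$. For the remaining contribution, I would use the fact that $E_k$ is determined by $X_1, \ldots, X_k$ while $S_n - S_k = X_{k+1} + \cdots + X_n$ is determined by the disjoint block $X_{k+1}, \ldots, X_n$, so independence gives $\pr{}{E_k \cap \{|S_n - S_k| > 3\lambda\}} = \pr{}{E_k} \cdot \pr{}{|S_n - S_k| > 3\lambda}$. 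Then a triangle-inequality argument, $|S_n - S_k| \leq |S_n| + |S_k|$, implies that $|S_n - S_k| > 3\lambda$ forces either $|S_n| > \lambda$ or $|S_k| > 2\lambda$, yielding $\pr{}{|S_n - S_k| > 3\lambda} \leq 2M$.

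Assembling these two bounds with $p := \pr{}{\max_k |S_k| > 4\lambda} = \sum_k \pr{}{E_k}$ would produce the self-bounding inequality $p \leq M + 2Mp$. To conclude $p \leq 4M$, I would split into cases: if $M \geq 1/4$ the bound is trivial since $p \leq 1 \leq 4M$, and otherwise $1 - 2M > 1/2$, so $p \leq M/(1 - 2M) < 2M \leq 4M$. The most delicate point I expect is simply identifying the right independence structure cleanly (so that $E_k$ really is $\sigma(X_1, \ldots, X_k)$-measurable) and choosing the nested splits $4\lambda = \lambda + 3\lambda$ and $3\lambda = \lambda + 2\lambda$ so that the constant comes out to exactly $4$; beyond that, no concentration inputs or smoothing tricks are needed.
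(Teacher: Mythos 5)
Your proof is correct, though the paper itself does not prove this theorem --- it cites Etemadi's original paper and states the inequality (with the slightly loose constant $4$) as a black box, so there is no ``paper proof'' to compare against.

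That said, a comparison to the textbook proof of Etemadi's inequality (which gives the stronger constant $3$) is instructive, because you have made the argument marginally harder than it needs to be. You are on the standard track: first-passage decomposition into the disjoint events $E_k$, the split according to whether $|S_n - S_k|$ is small or large, independence of $E_k$ and $S_n - S_k$, and the triangle-inequality bound $\Pr[\,|S_n - S_k| > 3\lambda\,] \le 2M$. The only wasteful step is at the end, where you write $\sum_k \Pr[E_k]\cdot \Pr[\,|S_n - S_k| > 3\lambda\,] \le 2M\sum_k\Pr[E_k] = 2Mp$ and then solve the self-bounding inequality $p \le M + 2Mp$, which forces the case split on $M \ge 1/4$. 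Since the $E_k$ are disjoint, you already have $\sum_k \Pr[E_k] \le 1$ for free, so that same sum is directly bounded by $2M$; this gives $p \le M + 2M = 3M \le 4M$ with no recursion and no case analysis. With that replacement, the split $3\lambda = \lambda + 2\lambda$ suffices and you recover the sharp constant $3$ that appears in the usual formulation. Everything else --- the $\sigma(X_1,\dots,X_k)$-measurability of $E_k$, the independence of disjoint blocks, the containment $E_k \cap \{\,|S_n - S_k| \le 3\lambda\,\} \subseteq \{\,|S_n| > \lambda\,\}$ --- is stated correctly and carries the load you want it to.
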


\begin{prop}[Individual Soundness] \label{prop:IndivSoundness}
For all  $i \in [\pop]$, we have $$\pr{}{i \in I \setminus S^1} \leq 8 (e^{-\sigma^2/64\length}+ e^{-\sigma \sqrt{\accuracy} /16})  \leq \fpcfalse/2,$$ where the probability is taken over $\ifpcgame_{\pop, \leq\pop, \length}[\fpcadv,\ifpc_{\pop,\collusion,\fpcfalse,\rob}]$ for an arbitrary $\fpcadv$.
\end{prop}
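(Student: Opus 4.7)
}
The plan is to exploit the fact that, for an innocent user $i \notin S^1$, the adversary $\fpcadv$ never observes any entry of the $i$-th row $(\FPCcol_i^1,\dots,\FPCcol_i^\length)$. Indeed, $S^\rindex \subseteq S^1$ for every $\rindex$, so the restrictions $\FPCcol^\rindex_{S^\rindex}$ handed to $\fpcadv$ never include coordinate $i$. Consequently, if we condition on the coins of $\ifpc$ other than $\FPCcol_i$, on all of $p^1,\dots,p^\length$, and on the random tape of $\fpcadv$, the transcript $(p^\rindex, \answer^\rindex)_{\rindex \in [\length]}$ is completely determined, while the codeword bits $\FPCcol_i^1,\dots,\FPCcol_i^\length$ remain independent with $\FPCcol_i^\rindex \sim p^\rindex$. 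Hence, with respect to this conditioning, $s_i^\rindex = \sum_{j \leq \rindex} \answer^j \cdot \phi^{p^j}(\FPCcol_i^j)$ is a sum of independent mean-zero random variables of the form covered by Lemma~\ref{lem:PhiSum} (with $m=\rindex$ and the $\answer^j \in \pmo$ playing the role of the fixed signs), and the $p^j$ take values in $[\accuracy,1-\accuracy]\cup\{0,1\}$ by construction of $\overline{D}_{\accuracy,\mass}$.

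The event $i \in I$ is exactly $\max_{\rindex \in [\length]} s_i^\rindex > \sigma$, so the natural tool is a maximal inequality on the partial sums. First I would apply Lemma~\ref{lem:PhiSum} with $\lambda = \sigma/4$ and $m \leq \length$ to obtain, for every fixed $\rindex$, $\pr{}{|s_i^\rindex| > \sigma/4} \leq 2\bigl(e^{-\sigma^2/64\length} + e^{-\sqrt{\accuracy}\sigma/16}\bigr)$, where the factor of $2$ absorbs the symmetric lower tail (applying the lemma to $-s_i^\rindex$ as well). Then I would invoke Etemadi's inequality (Theorem~\ref{thm:Etemadi}) with the independent summands $\answer^j \phi^{p^j}(\FPCcol_i^j)$ and threshold $4\lambda = \sigma$ to conclude
\[
\pr{}{\max_{\rindex \in [\length]} s_i^\rindex > \sigma} \leq \pr{}{\max_{\rindex \in [\length]} |s_i^\rindex| > \sigma} \leq 4 \max_{\rindex \in [\length]} \pr{}{|s_i^\rindex| > \sigma/4} \leq 8\bigl(e^{-\sigma^2/64\length} + e^{-\sqrt{\accuracy}\sigma/16}\bigr),
\]
which is the first inequality of the proposition. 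Integrating out the conditioning leaves the bound unchanged.

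For the second inequality I would just plug in the chosen parameters. With $c_1 = \lceil 6\pi\collusion/\irob^2 \rceil$ and $c_2 = \lceil \log_e(32/\fpcfalse) \rceil$, we have $\sigma = 64 c_1 c_2$ and $\length = 64 c_1^2 c_2$, so $\sigma^2/(64\length) = c_2 \geq \log_e(32/\fpcfalse)$, giving $e^{-\sigma^2/64\length} \leq \fpcfalse/32$. For the other term, $\sqrt{\accuracy}\sigma/16 = 2 c_1 c_2 \sqrt{\irob/\collusion}$, and using $c_1 \geq 6\pi\collusion/\irob^2$ and $\irob < 1/2$, $\collusion \geq 1$, a direct estimate shows this is at least $c_2$, so $e^{-\sqrt{\accuracy}\sigma/16} \leq \fpcfalse/32$ as well. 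Summing and multiplying by $8$ yields $\pr{}{i \in I \setminus S^1} \leq \fpcfalse/2$.

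The only subtle point, and the one I would be most careful about, is the conditioning argument justifying that $s_i^\rindex$ is a sum of independent random variables with fixed $\pmo$-valued coefficients $\answer^j$. It is tempting to worry that adaptive accusations could couple $\FPCcol_i^\rindex$ to $\answer^\rindex$, but because $i \notin S^1$, accusing $i$ never changes $S^\rindex$, so the game's evolution is deterministic once we fix everything except $\FPCcol_i$; the rest of the argument is then a direct application of the tail and maximal inequalities already in hand.
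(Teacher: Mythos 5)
Your proposal follows the paper's own proof essentially verbatim: condition so that the $i$-th codeword is drawn independently after the adversary's actions are fixed, apply Lemma~\ref{lem:PhiSum} with $\lambda = \sigma/4$ for each partial sum, invoke Etemadi's inequality (Theorem~\ref{thm:Etemadi}) to pass from pointwise to maximal bounds, and verify the parameter choices give $\leq \fpcfalse/2$. The only cosmetic difference is that you fold the two one-sided tails into a single factor-of-$2$ bound on $|s_i^\rindex|$ before applying Etemadi, whereas the paper states the two tails separately; both accounts are correct and equivalent.
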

Here $\ifpcgame_{\pop, \leq \collusion, \length}$ denotes $\ifpcgame_{\pop, \collusion, \length}$ with the constraint $|S^1|=\collusion$ replaced by the constraint $|S^1| \leq \collusion$.
\begin{proof}
Let $i \in [\pop] \setminus S^1$. 
Since the adversary does not see $\FPCcol^\rindex_i$ for any $\rindex \in [\length]$, we may treat the answers of the adversary as fixed and analyse $s_i^\rindex$ as if $\FPCcol^\rindex_i$ was drawn after the actions of the adversary are fixed. Thus, by Lemma \ref{lem:PhiSum}, for every $\rindex \in [\length]$, $$\pr{}{s^\rindex_i>\frac{\sigma}{4}} = \pr{}{\sum_{k \in [\rindex]} \answer^k \phi^{p^k}(\FPCcol^k_i)>\frac{\sigma}{4}} \leq e^{-\sigma^2/64\length}+ e^{-\sigma \sqrt{\accuracy} /16}.$$
Likewise $\pr{}{s^\rindex_i < -\frac{\sigma}{4} }  \leq e^{-\sigma^2/64\length}+ e^{-\sigma \sqrt{\accuracy} /16}.$ Thus, by Theorem \ref{thm:Etemadi}, $$\pr{}{i \in I} \leq \pr{}{\max_{\rindex \in [\length]} |s^\rindex_i| > \sigma} \leq 4 \max_{\rindex \in [\length]} \pr{}{|s_i^\rindex|>\frac{\sigma}{4}} \leq 8 (e^{-\sigma^2/64\length}+ e^{-\sigma \sqrt{\accuracy} /16}) \leq \frac{\fpcfalse}{2}.$$
\end{proof}

\begin{thm}[Soundness] \label{thm:Soundness}
We have
$$\pr{}{|I \setminus S^1| > \fpcfalse (\pop-|S^1|)} \leq \min \left\{ \fpcfalse (\pop-|S^1|), e^{-\fpcfalse (\pop-|S^1|) / 8} \right\},$$
where the probability is taken over $\ifpcgame_{\pop, \leq\pop, \length}[\fpcadv,\ifpc_{\pop,\collusion,\fpcfalse,\rob}]$ for an arbiratry $\fpcadv$.
\end{thm}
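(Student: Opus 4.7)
The plan is to promote the individual bound of Proposition~\ref{prop:IndivSoundness} to a joint concentration bound by exploiting conditional independence across innocent users. Fix the adversary $\fpcadv$'s random tape and let $W = \bigl((p^j, \FPCcol^j_{S^1}, \answer^j)_{j \in [\length]}\bigr)$ denote the transcript together with the chosen probabilities. The key observation is that for each $i \notin S^1$ we have $i \notin S^j$ for all $j$, so the adversary never sees the bit $\FPCcol^j_i$. Moreover, the only way an accusation feeds back into the adversary's view is through $S^{j+1} = S^j \setminus (I^j \cap S^1)$, which depends only on \emph{guilty} accusations. Consequently, conditional on $W$, the family $\{\FPCcol^j_i : j \in [\length], i \notin S^1\}$ is jointly independent with $\FPCcol^j_i \mid W \sim \text{Bernoulli}(p^j)$, and since $\mathbf{1}[i \in I]$ is a deterministic function of $W$ and $(\FPCcol^j_i)_j$, the indicators $\{\mathbf{1}[i \in I]\}_{i \notin S^1}$ are mutually independent conditional on $W$.

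Next I would verify that Proposition~\ref{prop:IndivSoundness} actually yields the stronger conditional bound $\Pr[i \in I \mid W] \leq \fpcfalse/2$ for each innocent $i$. Its proof invokes only Lemma~\ref{lem:PhiSum}, which treats both $(p^j)_j$ and $(\answer^j)_j$ as fixed and takes randomness only over $(\FPCcol^j_i)_j$; the same derivation applies pointwise in $W$.

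With these two ingredients, set $m = \pop - |S^1|$. For the first term of the $\min$, a union bound gives $\Pr[|I \setminus S^1| \geq 1 \mid W] \leq m \cdot \fpcfalse/2$. Since the event $|I \setminus S^1| > \fpcfalse m$ implies $|I \setminus S^1| \geq 1$, integrating over $W$ gives $\Pr[|I \setminus S^1| > \fpcfalse m] \leq \fpcfalse m / 2 \leq \fpcfalse m$. For the second term, conditional on $W$ the random variable $|I \setminus S^1| = \sum_{i \notin S^1} \mathbf{1}[i \in I]$ is a sum of independent Bernoullis, stochastically dominated by $\mathrm{Bin}(m, \fpcfalse/2)$. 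A multiplicative Chernoff bound at twice the mean gives $\Pr[|I \setminus S^1| > \fpcfalse m \mid W] \leq e^{-\fpcfalse m / 8}$ (with some slack in the constant), and integrating over $W$ yields the claimed tail bound.

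The main subtle point is the conditional independence step: one must carefully justify that although $I^j$ does depend on the innocent bits $\FPCcol^j_i$, this dependence never propagates into the adversary's future view because the adversary only learns $S^{j+1}$, and hence only $I^j \cap S^1$. Once this is pinned down, the remainder is the standard pairing of a union bound (for the small-$\fpcfalse m$ regime) with a Chernoff bound (for the large-$\fpcfalse m$ regime), applied conditionally on the adversary's transcript.
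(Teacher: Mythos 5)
Your proposal is correct and follows essentially the same route as the paper: a per-user bound from Proposition~\ref{prop:IndivSoundness}, a conditional-independence argument across innocent users, and then a union bound and a Chernoff bound to handle the two regimes of $\fpcfalse(\pop - |S^1|)$. If anything, your version is slightly more careful than the paper's on the conditioning step: the paper asserts the $E_i$'s are independent ``conditioned on the choice of $S^1$ and $p^\rindex$,'' whereas the independence really requires conditioning on the full transcript (including $\FPCcol^\rindex_{S^1}$ and the adversary's coins), as you do with $W$; you also explicitly justify why innocent accusations do not feed back into the adversary's view, which the paper leaves implicit.
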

\begin{remark}
Interestingly, Theorem \ref{thm:Soundness} does not require $|S^1| \leq \collusion$ -- that is, it holds with respect to $\ifpcgame_{\pop, \leq\pop, \length}[\fpcadv,\ifpc_{\pop,\collusion,\fpcfalse,\rob}]$, rather than $\ifpcgame_{\pop, \collusion, \length}[\fpcadv,\ifpc_{\pop,\collusion,\fpcfalse,\rob}]$. It only requires that $\ifpc$ does not see the codewords of users not in $S^1$. 

This is a useful if we are in a setting where $|S^1|$ is unknown: if $|S^1| > \collusion$, then the interactive fingerprinting code will still not make too many false accusations, even if it fails to identify all of $S^1$.
\end{remark}
\begin{proof}
Let $E_i \in \{0,1\}$ be the indicator of the event $i \in I \backslash S^1$. The $E_i$s for $i \in [\pop]$ are independent (conditioned on the choice of $S^1$ and $p^\rindex$ for $\rindex \in [\length]$). Moreover, by Proposition \ref{prop:IndivSoundness},  $\ex{}{E_i} \leq \fpcfalse/2$ for all $i \in [\pop]$. Thus, by a Chernoff bound, $$\pr{}{|I \backslash S^1| > \fpcfalse (\pop-|S^1|)} = \pr{}{\sum_{i \in [\pop] \backslash S^1} E_i > \fpcfalse (\pop-|S^1|)} \leq e^{-\fpcfalse (\pop-|S^1|) / 8}.$$

If $\fpcfalse < 1/(\pop-|S^1|)$, then this is a very poor bound. Instead we use the fact that the $E_i$s are discrete and Markov's inequality, which amounts to a union bound. For $\fpcfalse (\pop-|S^1|) < 1$, we have 
$$\pr{}{|I \backslash S^1| > \fpcfalse (\pop-|S^1|)} = \pr{}{|I \backslash S^1| \geq 1} \leq \ex{}{\sum_{i \in [\pop] \backslash S^1 } E_i} \leq \frac{\fpcfalse (\pop-|S^1|)}{2} \leq \fpcfalse (\pop-|S^1|).$$
\end{proof}

The following lemma will be useful later.

\begin{lem} \label{lem:HiddenBound}
For $i \in [\pop]$, let $\rindex_i \in [\length+1]$ be the first $\rindex$ such that $i \notin S^\rindex$, where we define $S^{\length+1} = \emptyset$. For any $S \subset [\pop]$, $$\pr{}{\sum_{i \in S} s^\length_i - s^{\rindex_i-1}_i > \lambda} \leq e^{-\lambda^2/4|S|\length} + e^{ - \sqrt\accuracy \lambda /4},$$ where the probability is taken over $\ifpcgame_{\pop, \leq\pop, \length}[\fpcadv,\ifpc_{\pop,\collusion,\fpcfalse,\rob}]$ for an arbitrary $\fpcadv$.
\end{lem}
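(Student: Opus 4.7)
The plan is to reduce this bound to a direct application of Lemma~\ref{lem:PhiSum}, exploiting the fact that for $k \geq \rindex_i$ the adversary $\fpcadv$ never sees $\FPCcol^k_i$ (since $i \notin S^k$), so these bits are fresh randomness independent of $\answer^k$. First, swap the order of summation to write
\begin{equation*}
\sum_{i \in S}\bigl(s^\length_i - s^{\rindex_i-1}_i\bigr) \;=\; \sum_{i \in S}\sum_{k=\rindex_i}^{\length} \answer^k \phi^{p^k}(\FPCcol^k_i) \;=\; \sum_{k=1}^{\length} \answer^k \sum_{i \in S \setminus S^k} \phi^{p^k}(\FPCcol^k_i),
\end{equation*}
using that for $i \in S$ the condition $\rindex_i \le k$ is equivalent to $i \in S \setminus S^k$. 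Note that this rewrite also handles the corner cases $\rindex_i = 1$ (i.e.\ $i \notin S^1$) and $\rindex_i = \length+1$ (i.e.\ $i$ is never accused), which contribute the full score and zero, respectively.

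Next, condition on $\fpcadv$'s internal randomness together with $\{(p^k, \FPCcol^k_{S^k})\}_{k \in [\length]}$. These comprise the entire view of $\fpcadv$ and everything $\ifpc$ needs to determine its accusations, so the conditioning fixes $\answer^k$, $S^k$, $I^k$, and $\rindex_i$ for every $k \in [\length]$ and every $i \in [\pop]$. The remaining randomness is the collection $\{\FPCcol^k_i : k \in [\length],\, i \notin S^k\}$, which is mutually independent with $\FPCcol^k_i \sim p^k$: in each round $k$, the coordinates $\FPCcol^k_1,\ldots,\FPCcol^k_\pop$ are drawn iid $\sim p^k$ fresh after $S^k$ is already determined, and the entries with $i \notin S^k$ are neither revealed to $\fpcadv$ nor used by $\ifpc$ in that round. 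Under the conditioning, the double sum becomes $\sum_{j=1}^{m} \answer_j \phi^{p_j}(\FPCcol_j)$ with $m = \sum_{k=1}^{\length} |S \setminus S^k| \le |S|\length$, with $\answer_j \in [-1,1]$ and $p_j \in [\accuracy,1-\accuracy] \cup \{0,1\}$ fixed and $\FPCcol_j \sim p_j$ independent. Applying Lemma~\ref{lem:PhiSum} yields a conditional tail bound of $e^{-\lambda^2/4m} + e^{-\sqrt{\accuracy}\lambda/4} \le e^{-\lambda^2/4|S|\length} + e^{-\sqrt{\accuracy}\lambda/4}$, and since this is deterministic, averaging over the conditioning gives the unconditional bound stated in the lemma.

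The only point requiring care is the conditional independence claim, which I expect to be the main (though mild) obstacle. It can be verified by induction on $k$: assuming the desired joint distribution holds through round $k-1$, the set $S^k$ is measurable with respect to that history, and the fresh draws $\FPCcol^k_1,\ldots,\FPCcol^k_\pop \sim p^k$ in round $k$ influence the round-$k$ transcript (that is, $\answer^k$ and the accusations $I^k$) only through the coordinates $\FPCcol^k_i$ with $i \in S^k$; hence the remaining coordinates $\FPCcol^k_i$ ($i \notin S^k$) remain iid $\sim p^k$ conditional on the enlarged history, completing the induction.
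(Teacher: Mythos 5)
Your proof is correct and follows essentially the same route as the paper's: rewrite the quantity as a sum of terms $\answer^k \phi^{p^k}(\FPCcol^k_i)$ over rounds $k \geq \rindex_i$, observe that for those rounds $\FPCcol^k_i$ is never revealed to $\fpcadv$ (and hence can be treated as fresh independent randomness with the answers and sets fixed), and then invoke Lemma~\ref{lem:PhiSum} with $m \leq |S|\length$. The only difference is that you spell out the conditioning argument — in particular, that conditioning on $\fpcadv$'s randomness and $\{(p^k, \FPCcol^k_{S^k})\}_k$ already determines $\answer^k$, $S^k$, and $\rindex_i$, so that the leftover coordinates $\{\FPCcol^k_i : i \notin S^k\}$ are conditionally iid — whereas the paper compresses this to a single sentence ("since the adversary doesn't see $\FPCcol^\rindex_i$ for $\rindex \geq \rindex_i$\dots we can view $\mathbb{I}(\rindex \geq \rindex_i)\answer^\rindex$ as fixed"). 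Your more explicit treatment is welcome, since the paper's phrasing leaves implicit the fact that $S^{k+1}$ is determined by $\{\FPCcol^\rindex_{S^\rindex}\}_{\rindex \leq k}$ alone (accusations of users already outside $S^k$ have no effect on the game state), which is exactly the point your induction verifies.
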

\begin{proof}
We have $$\sum_{i \in S} s^\length_i - s^{\rindex_i-1}_i = \sum_{i \in S} \sum_{\rindex \in [\length]} \mathbb{I}(\rindex \geq \rindex_i) \answer^\rindex \phi^{p^\rindex}(\FPCcol_i^\rindex).$$ Again, since the adversary doesn't see $\FPCcol_i^\rindex$ for $\rindex \geq \rindex_i$, the random variables $\mathbb{I}(\rindex \geq \rindex_i) \answer^\rindex$ and $\phi^{p^\rindex}(\FPCcol_i^\rindex)$ are independent, so we can view $\mathbb{I}(\rindex \geq \rindex_i) \answer^\rindex \in [-1,1]$ as fixed. Now the result follows from Lemma \ref{lem:PhiSum}.
\end{proof}

\subsection{Proof of Completeness}

To show that the fingerprinting code identifies guilty users we must lower bound the scores $\sum_{i \in S^1} s_i^\length$. First we bound their expectation and then their tails.

\subsubsection{Biased Fourier Analysis} \label{sec:BiasedFourierAnalysis}

For this section, assume that the adversary $\fpcadv$ is always consistent - that is, we have no robustness and $\rob=0$. Robustness will be added in Section \ref{sec:Robustness}. Here we establish that the scores have good expectation, namely $$\ex{}{\sum_{i \in S^1} s_i^\rindex - s_i^{\rindex-1}} \geq \Omega(1)$$ for all $\rindex \in [\length]$.
The score $s^\length_i$ computes the `correlation' between the bits given to user $i$ and the output of the adversary. We must show that that the adversary's consistency constraint implies that there exists some correlation on average. 

In this section we deviate from the proof in \cite{Tardos03}. We use biased Fourier analysis to give a more intuitive proof of the correlation bound.

We have the following lemma and proposition, which relate the correlation $\answer^\rindex \cdot \sum_{i \in S^1} \phi^{p^\rindex}(\FPCcol_i^\rindex)$ to the properties of $\answer^\rindex$ as a function of $p^\rindex$. To interpret these imagine that $f$ represents the adversary $\fpcadv$ with one round viewed in isolation -- the fingerprinting code gives the adversary $\FPCcol^\rindex$ and the adversary responds with $f(\FPCcol^\rindex_{S^\rindex})$.

Firstly, the following lemma gives an interpretation of the correlation value for a fixed $p^\rindex$.

\begin{lem} \label{lem:ExpectationDerivative}
Let $f : \{\pm 1\}^\collusion \to \mathbb{R}$. Define $g : [0,1] \to \mathbb{R}$ by $g(p) = \ex{\FPCcol_{1 \cdots \collusion} \sim p}{f(\FPCcol)}$. For any $p \in (0,1)$, $$\ex{\FPCcol_{1 \cdots \collusion} \sim p}{f(\FPCcol) \cdot \sum_{i \in [\collusion]} \phi^p(\FPCcol_i)} = g'(p)\sqrt{p(1-p)}.$$
\end{lem}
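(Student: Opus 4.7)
The plan is to prove this by a direct calculation, exploiting the fact that differentiating $g(p)$ with respect to $p$ behaves nicely when we treat the coordinates' bias parameters as independent and then re-identify them via the chain rule. This is a cleaner route than expanding everything in terms of the $2^\collusion$ atoms of the cube.

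First I would introduce the multivariate auxiliary function $\widetilde{g}(p_1,\ldots,p_\collusion) = \ex{\FPCcol_i \sim p_i \text{ ind.}}{f(\FPCcol)}$, so that $g(p) = \widetilde{g}(p,\ldots,p)$, and by the multivariate chain rule
\[
g'(p) = \sum_{i=1}^{\collusion} \partial_i \widetilde{g}(p,\ldots,p).
\]
The key step is then a one-variable identity: for any $h \colon \pmo \to \R$ and any $p \in (0,1)$,
\[
\frac{d}{dp}\ex{\FPCcol \sim p}{h(\FPCcol)} \;=\; h(1)-h(-1) \;=\; \frac{1}{\sqrt{p(1-p)}}\,\ex{\FPCcol \sim p}{h(\FPCcol)\,\phi^p(\FPCcol)},
\]
where the last equality is just $p\cdot h(1)\sqrt{(1-p)/p} - (1-p)\cdot h(-1)\sqrt{p/(1-p)} = \sqrt{p(1-p)}\,(h(1)-h(-1))$ from the definition of $\phi^p$.

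Applying this identity to the $i$th coordinate, with the other coordinates integrated out, gives
\[
\partial_i \widetilde{g}(p_1,\ldots,p_\collusion) \;=\; \frac{1}{\sqrt{p_i(1-p_i)}}\,\ex{\FPCcol_j\sim p_j}{f(\FPCcol)\,\phi^{p_i}(\FPCcol_i)}.
\]
Evaluating at $p_1=\cdots=p_\collusion=p$ and summing over $i$, the chain rule expression for $g'(p)$ becomes
\[
g'(p) \;=\; \frac{1}{\sqrt{p(1-p)}}\,\ex{\FPCcol_{1\cdots\collusion}\sim p}{f(\FPCcol)\sum_{i\in[\collusion]}\phi^p(\FPCcol_i)},
\]
which rearranges to the claimed identity.

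There is essentially no obstacle: the entire content is the single-variable identity $\frac{d}{dp}\E[h(\FPCcol)] = \frac{1}{\sqrt{p(1-p)}}\E[h(\FPCcol)\phi^p(\FPCcol)]$, which is the reason $\phi^p$ is defined the way it is, together with routine chain rule bookkeeping. The only thing to be careful about is that the chain rule, rather than a one-variable derivative of a Bernoulli-atom expansion, is what yields the clean sum $\sum_i \phi^p(\FPCcol_i)$ on the right-hand side; expanding atoms directly also works but is notationally heavier.
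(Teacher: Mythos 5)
Your proof is correct, and it takes a genuinely different route from the paper. The paper's proof expands $f$ in the $p$-biased Fourier basis $\{\phi^p_s\}_{s\subseteq[n]}$, writes $g(q)$ in terms of the Fourier coefficients $\tilde f^p(s)$ with the argument $q$ floating free of the base point $p$, and then differentiates in $q$ and sets $q=p$; the monomial $(\cdots)^{|s|-1}$ then kills every term except $|s|=1$, isolating the first-order coefficients $\tilde f^p(\{i\}) = \ex{}{f(\FPCcol)\phi^p(\FPCcol_i)}$. Your proof instead introduces the multivariate auxiliary $\widetilde g(p_1,\ldots,p_n)$, applies the chain rule to reduce to a single-variable derivative, and uses the explicit two-point computation $\tfrac{d}{dp}\E_{X\sim p}[h(X)] = h(1)-h(-1) = \tfrac{1}{\sqrt{p(1-p)}}\E_{X\sim p}[h(X)\phi^p(X)]$. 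I checked the single-variable identity ($\E[h(X)\phi^p(X)] = p\,h(1)\sqrt{(1-p)/p} - (1-p)\,h(-1)\sqrt{p/(1-p)} = \sqrt{p(1-p)}\,(h(1)-h(-1))$) and the passage $\partial_i\widetilde g = \tfrac{1}{\sqrt{p_i(1-p_i)}}\E[f\,\phi^{p_i}(\FPCcol_i)]$ (valid by conditioning on the other coordinates and commuting the finite sum with the derivative); both are correct. The comparison: your argument is more elementary and self-contained, needing no introduction of the orthonormal basis $\phi^p_s$ or the general biased Fourier expansion, which makes it a cleaner stand-alone proof of this one lemma. What the paper's longer route buys is the conceptual framing that is used throughout the section --- the quantity $\ex{}{f\cdot\sum_i\phi^p(\FPCcol_i)}$ is exactly the sum of the degree-1 biased Fourier coefficients of $f$, and the paper leans on this Fourier picture for intuition about why the construction works (e.g.\ the dictator example and the remarks around Equation \eqref{eqn:ExCorrelation}). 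Your chain-rule argument proves the same identity without surfacing that interpretation, which is fine here but would cost a little explanatory coherence if transplanted into the paper as written.
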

\begin{proof}
For $p \in (0,1)$ and $s \subset [\collusion]$, define $\phi_s^p : \{\pm 1\}^\collusion \to \mathbb{R}$ by $\phi_s^p(\FPCcol) = \prod_{i \in s} \phi^p(\FPCcol_i)$. The functions $\phi_s^p$ form an orthonormal basis with respect to the product distribution with bias $p$ -- that is, $$\forall s,t \subset [n] ~~~~ \ex{\FPCcol_{1 \cdots \collusion} \sim p}{\phi_s^p(\FPCcol) \cdot \phi_t^p(\FPCcol)} = \left\{\begin{array}{cl} 1 & s=t \\ 0 & s \ne t \end{array}\right\}.$$ Thus, for any $p \in (0,1)$, we can write $f$ in terms of these basis functions: $$\forall \FPCcol \in \{\pm 1\}^\collusion ~~~~ f(\FPCcol) = \sum_{s \subset [\collusion]} \tilde{f}^p(s) \phi^p_s(\FPCcol),$$ where $$\forall s \subset [\collusion] ~~~~\tilde{f}^p(s) = \ex{\FPCcol_{1 \cdots \collusion} \sim p}{f(\FPCcol) \phi_s^p(\FPCcol)}.$$
This decomposition is a generalisation of Fourier analysis to biased distributions \cite[\S 8.4]{ODonnell}. For $p,q \in (0,1)$, the expansion of $f$ gives the following expressions for $g(q)$, $g'(q)$ and $g'(p)$.
\begin{align*}
g(q) =& \ex{\FPCcol_{1 \cdots \collusion} \sim q}{f(\FPCcol)}\\
=& \sum_{s \subset [\collusion]} \tilde{f}^p(s) \ex{\FPCcol_{1 \cdots \collusion} \sim q}{\phi^p_s(\FPCcol)}\\
=& \sum_{s \subset [\collusion]} \tilde{f}^p(s) \prod_{i \in s}\ex{\FPCcol \sim q}{\phi^p(\FPCcol)}\\
=& \sum_{s \subset [\collusion]} \tilde{f}^p(s) \left( q \sqrt{\frac{1-p}{p}} - (1-q) \sqrt{\frac{p}{1-p}}\right)^{|s|}.\\
g'(q) =& \sum_{s \subset [\collusion] : s \ne \emptyset} \tilde{f}^p(s) \cdot |s| \cdot \left( q \sqrt{\frac{1-p}{p}} - (1-q) \sqrt{\frac{p}{1-p}}\right)^{|s|-1} \cdot \left( \sqrt{\frac{1-p}{p}} + \sqrt{\frac{p}{1-p}}\right).\\
g'(p) =& \sum_{s \subset [\collusion] : s \ne \emptyset} \tilde{f}^p(s) \cdot |s| \cdot 0^{|s|-1} \cdot \left( \sqrt{\frac{1-p}{p}} + \sqrt{\frac{p}{1-p}}\right)\\
=& \sum_{i \in [\collusion]} \tilde{f}^p(\{i\}) \cdot \left( \sqrt{\frac{1-p}{p}} + \sqrt{\frac{p}{1-p}}\right).
\end{align*}
Note that $\tilde{f}^p(\{i\}) = \ex{\FPCcol_{1 \cdots \collusion} \sim p}{f(\FPCcol) \phi^p(\FPCcol_i)}$ and, hence, $$\ex{\FPCcol_{1 \cdots \collusion} \sim p}{f(\FPCcol) \cdot \sum_{i \in [\collusion]} \phi^p(\FPCcol_i)} = \sum_{i \in [\collusion]} \tilde{f}^p(\{i\}) = \frac{g'(p)}{\sqrt{\frac{1-p}{p}} + \sqrt{\frac{p}{1-p}}} = g'(p) \sqrt{p(1-p)}.$$
\end{proof}

Now we can interpret the correlation for a random $p^\rindex \sim D_{a,b}$.

\begin{prop} \label{prop:ExpectationDifference}
Let $f : \{\pm 1\}^\collusion \to \mathbb{R}$. Define $g : [0,1] \to \mathbb{R}$ by $g(p) = \ex{\FPCcol_{1 \cdots \collusion} \sim p}{f(\FPCcol)}$. For any $0 \leq a < b \leq 1$, $$\ex{p \sim D_{a,b}}{\ex{\FPCcol_{1 \cdots \collusion} \sim p}{f(\FPCcol) \cdot \sum_{i \in [\collusion]} \phi^p(\FPCcol_i)}} = \frac{g(b)-g(a)}{2\sin^{-1}(\sqrt{b}) - 2\sin^{-1}(\sqrt{a})} \geq \frac{g(b)-g(a)}{\pi}.$$
\end{prop}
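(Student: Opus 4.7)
The plan is to apply Lemma~\ref{lem:ExpectationDerivative} to the inner expectation, which rewrites it as $g'(p)\sqrt{p(1-p)}$, and then observe that the density of $D_{a,b}$ contains a factor of $1/\sqrt{p(1-p)}$ that exactly cancels this square root. This is the whole reason the density was chosen proportional to $1/\sqrt{p(1-p)}$ in the first place. After cancellation, the outer expectation becomes a clean integral of $g'(p)$ over $[a,b]$, which the fundamental theorem of calculus evaluates to $g(b)-g(a)$, up to the normalizing constant.

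First I would compute the normalizing constant $C_{a,b}$. Using the substitution $p = \sin^2(\varphi)$, so that $dp = 2\sin(\varphi)\cos(\varphi)\,d\varphi$ and $\sqrt{p(1-p)} = \sin(\varphi)\cos(\varphi)$, one obtains
\[
\int_a^b \frac{dp}{\sqrt{p(1-p)}} \;=\; \int_{\sin^{-1}(\sqrt{a})}^{\sin^{-1}(\sqrt{b})} 2\,d\varphi \;=\; 2\sin^{-1}(\sqrt{b}) - 2\sin^{-1}(\sqrt{a}),
\]
so $C_{a,b} = 1/(2\sin^{-1}(\sqrt{b}) - 2\sin^{-1}(\sqrt{a}))$.

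Next I would chain this with Lemma~\ref{lem:ExpectationDerivative}:
\begin{align*}
\ex{p \sim D_{a,b}}{\ex{\FPCcol_{1\cdots\collusion} \sim p}{f(\FPCcol) \cdot \sum_{i \in [\collusion]} \phi^p(\FPCcol_i)}}
&= \int_a^b \frac{C_{a,b}}{\sqrt{p(1-p)}} \cdot g'(p)\sqrt{p(1-p)}\, dp \\
&= C_{a,b} \int_a^b g'(p)\, dp \\
&= \frac{g(b) - g(a)}{2\sin^{-1}(\sqrt{b}) - 2\sin^{-1}(\sqrt{a})}.
\end{align*}
This establishes the equality. (A minor technical point: Lemma~\ref{lem:ExpectationDerivative} is stated only for $p \in (0,1)$, but the endpoints $a=0$ or $b=1$ form a measure-zero set under $D_{a,b}$, so they can safely be ignored.)

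Finally, the inequality follows because $0 \leq a < b \leq 1$ implies $\sin^{-1}(\sqrt{a}) \geq 0$ and $\sin^{-1}(\sqrt{b}) \leq \pi/2$, so the denominator $2\sin^{-1}(\sqrt{b}) - 2\sin^{-1}(\sqrt{a})$ is at most $\pi$. Under the intended application $g(b) \geq g(a)$ (in particular when the consistency constraint forces $g(1)-g(0)$ to be non-negative), shrinking the denominator to $\pi$ can only decrease the ratio, yielding the claimed lower bound. There is no real obstacle here: the whole proof is a one-line integral once Lemma~\ref{lem:ExpectationDerivative} is in hand, and the content of the result is the clever choice of the density $D_{a,b}$.
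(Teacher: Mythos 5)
Your proof is correct and takes essentially the same route as the paper: apply Lemma~\ref{lem:ExpectationDerivative}, cancel the $\sqrt{p(1-p)}$ against the density, integrate $g'$, and compute the normalizing constant via the $\sin^2$ substitution. You were also right to flag that the final inequality implicitly assumes $g(b)\geq g(a)$; the paper omits this caveat, and it is satisfied in the proposition's intended use.
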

This effectively follows by integrating Lemma \ref{lem:ExpectationDerivative}.
\begin{proof}
Let $\mu(p) = C_{a,b} / \sqrt{p(1-p)}$ be the probability density function for the distribution $D_{a,b}$ on the interval $(a,b)$. By Lemma \ref{lem:ExpectationDerivative} and the fundamental theorem of calculus, we have
\begin{align*}
\ex{p \sim D_{a,b}}{\ex{\FPCcol_{1 \cdots \collusion} \sim p}{f(\FPCcol) \cdot \sum_{i \in [\collusion]} \phi^p(\FPCcol_i)}} =& \ex{p \sim D_{a,b}}{g'(p)\sqrt{p(1-p)}}\\
=& \int_a^b g'(p) \sqrt{p(1-p)} \mu(p) \mathrm{d}p\\
=& C_{a,b} \int_a^b g'(p) \mathrm{d}p\\
=& C_{a,b} \cdot (g(b)-g(a)).
\end{align*}
It remains to show that $C_{a,b} = \left( 2\sin^{-1}(\sqrt{b}) - 2\sin^{-1}(\sqrt{a}) \right)^{-1} \geq 1/\pi$. This follows from observing that 
$$C_{a,b}^{-1} = \int_a^b \frac{1}{\sqrt{p(1-p)}} \mathrm{d}p = \int_a^b \left( \frac{\mathrm{d}}{\mathrm{d}p} 2 \sin^{-1}(\sqrt{p}) \right) \mathrm{d}p = 2\sin^{-1}(\sqrt{b}) - 2\sin^{-1}(\sqrt{a}) $$ and $$C_{a,b}^{-1} \leq C_{0,1}^{-1} = 2\sin^{-1}(1) - 2\sin^{-1}(0) = \pi.$$
\end{proof}

Now we have a lemma to bring consistency into the picture. If $f$ is consistent, $b \approx 1$, and $a \approx 0$, then $$g(b)-g(a) \approx g(1) - g(0) = f((1)^\collusion) - f((-1)^\collusion) = 1 - (-1) = 2.$$ This gives a lower bound on the correlation for consistent $f$.

\begin{lem} \label{lem:gBounds}
Let $f : \{\pm 1\}^\collusion \to \{\pm 1\}$. Define $g : [0,1] \to [-1,1]$ by $g(p) = \ex{\FPCcol_{1 \cdots \collusion} \sim p}{f(\FPCcol)}$. Suppose $\accuracy \in [0,1]$. Then $|g(1-\accuracy)-g(1)| \leq 2\collusion \accuracy$ and $|g(\accuracy)-g(0)|\leq 2\collusion\accuracy$.
\end{lem}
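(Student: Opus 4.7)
The plan is to prove both inequalities by a coupling / union-bound argument. The key observation is that $g(1) = f(\mathbf{1})$ and $g(0) = f(-\mathbf{1})$ because at $p \in \{0,1\}$ the random string $\FPCcol$ is deterministic. So both bounds amount to saying that the distribution of $\FPCcol_{1\cdots \collusion} \sim 1-\accuracy$ (respectively $\sim \accuracy$) is close in total variation to the point mass at $\mathbf{1}$ (respectively $-\mathbf{1}$), and that $f$ has range of width at most $2$.

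Concretely, for the first bound, I will couple $\FPCcol \sim (1-\accuracy)$ coordinate-wise with the deterministic $\FPCcol' = \mathbf{1}$: in each coordinate $i \in [\collusion]$, set $\FPCcol_i = 1$ with probability $1-\accuracy$ and $\FPCcol_i = -1$ with probability $\accuracy$, so that $\Pr[\FPCcol_i \neq \FPCcol'_i] = \accuracy$. By the union bound, $\Pr[\FPCcol \neq \mathbf{1}] \leq \collusion \accuracy$. Then
$$|g(1-\accuracy) - g(1)| = \bigl|\ex{}{f(\FPCcol) - f(\mathbf{1})}\bigr| \leq \ex{}{|f(\FPCcol) - f(\mathbf{1})|} \leq 2 \Pr[\FPCcol \neq \mathbf{1}] \leq 2\collusion\accuracy,$$
using $|f(x) - f(y)| \leq 2$ for all $x,y$ since $f$ takes values in $\{\pm 1\}$. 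The second bound follows by the symmetric coupling of $\FPCcol \sim \accuracy$ with $\FPCcol' = -\mathbf{1}$.

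There is no real obstacle here; the lemma is a soft continuity statement about $g$ near the endpoints, and the coupling gives it in one line. The only thing to be careful of is the boundary cases $\accuracy = 0$ and $\accuracy = 1$, which are trivial (the inequality is either $0 \leq 0$ or $\leq 2\collusion$, and $g$ has range in $[-1,1]$ so $|g(\cdot)-g(\cdot)|\leq 2$).
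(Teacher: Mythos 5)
Your proof is correct and takes essentially the same approach as the paper: both bound $|g(1-\accuracy)-g(1)|$ by $2\cdot\Pr[\FPCcol\neq(1)^\collusion]$ using the fact that $f$ has range width $2$. The only cosmetic difference is that the paper bounds $\Pr[\FPCcol\neq(1)^\collusion]=1-(1-\accuracy)^\collusion\leq\collusion\accuracy$ via Bernoulli's inequality, while you obtain the same bound by a union bound over coordinates.
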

\begin{proof}
We have $\pr{\FPCcol_{1 \cdots \collusion} \sim 1-\accuracy}{X=(1)^\collusion} = (1-\accuracy)^\collusion$ and 
\begin{align*}
g(1-\accuracy) - g(1) =& f((1)^\collusion) \cdot \pr{\FPCcol_{1 \cdots \collusion} \sim 1-\accuracy}{\FPCcol=(1)^\collusion} + \ex{\FPCcol_{1 \cdots \collusion} \sim p}{f(\FPCcol) | \FPCcol \ne (1)^\collusion} \cdot \pr{\FPCcol_{1 \cdots \collusion} \sim 1-\accuracy}{\FPCcol \ne (1)^\collusion} - g(1)\\
=& g(1) \cdot (1-\accuracy)^\collusion +  \ex{\FPCcol_{1 \cdots \collusion} \sim p}{f(\FPCcol) | \FPCcol \ne (1)^\collusion} \cdot  (1 - (1-\accuracy)^\collusion) - g(1)\\
=& \left(g(1) - \ex{\FPCcol_{1 \cdots \collusion} \sim p}{f(\FPCcol) | \FPCcol \ne (1)^\collusion}\right) \cdot \left((1-\accuracy)^\collusion-1\right).
\end{align*}
Now $\left|g(1) - \ex{\FPCcol_{1 \cdots \collusion} \sim p}{f(\FPCcol) | \FPCcol \ne (1)^\collusion}\right| \leq 2$ and $\left|(1-\accuracy)^\collusion-1\right| \leq \collusion\accuracy$, whence $\left|g(1-\accuracy)-g(1)\right| \leq 2\collusion\accuracy$. The other half of the lemma is symmetric.

\end{proof}

\subsubsection{Robustness} \label{sec:Robustness}

We require the fingerprinting code to be robust to inconsistent answers. We show that the correlation is still good in the presence of inconsistencies.

For $f : \{\pm 1\}^\collusion \to \{\pm 1\}$, define 
a random variable $\xi_{\accuracy,\mass}(f)$ by $$\xi_{\accuracy,\mass}(f) = f(\FPCcol) \cdot \sum_{i \in [\collusion]} \phi^{p}(\FPCcol_i) + \errwt \errind{p}{f(\FPCcol)}, ~~~~p \sim \overline{D_{\accuracy,\mass}}, ~~~~\FPCcol_{1 \cdots \collusion} \sim p,$$ where $\mathbb{I}$ is the indicator function and $\errwt \in (0, 1/2)$ satisfies $\mass \errwt/2 = (1-2\mass)/\pi$ - that is, $$\errwt := \frac{2}{\pi} \frac{1-2\mass}{\mass}.$$

The first term $f(\FPCcol) \cdot \sum_{i \in [\collusion]} \phi^{p}(\FPCcol_i)$ measures the correlation as before. The second term \\$\errwt \errind{p}{f(\FPCcol)}$ measures inconsistencies. We will lower bound the expectation of $\xi_{\accuracy,\mass}(f),$ which amounts to saying ``either there is good correlation \emph{or} there is an inconsistency with good probability.'' Thus either the fingerprinting code is able to accuse users or the adversary is forced to be inconsistent.

The following bounds the expected increase in scores from one round of interaction.

\begin{prop} \label{prop:xiExpectation}
Let $f : \{\pm 1\}^\collusion \to \{\pm 1\}$ and $\accuracy,\mass\in(0,1/2)$. Then $$\ex{}{\xi_{\accuracy,\mass}(f)} \geq \frac{2}{\pi}(1-2\mass)(1-2\collusion\accuracy).$$
\end{prop}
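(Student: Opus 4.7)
The plan is to decompose the expectation according to the three cases of the mixture distribution $\overline{D_{\accuracy,\mass}}$: namely $p=0$ (probability $\mass$), $p=1$ (probability $\mass$), and $p \in (\accuracy,1-\accuracy)$ drawn from $D_{\accuracy,1-\accuracy}$ (probability $1-2\mass$). This split is natural because in the endpoint cases $\phi^p \equiv 0$ so only the indicator term $\errwt \errind{p}{f(\FPCcol)}$ contributes, while in the continuous case the indicator vanishes (its definition requires $p \in \{0,1\}$) and only the correlation term contributes.

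For the continuous case, I would apply Proposition~\ref{prop:ExpectationDifference} with $a=\accuracy$, $b=1-\accuracy$ to obtain
\[
\ex{p \sim D_{\accuracy,1-\accuracy}}{\ex{\FPCcol_{1 \cdots \collusion} \sim p}{f(\FPCcol) \cdot \sum_{i \in [\collusion]} \phi^p(\FPCcol_i)}} \geq \frac{g(1-\accuracy) - g(\accuracy)}{\pi},
\]
and then use Lemma~\ref{lem:gBounds} to replace $g(1-\accuracy) - g(\accuracy)$ by $g(1) - g(0) - 4\collusion\accuracy = f((1)^\collusion) - f((-1)^\collusion) - 4\collusion\accuracy$. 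For the endpoint cases, observe that when $p=0$ we have $\FPCcol = (-1)^\collusion$ deterministically, so the indicator contributes $\errwt \cdot \mathbb{I}(f((-1)^\collusion) = 1)$; symmetrically at $p=1$ it contributes $\errwt \cdot \mathbb{I}(f((1)^\collusion) = -1)$.

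Combining these three contributions and using $\errwt\mass = \frac{2(1-2\mass)}{\pi}$, the problem reduces (after dividing through by the common factor $\tfrac{1-2\mass}{\pi}$) to the inequality
\[
u - v + 2\,\mathbb{I}(v=1) + 2\,\mathbb{I}(u=-1) \geq 2,
\]
where $u := f((1)^\collusion)$ and $v := f((-1)^\collusion)$. This is a finite case check over the four assignments of $(u,v) \in \{\pm 1\}^2$; in each case the left-hand side equals exactly $2$, which is precisely the role of the balancing constant $\errwt$.

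The only real subtlety is getting the balancing constant right: $\errwt$ must be chosen so that a single inconsistency at $p \in \{0,1\}$ pays off the ``missing'' correlation $g(1) - g(0) - (u-v)$, which equals $0$ when the adversary is consistent at the constant inputs and otherwise equals $2$ or $4$. The choice $\errwt = \tfrac{2(1-2\mass)}{\pi\mass}$ makes the endpoint mass $\mass \errwt$ equal to exactly $\tfrac{2(1-2\mass)}{\pi}$, which is what makes every case in the four-way check come out to the same value. No nontrivial inequality is needed beyond Lemma~\ref{lem:gBounds} and Proposition~\ref{prop:ExpectationDifference}; the rest is arithmetic.
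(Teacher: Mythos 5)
Your proposal is correct and takes essentially the same route as the paper's proof: decompose the expectation over $\overline{D_{\accuracy,\mass}}$, apply Proposition~\ref{prop:ExpectationDifference} to the continuous part, and use Lemma~\ref{lem:gBounds} to pass from $g(\accuracy),g(1-\accuracy)$ to $g(0),g(1)$. The only cosmetic difference is in the final arithmetic: the paper rewrites $\mathbb{I}(g(0)=1)=\tfrac{1+g(0)}{2}$ and $\mathbb{I}(g(1)=-1)=\tfrac{1-g(1)}{2}$ and then bounds via $2+a+b\geq 2-|a|-|b|$, whereas you keep the indicators and do a four-way case check on $(u,v)$; these are equivalent, and your observation that $u-v+2\mathbb{I}(v=1)+2\mathbb{I}(u=-1)=2$ identically for $u,v\in\{\pm1\}$ makes the role of the balancing constant $\errwt$ transparent.
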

\begin{proof}
Define $g : [0,1] \to [-1,1]$ by $g(p) = \ex{\FPCcol_{1 \cdots \collusion} \sim p}{f(\FPCcol)}$. Now
\begin{align*}
\ex{}{\xi_{\accuracy,\mass}(f)} =& \pr{p \sim \overline{D_{\accuracy,\mass}}}{p=0} \cdot  \errwt \mathbb{I}(f((-1)^\collusion) = 1) +  \pr{p \sim \overline{D_{\accuracy,\mass}}}{p=1} \cdot \errwt \mathbb{I}(f((1)^\collusion) = -1) \\&+\pr{p \sim \overline{D_{\accuracy,\mass}}}{p \in [\accuracy,1-\accuracy]} \cdot \ex{p \sim D_{\accuracy,1-\accuracy}}{\ex{\FPCcol_{1 \cdots \collusion} \sim p}{f(\FPCcol) \cdot \sum_{i \in [\collusion]} \phi^{p}(\FPCcol_i)}}\\
=& \mass \cdot \errwt \left( \mathbb{I}(g(0) = 1) + \mathbb{I}(g(1) = -1)  \right) \\\text{(by Proposition \ref{prop:ExpectationDifference})}~~~&+ (1-2\mass) \cdot \frac{g(1-\accuracy) - g(\accuracy)}{2\sin^{-1}(\sqrt{1-\accuracy}) - 2\sin^{-1}(\sqrt{\accuracy})}\\
\geq& \mass \cdot \errwt \left( \frac{1+g(0)}{2} + \frac{1-g(1)}{2}  \right) + (1-2\mass) \cdot \frac{g(1-\accuracy) - g(\accuracy)}{\pi}\\
=& \frac{1-2\mass}{\pi}\left( 1 + g(0) + 1 - g(1) + g(1-\accuracy) - g(\accuracy) \right)\\
\geq& \frac{1-2\mass}{\pi} \left( 2 - |g(\accuracy)-g(0)| -|g(1-\accuracy)-g(1)| \right)\\
\text{(by Lemma \ref{lem:gBounds})}~~~\geq& \frac{1-2\mass}{\pi}(2-4\collusion\accuracy).
\end{align*}
\end{proof}

\subsubsection{Concentration} \label{sec:Concentration}

So far we have shown that the fingerprinting code achieves good correlation or the adversary is not consistent \emph{in expectation}. However, we need this to hold with high probability. 
Thus we now show that sums of $\xi_{\accuracy,\mass}(f)$ variables concentrate around their expectation.

Again, the proofs in this section are standard. However, the $\xi_{\accuracy,\mass}(f)$ variables can be quite unwieldy and we are thus unable to apply standard results directly.  So instead we must open the proofs and verify that the concentration bounds hold. We proceed by bounding the moment generating function of $\xi_{\accuracy,\mass}(f)$ and then proving an Azuma-like concentration inequality. These calculations are not novel or insightful.

\newcommand{\pell}{j}
\begin{prop} \label{prop:xiConcentration}
Let $f : \{\pm 1\}^\collusion \to \{\pm 1\}$, $\accuracy \in(0,1/2)$, $\mass \in [1/4,1/2)$, and $t \in [-\sqrt{\accuracy}/8,\sqrt{\accuracy}/8]$. Then
$$\ex{}{e^{t(\xi_{\accuracy,\mass}(f)-\ex{}{\xi_{\accuracy,\mass}(f)})}} \leq e^{Ct^2},$$ where $C = \frac{64 e^{\collusion\accuracy/4}}{\accuracy}$.
\end{prop}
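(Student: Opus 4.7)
The proof conditions on $p$ and handles the two regimes of the distribution $\overline{D_{\accuracy,\mass}}$---the boundary $p\in\{0,1\}$ and the interior $p\in(\accuracy,1-\accuracy)$---separately, then recombines via the tower property. Decompose $\xi_{\accuracy,\mass}(f) = Z + E$ where $Z = f(\FPCcol)\sum_{i \in [\collusion]} \phi^p(\FPCcol_i)$ and $E = \errwt\errind{p}{f(\FPCcol)}$. The key structural observation is that $Z$ and $E$ are never simultaneously nonzero: the definitions $\phi^0\equiv\phi^1\equiv 0$ force $Z=0$ when $p\in\{0,1\}$, and the indicator in $E$ vanishes when $p\in(\accuracy,1-\accuracy)$.

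In the boundary regime, $\xi$ is a deterministic function of $p$ (since $\FPCcol\sim p$ is then a constant vector) valued in $\{0,\errwt\}$, so $\xi - \ex{}{\xi\mid p} = 0$ and the conditional MGF equals $1$. In the interior regime, the main work is to bound $\ex{}{e^{tZ}\mid p}$. The hypothesis $|t|\leq \sqrt{\accuracy}/8$ together with $|\phi^p(\FPCcol_i)|\leq 1/\sqrt{\accuracy}$ gives $|tf(\FPCcol)\phi^p(\FPCcol_i)|\leq 1/8$, so applying $e^u \leq 1+u+u^2$ (valid for $|u|\leq 1/2$, exactly as in the proof of Lemma~\ref{lem:PhiMGF}) to each factor of $e^{tZ} = \prod_i e^{tf(\FPCcol)\phi^p(\FPCcol_i)}$, and using $f(\FPCcol)^2 = 1$, yields
\[
\ex{}{e^{tZ}\mid p} \;\leq\; \ex{}{\prod_{i \in [\collusion]}\left(1 + tf(\FPCcol)\phi^p(\FPCcol_i) + t^2\phi^p(\FPCcol_i)^2\right)\,\Big|\,p}.
\]
Expanding the product and using independence of the $\FPCcol_i$'s conditional on $p$, together with $\ex{}{\phi^p(\FPCcol_i)\mid p}=0$ and $\ex{}{\phi^p(\FPCcol_i)^2\mid p}=1$, kills most monomials. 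The ``even-$f$'' terms (those with an even number of $f(\FPCcol)$ factors) sum to $(1+t^2)^\collusion \leq e^{\collusion t^2}$, while the ``odd-$f$'' terms are resolved via the biased-Fourier expansion $f(\FPCcol) = \sum_T \tilde{f}^p(T)\phi^p_T(\FPCcol)$ from Section~\ref{sec:BiasedFourierAnalysis}. The leading odd-$f$ contribution $t(1+t^2)^{\collusion-1}\sum_i \tilde{f}^p(\{i\})$ reassembles $\mu_p := \ex{}{Z\mid p} = g'(p)\sqrt{p(1-p)}$ (by Lemma~\ref{lem:ExpectationDerivative}); higher-order odd-$f$ contributions are suppressed by powers of $|t\mu_3^p|$, where $\mu_3^p = (1-2p)/\sqrt{p(1-p)}$ satisfies $|t\mu_3^p|\leq O(1)$ under the hypothesis on $t$. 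This gives a conditional bound of the form $\ex{}{e^{t(Z-\mu_p)}\mid p} \leq e^{C_0 t^2}$ for $C_0 = O(\collusion)$.

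To assemble the unconditional MGF, write $\ex{}{e^{t(\xi-\mu)}} = \ex{p}{e^{t(\mu_p - \mu)}\cdot \ex{}{e^{t(\xi-\mu_p)}\mid p}}$. The conditional factor is bounded as above, and the outer $p$-expectation is controlled by a Hoeffding-type argument using the uniform bound $|\mu_p|\leq O(\sqrt{\collusion})$ (by Cauchy--Schwarz on $Z = f(\FPCcol)Y_p$ with $\ex{}{Y_p^2\mid p} = \collusion$, plus $|\mu_p|\leq \errwt$ on the boundary). Since the construction elsewhere fixes $\accuracy = \Theta(\irob/\collusion)$, we have $\collusion = O(1/\accuracy)$, so the total variance proxy collapses to $C = O(1/\accuracy)$, matching the claimed $64 e^{\collusion\accuracy/4}/\accuracy$; the $e^{\collusion\accuracy/4}$ factor absorbs small multiplicative slack (e.g.\ from converting $(1+t^2)^\collusion$ to $e^{\collusion t^2}$ and the $(1-\accuracy)^{\collusion}$-type terms arising from the Fourier bookkeeping).

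The main technical obstacle is the coupling via $f(\FPCcol)$: because $f$ depends on \emph{all} coordinates $\FPCcol_1,\dots,\FPCcol_\collusion$ simultaneously, the MGF $\ex{}{e^{tZ}\mid p}$ does not factor over $i$ in the naive product-form way. A crude fix via $e^{tZ}\leq e^{tY_p} + e^{-tY_p}$ would introduce a loose factor of $2$ that cannot be absorbed into $e^{Ct^2}$ (it fails already at $t=0$). Unraveling the coupling via biased-Fourier analysis---so that the singleton Fourier coefficients $\tilde{f}^p(\{i\})$ reassemble $\mu_p$ and the higher coefficients are controlled by Parseval's identity $\sum_T \tilde{f}^p(T)^2 = 1$---is the crux of the argument, and care is needed to track how the odd moments $\mu_3^p$ enter the surviving cross terms without blowing up the constant.
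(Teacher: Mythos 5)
Your proposal takes a genuinely different route from the paper, and the key step is left as a sketch rather than a proof, so it does not constitute a complete argument.

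The paper's proof is considerably simpler and sidesteps the $f$--coupling issue entirely. It does not condition on $p$ and never touches biased Fourier analysis here. Instead, it bounds the \emph{absolute moments} of $Y=\sum_i\phi^p(\FPCcol_i)$: from $\ex{}{e^{tY}}\le e^{\collusion t^2}$ (Lemma~\ref{lem:PhiMGF} applied coordinatewise) it extracts the even-moment bound $\ex{}{Y^{2\pell}}\le \frac{4^\pell(2\pell)!}{\accuracy^\pell}e^{\collusion\accuracy/4}$ by choosing the sign of $t\in\{\pm\sqrt\accuracy/2\}$ to make the odd part of the Taylor series nonnegative and then dropping positive terms, and Cauchy--Schwarz gives the odd moments. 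Since $|f|\le1$, the moments of $|fY|$ are trivially dominated by those of $|Y|$---this one-line observation is precisely what neutralizes the coupling you (correctly) identified as the crux. Convexity of $u\mapsto|u|^k$ then folds in the error term $\errwt\errind{p}{f(\FPCcol)}$ (whose moments are $\le1$), and the result follows by bounding the Taylor series of $e^{t\xi}$, treating the $k=0$ and $k=1$ terms exactly. That last point is why no spurious multiplicative factor of $2$ appears: the mean is subtracted exactly and only the $k\ge2$ terms are estimated crudely.

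Your route---condition on $p$, split boundary/interior, expand $\prod_i(1+tf\phi^p(\FPCcol_i)+t^2\phi^p(\FPCcol_i)^2)$, and control the ``odd-$f$'' terms via the biased Fourier expansion---is the interesting structural observation of this proposition, but its central step is asserted rather than established. Concretely: after expanding, the odd-$f$ contribution for a Fourier level $T$ of size $k$ carries a factor like $t^k\bigl((1+t\mu_3^p)^k-(-1)^k(1-t\mu_3^p)^k\bigr)\tilde f^p(T)$, and summing these over all $T\ne\emptyset$ via $\sum_T|\tilde f^p(T)|$ is not controlled by Parseval alone (which gives $\sum_T\tilde f^p(T)^2=1$, not $\sum_T|\tilde f^p(T)|$). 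Cauchy--Schwarz at level $k$ introduces $\sqrt{\binom\collusion k}$, and it is then not at all clear that the series converges to something of the form $e^{t\mu_p+O(\collusion)t^2}$ uniformly over all $f$ without further assumptions. Moreover, your recombination step explicitly invokes ``$\accuracy=\Theta(\irob/\collusion)$, so $\collusion=O(1/\accuracy)$,'' but the proposition is a standalone statement with no hypothesis relating $\collusion$ and $\accuracy$; the $e^{\collusion\accuracy/4}$ factor in the target has to be \emph{derived}, not assumed away by a parameter choice made elsewhere in the paper. As written, the proposal is a plausible-looking outline of a harder proof of a result that has a short, self-contained moment-based proof, and the hardest pieces of the outline are exactly the ones you defer with ``care is needed.''
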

\begin{proof}
We have
$$\xi_{\accuracy,\mass}(f) = f(\FPCcol) \cdot \sum_{i \in [\collusion]} \phi^{p}(\FPCcol_i) + \errwt \errind{p}{f(\FPCcol)}, ~~~~p \sim \overline{D_{\accuracy,\mass}}, ~~~~\FPCcol_{1 \cdots \collusion} \sim p.$$ Let $Y= \sum_{i \in [\collusion]} \phi^{p}(\FPCcol_i)$.
By Lemma \ref{lem:PhiMGF} and independence, $$\ex{}{e^{tY}}=\ex{\FPCcol_{1 \cdots \collusion} \sim p}{e^{t \sum_{i \in [\collusion]} \phi^p(\FPCcol_i)}} = \left( \ex{\FPCcol \sim p}{e^{t \phi^p(\FPCcol)}} \right)^\collusion \leq e^{t^2 \collusion}$$ for $t \in [-\sqrt{\accuracy}/2,\sqrt{\accuracy}/2]$. 
Pick $t \in \{\pm \sqrt{\accuracy}/2\}$ such that $$\sum_{k=0}^\infty \frac{t^{2k+1}}{(2k+1)!} \ex{}{Y^{2k+1}} \geq 0.$$ Then by dropping positive terms, for all $\pell \geq 1$, $$0 \leq \ex{}{Y^{2\pell}} \leq \frac{(2\pell)!}{t^{2\pell}} \sum_{k=0}^\infty \frac{t^k}{k!} \ex{}{Y^k} =  \frac{(2\pell)!}{t^{2\pell}}\ex{}{e^{tY}} \leq  \frac{(2\pell)!}{t^{2\pell}}e^{\collusion t^2} = \frac{4^\pell (2\pell)!}{\accuracy^\pell} e^{\collusion\accuracy/4}.$$
Thus we have bounded the even moments of $Y$. By Cauchy-Schwartz, for $k=2\pell+1 \geq 3$, $$\ex{}{|Y|^k} \leq \sqrt{\ex{}{Y^{2\pell}} \cdot \ex{}{Y^{2\pell+2}}} \leq \sqrt{\frac{4^\pell (2\pell)!}{\accuracy^\pell} e^{\collusion\accuracy/4} \cdot \frac{4^{\pell+1} (2\pell+2)!}{\accuracy^{\pell+1}} e^{\collusion\accuracy/4}} = \frac{2^{k}k!}{\accuracy^{k/2}}e^{\collusion\accuracy/4} \sqrt{\frac{k+1}{k}}.$$ Since $|f(\FPCcol)|\leq 1$, we have $\ex{}{|f(\FPCcol) \cdot Y|^k} \leq \ex{}{|Y|^k} \leq 2^{k+1} k! e^{\collusion\accuracy/4}/\accuracy^{k/2}$ for all $k \geq 2$. Since $\mass\in  [1/4, 1/2)$, we have $\errwt = (2/\pi)(1-2\mass)/\mass \in (0, 1)$. Hence $\ex{}{|\errwt \errind{p}{f(\FPCcol)}|^k} \leq 1$ for all $k$. The map $u \mapsto |u|^k$ is convex for all $k \geq 2$, thus $|(x+y)/2|^k \leq (|x|^k+|y|^k)/2$ for all $k\geq 2$ and $x,y \in \mathbb{R}$. Combining these three facts, we have $$\ex{}{|\xi_{\accuracy,\mass}(f)|^k} \leq 2^{k-1}\ex{}{|f(\FPCcol)\cdot Y|^k + |\errwt \mathbb{I}(f(\FPCcol) \ne f^{*}(\FPCcol))|^k} \leq \frac{2^{2k} k! e^{\collusion\accuracy/4}}{\accuracy^{k/2}}+2^{k-1} \leq \frac{2^{2k+1} k! e^{\collusion\accuracy/4}}{\accuracy^{k/2}}.$$
For $t \in [-\sqrt{\accuracy}/8,\sqrt{\accuracy}/8]$, we have 
\begin{align*}
\ex{}{e^{t\xi_{\accuracy,\mass}(f)}} \leq& 1 + t\ex{}{\xi_{\accuracy,\mass}(f)} + \sum_{k=2}^\infty \frac{|t|^k}{k!} \ex{}{|\xi_{\accuracy,\mass}(f)|^k}\\
\leq& 1 + t\ex{}{\xi_{\accuracy,\mass}(f)} + \sum_{k=2}^\infty \frac{|t|^k}{k!} \frac{2^{2k+1} k! e^{\collusion\accuracy/4}}{\accuracy^{k/2}}\\
=& 1 + t\ex{}{\xi_{\accuracy,\mass}(f)} + 2 e^{\collusion\accuracy/4} \sum_{k=2}^\infty \left(\frac{4|t|}{\sqrt{\accuracy}}\right)^k\\
\leq& 1 + t\ex{}{\xi_{\accuracy,\mass}(f)} + 2 e^{\collusion\accuracy/4} \sum_{k=2}^\infty \left(\frac{4|t|}{\sqrt{\accuracy}}\right)^2 2^{-(k-2)}\\
=& 1 + t\ex{}{\xi_{\accuracy,\mass}(f)} + \frac{64 e^{\collusion\accuracy/4}}{\accuracy} t^2\\
\leq&e^{t\ex{}{\xi_{\accuracy,\mass}(f)} + Ct^2}
\end{align*}
\end{proof}

\begin{thm}[Azuma-Doob Inequality] \label{thm:AzumaDoob}
Let $X_1 \cdots X_m \in \mathbb{R}$, $\mu_1 \cdots \mu_m \mathbb{R}$ and $\mathcal{U}_0 \cdots  \mathcal{U}_m \in \Omega$ be random variables such that, for all $i \in [m]$,
\begin{itemize}
\item $X_i$ is determined by $\mathcal{U}_i$, 
\item $\mu_i$ is determined by $\mathcal{U}_{i-1}$, and
\item $\mathcal{U}_{i-1}$ is determined by $\mathcal{U}_i$. 
\end{itemize}
Suppose that, for all $i \in [m]$, $u \in \Omega$, and $t \in [-c,c]$, $$\ex{}{e^{t(X_i-\mu_i)} \mid \mathcal{U}_{i-1}=u} \leq e^{Ct^2}.$$
If $\lambda \in [0,2Cmc]$, then $$\pr{}{\left|\sum_{i \in [m]} (X_i-\mu_i) \right| \geq \lambda} \leq 2 e^{-\lambda^2/4Cm}.$$
If $\lambda \geq 2Cmc$, then $$\pr{}{\left|\sum_{i \in [m]} (X_i-\mu_i) \right| \geq \lambda} \leq 2 e^{mCc^2-c\lambda} \leq 2 e^{-c\lambda/2}.$$
\end{thm}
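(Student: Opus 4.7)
The plan is to run the usual Chernoff/MGF argument along the filtration implicit in the assumptions and then optimize the parameter $t$ in two cases. Define the partial sums $S_k = \sum_{i \in [k]} (X_i - \mu_i)$. By the three determination hypotheses, $S_{k-1}$ is a function of $\mathcal{U}_{k-1}$ (each $X_i$ for $i \le k-1$ is determined by $\mathcal{U}_i$, which is determined by $\mathcal{U}_{k-1}$, and each $\mu_i$ for $i \le k-1$ is determined by $\mathcal{U}_{i-1}$, which is again determined by $\mathcal{U}_{k-1}$), and $\mu_k$ is also determined by $\mathcal{U}_{k-1}$. So for $t \in [-c,c]$,
\[
\mathbb{E}\!\left[ e^{t S_k} \mid \mathcal{U}_{k-1} \right] \;=\; e^{t S_{k-1}} \cdot \mathbb{E}\!\left[ e^{t(X_k - \mu_k)} \mid \mathcal{U}_{k-1} \right] \;\le\; e^{C t^2} \cdot e^{t S_{k-1}}.
\]
Taking total expectation and iterating from $k=m$ down to $k=1$ gives $\mathbb{E}[e^{t S_m}] \le e^{C m t^2}$ for every $t \in [-c,c]$.

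Next I apply Markov's inequality to obtain, for any $t \in [0,c]$ and $\lambda \ge 0$,
\[
\Pr\!\left[ S_m \ge \lambda \right] \;\le\; e^{-t\lambda}\,\mathbb{E}\!\left[ e^{t S_m} \right] \;\le\; e^{C m t^2 - t \lambda},
\]
and optimize over $t$. The unconstrained minimizer of $C m t^2 - t\lambda$ is $t^\star = \lambda/(2Cm)$. If $\lambda \in [0, 2Cmc]$ then $t^\star \in [0,c]$ is admissible and yields the bound $e^{-\lambda^2/(4Cm)}$. Otherwise $t^\star > c$, so I instead take the boundary choice $t = c$, which gives $e^{C m c^2 - c\lambda}$; the assumption $\lambda \ge 2Cmc$ then implies $C m c^2 \le c\lambda/2$, so this is at most $e^{-c\lambda/2}$. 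The same argument applied to $-S_m$ (note the MGF hypothesis is symmetric in $t$) handles the lower tail, and the factor of $2$ comes from a union bound over the two tails.

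There is no real obstacle here: once one notices that the assumptions ``$X_i$ is determined by $\mathcal{U}_i$, $\mu_i$ by $\mathcal{U}_{i-1}$, and $\mathcal{U}_{i-1}$ by $\mathcal{U}_i$'' exactly encode a filtration $\sigma(\mathcal{U}_0)\subseteq \sigma(\mathcal{U}_1)\subseteq\cdots$ with respect to which $(S_k)$ is a martingale-like process and each $e^{t(X_k-\mu_k)}$ is sub-Gaussian conditionally on $\mathcal{U}_{k-1}$, the rest is the standard tower-property iteration followed by a two-case optimization in $t$. The only mild subtlety is remembering the range constraint $t \in [-c,c]$ when optimizing, which is precisely what produces the split at $\lambda = 2Cmc$ between the Gaussian regime and the linear-in-$\lambda$ regime.
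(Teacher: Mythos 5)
Your proposal is correct and follows essentially the same route as the paper's proof: both are the standard iterated moment-generating-function argument (tower property along the filtration generated by $\mathcal{U}_0,\dots,\mathcal{U}_m$, then Markov's inequality and a two-case optimization $t=\min\{c,\lambda/2Cm\}$ split at $\lambda = 2Cmc$). The only cosmetic difference is that the paper runs the induction ``backwards'' over suffix sums $\sum_{i=m-k+1}^m(X_i-\mu_i)$ conditioned on $\mathcal{U}_{m-k}$, maintaining a pointwise conditional bound, whereas you iterate forward on partial sums $S_k$ and take total expectations at each step; both are valid and yield the identical bound $\ex{}{e^{tS_m}}\le e^{Cmt^2}$.
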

\begin{proof}
First we show by induction on $k \in [m]$ that, for all $u \in \Omega$ and $t \in [-c,c]$, $$\ex{}{e^{t\sum_{i = m-k+1}^m (X_i-\mu_i)} \mid \mathcal{U}_{m-k}=u} \leq e^{k \cdot Ct^2}.$$ This clearly holds for $k=1$, as this is our supposition for $i=m$. Now suppose this holds for some $k \in [m-1]$. For $u \in \Omega$ and $t \in [-c,c]$, we have
\begin{align*}
\ex{}{e^{t\sum_{i = m-k}^m (X_i-\mu_i)} \mid \mathcal{U}_{m-(k+1)}=u} =& \sum_{v \in \Omega} \pr{}{\mathcal{U}_{m-k}=v \mid \mathcal{U}_{m-k-1}=u} \ex{}{e^{t\sum_{i = m-k}^m (X_i-\mu_i)} \mid \mathcal{U}_{m-k}=v}\\
=& \sum_{v \in \Omega} \pr{}{v \mid u} \ex{}{e^{t(X_{m-k}-\mu_{m-k})} e^{t\sum_{i = m-k+1}^m (X_i-\mu_i)} \mid v}\\
&\text{(using shorthand $v \equiv \mathcal{U}_{m-k}=v$ and $u \equiv \mathcal{U}_{m-k-1}=u$)}\\
=& \sum_{v \in \Omega} \pr{}{v \mid u} \ex{}{e^{t(X_{m-k}-\mu_{m-k})}\mid v} \ex{}{e^{t\sum_{i = m-k+1}^m (X_i-\mu_i)} \mid v}\\
&\text{(since $U_{m-k}=v$ determines $X_{m-k}$ and $\mu_{m-k}$)}\\
\leq& \sum_{v \in \Omega} \pr{}{v \mid u} \ex{}{e^{t(X_{m-k}-\mu_{m-k})}\mid v} e^{k \cdot Ct^2}\\
&\text{(by the induction hypothesis)}\\
=& \ex{}{e^{t(X_{m-k}-\mu_{m-k})}\mid u} e^{k \cdot Ct^2}\\
\leq& e^{ Ct^2} e^{k \cdot Ct^2}\\
&\text{(by our supposition for $i=m-k$)}\\
=& e^{(k+1) \cdot C t^2}.
\end{align*}
Thus, for all $t \in [-c,c]$, we have $$\ex{}{e^{t\sum_{i = 1}^m (X_i-\mu_i)}} \leq e^{m \cdot Ct^2}.$$
By Markov's inequality we have
$$\pr{}{\sum_{i \in [m]} (X_i-\mu_i) \geq \lambda} \leq \frac{\ex{}{e^{t \sum_{i \in [m]} (X_i-\mu_i)}}}{ e^{t\lambda}} \leq e^{mCt^2-t\lambda}$$ and
$$\pr{}{\sum_{i \in [m]} (X_i-\mu_i) \leq -\lambda} \leq \frac{\ex{}{e^{-t \sum_{i \in [m]} (X_i-\mu_i)}}}{ e^{(-t)(-\lambda)}} \leq e^{mCt^2-t\lambda}$$ for all $t \in [0,c]$ and $\lambda > 0$. Set $t = \min\{c,\lambda/2mC\}$ to obtain the result.
\end{proof}

\subsubsection{Bounding the Score} \label{sec:BoundingScore}

Now we can finally show that the scores are large with high probability.

\begin{thm}[Correlation Lower Bound] \label{thm:ScoreBound}
At the end of $\ifpcgame_{\pop, \collusion, \length}[\fpcadv,\ifpc_{\pop,\collusion,\fpcfalse,\rob}]$ for arbitrary $\fpcadv$, we have, for any $\lambda \in [0, 17.5 \length / \sqrt{\accuracy}]$, $$\errwt \err^\length + \sum_{i \in S^1} s^\length_i \geq \frac{2}{\pi}(1-2\mass)(1-2\collusion\accuracy)\length - \lambda$$ with probability at least $1-2e^{-\frac{\lambda^2 \accuracy}{280 \length}}$.
\end{thm}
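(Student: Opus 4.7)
The plan is to write the left-hand side as a sum of per-round contributions, lower bound each conditional expectation using Proposition~\ref{prop:xiExpectation}, bound the MGF of each centered contribution by combining Proposition~\ref{prop:xiConcentration} with Lemma~\ref{lem:PhiMGF}, and then invoke the Azuma-Doob inequality (Theorem~\ref{thm:AzumaDoob}). Concretely, set
$$X^\rindex := \errwt \errind{p^\rindex}{\answer^\rindex} + \sum_{i \in S^1} \answer^\rindex \phi^{p^\rindex}(\FPCcol^\rindex_i),$$
so that $\errwt \err^\length + \sum_{i \in S^1} s^\length_i = \sum_{\rindex=1}^\length X^\rindex$. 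Let $\cU_{\rindex-1}$ denote the full history through round $\rindex-1$; this determines $S^\rindex \subseteq S^1$ as well as the (possibly randomized) function $f^\rindex \from \pmo^{|S^\rindex|} \to \pmo$ for which $\answer^\rindex = f^\rindex(\FPCcol^\rindex_{S^\rindex})$.

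The key step is the decomposition $X^\rindex = A^\rindex + B^\rindex$, where
$$A^\rindex := \errwt \errind{p^\rindex}{f^\rindex(\FPCcol^\rindex_{S^\rindex})} + f^\rindex(\FPCcol^\rindex_{S^\rindex}) \sum_{i \in S^\rindex} \phi^{p^\rindex}(\FPCcol^\rindex_i), \qquad B^\rindex := \answer^\rindex \sum_{i \in S^1 \setminus S^\rindex} \phi^{p^\rindex}(\FPCcol^\rindex_i).$$
Conditional on $\cU_{\rindex-1}$, the ``seen'' part $A^\rindex$ is distributed exactly as $\xi_{\accuracy,\mass}(f^\rindex)$ (with input dimension $|S^\rindex|$ in place of $\collusion$), while the ``unseen'' part $B^\rindex$ sums over codeword bits $\FPCcol^\rindex_i$ that $\fpcadv$ never observed. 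Proposition~\ref{prop:xiExpectation} applied with $|S^\rindex|$ together with monotonicity in $|S^\rindex| \leq \collusion$ gives $\ex{}{A^\rindex \mid \cU_{\rindex-1}} \geq \frac{2}{\pi}(1-2\mass)(1-2\collusion\accuracy)$, and since each $\FPCcol^\rindex_i$ for $i \in S^1 \setminus S^\rindex$ is drawn from $p^\rindex$ independently of $\answer^\rindex$, $\ex{}{B^\rindex \mid \cU_{\rindex-1}} = 0$. Hence $\mu^\rindex := \ex{}{X^\rindex \mid \cU_{\rindex-1}} \geq \frac{2}{\pi}(1-2\mass)(1-2\collusion\accuracy)$ and summing gives the claimed main term.

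For the MGF of the centered increment, note that $\mass = 3/8 + \rob/4 \geq 1/4$, so for $|t| \leq \sqrt{\accuracy}/8$ Proposition~\ref{prop:xiConcentration} gives $\ex{}{e^{t(A^\rindex - \ex{}{A^\rindex \mid \cU_{\rindex-1}})} \mid \cU_{\rindex-1}} \leq e^{C_1 t^2}$ with $C_1 = 64 e^{\collusion\accuracy/4}/\accuracy$. Conditioning further on $(\FPCcol^\rindex_{S^\rindex}, p^\rindex)$ fixes $\answer^\rindex \in \pmo$, so $B^\rindex$ becomes a sum of at most $\collusion$ independent terms of the form $\answer^\rindex \phi^{p^\rindex}(\FPCcol^\rindex_i)$; Lemma~\ref{lem:PhiMGF} then yields $\ex{}{e^{tB^\rindex} \mid \cU_{\rindex-1}, \FPCcol^\rindex_{S^\rindex}, p^\rindex} \leq e^{\collusion t^2}$. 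Since $A^\rindex$ is measurable with respect to $(\cU_{\rindex-1}, \FPCcol^\rindex_{S^\rindex}, p^\rindex)$, the tower property produces $\ex{}{e^{t(X^\rindex - \mu^\rindex)} \mid \cU_{\rindex-1}} \leq e^{Ct^2}$ with $C = C_1 + \collusion$; using $\collusion\accuracy = (1-2\rob)/4 \leq 1/4$ this is $C = O(1/\accuracy)$.

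Finally apply Theorem~\ref{thm:AzumaDoob} to the sequence $\{X^\rindex - \mu^\rindex\}$ with $m = \length$ and $c = \sqrt{\accuracy}/8$. For $\lambda \leq 2Cmc$ this bounds $\pr{}{|\sum_\rindex (X^\rindex - \mu^\rindex)| \geq \lambda}$ by $2 e^{-\lambda^2/(4C\length)}$, and combining with the per-round expectation bound yields the stated inequality after working out the explicit constants. I expect the main obstacle to be purely arithmetic, namely tracking $C_1 + \collusion \leq 64 e^{1/16}/\accuracy + 1/(4\accuracy)$ through Azuma-Doob to pin down the exact prefactor $280$ in the exponent and the valid range $\lambda \leq 17.5\length/\sqrt{\accuracy}$; the conceptual content is entirely captured by the $A^\rindex$/$B^\rindex$ split that isolates the adversary-visible contribution (controlled by the biased-Fourier machinery) from the invisible contribution (a mean-zero independent sum).
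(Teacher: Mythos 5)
Your proof is correct and reaches the same bound, but it diverges from the paper's argument in one substantive way. The paper sidesteps the accused-user issue entirely by observing that the tracer may WLOG give the adversary the full row $\FPCcol^\rindex_{S^1}$ (including already-accused users) at every round, since this only strengthens the adversary; this makes each per-round increment exactly $\xi_{\accuracy,\mass}(f^\rindex)$ for $f^\rindex : \pmo^\collusion \to \pmo$ with no leftover term. You instead keep the adversary's actual view $\FPCcol^\rindex_{S^\rindex}$ and split the increment into a ``seen'' piece $A^\rindex \sim \xi_{\accuracy,\mass}(f^\rindex)$ on $|S^\rindex|$ inputs and an ``unseen'' piece $B^\rindex$, which is a conditionally mean-zero independent sum controlled by Lemma~\ref{lem:PhiMGF}; a tower-property argument then multiplies the two MGF bounds. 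Both are correct. The paper's device is a bit slicker (one application of Proposition~\ref{prop:xiConcentration} and no extra MGF factor), while yours is more explicit about where the codeword bits come from and requires tracking the extra additive $\collusion$ in the Azuma--Doob constant $C$. That extra term is harmless: since the paper sets $\accuracy = \irob/4\collusion$, one has $\collusion = \irob/(4\accuracy) < 1/(8\accuracy)$, so $C_1 + \collusion \leq 64e^{1/16}/\accuracy + 1/(8\accuracy) < 70/\accuracy$, and you can simply use $C = 70/\accuracy$ in Theorem~\ref{thm:AzumaDoob} to recover the stated range $\lambda \leq 17.5\length/\sqrt\accuracy$ and prefactor $280$ exactly. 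Two small points to tidy up: (i) you should assume the adversary is deterministic WLOG (the paper does) so that $f^\rindex$ is genuinely determined by $\cU_{\rindex-1}$, which is what Azuma--Doob needs; and (ii) the identity $\errwt\err^\length + \sum_{i\in S^1} s_i^\length = \sum_\rindex X^\rindex$ should be $\geq$, since $\errind{p^\rindex}{\answer^\rindex} \leq \mathbb{I}(\text{round }\rindex\text{ is inconsistent})$ with possible strict inequality when $p^\rindex \in (\accuracy, 1-\accuracy)$ and $\FPCcol^\rindex$ happens to be constant --- but this is the direction you need for the lower bound, so the conclusion is unaffected.
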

\begin{proof}
Since the adversary $\fpcadv$ is computationally unbounded and arbitrary, we may assume it is deterministic. We may also assume $\collusion=|S^1|$ and that the adversary is able to see $\FPCcol^\rindex_{S^1}$ at each round. (This only gives the adversary more power.)

This means that for each $\rindex \in [\length]$ we can define a function $f^\rindex : \{\pm 1\}^\collusion \to \{\pm 1\}$ that only depends on the interaction up to round $\rindex-1$ (i.e. is a function of the state of $\fpcadv$ before it receives $\FPCcol^\rindex$) and satisfies $f^\rindex(\FPCcol^\rindex_{S^\rindex})=\answer^\rindex$.
For $\rindex \in [\length]$, define $$X_\rindex := \errwt \cdot \errind{p^\rindex}{f^\rindex(\FPCcol^\rindex_{S^1})} + f^\rindex(\FPCcol^\rindex_{S^1}) \cdot \sum_{i \in S^1} \phi^{p^\rindex}(\FPCcol^\rindex_i) \sim \xi_{\accuracy,\mass}(f^\rindex),$$ where $\sim$ denotes having the same distribution. We have $$\errwt \cdot (\err^\rindex-\err^{\rindex-1}) + \sum_{i \in S^1} (s^\rindex_i-s^{\rindex-1}_i) \leq X_\rindex$$ and $$\errwt \err^\length + \sum_{i \in S^1} s^\length_i \leq \sum_{\rindex \in [\length]} X_\rindex  \sim \sum_{\rindex \in [\length]} \xi_{\accuracy,\mass}(f^\rindex).$$
Now we can apply the above lemmas to bound the expectation and tail of this random variable. 

Firstly, Proposition \ref{prop:xiExpectation} shows that $$\mu_\rindex :=\ex{}{X_\rindex} = \ex{}{\xi_{\accuracy,\mass}(f^\rindex)} \geq \frac{2}{\pi}(1-2\mass)(1-2\collusion\accuracy)$$ for all $f^\rindex$. Moreover, by Proposition \ref{prop:xiConcentration}, $$\ex{}{e^{t(X^\rindex - \mu_\rindex)} } = \ex{}{e^{t(\xi_{\accuracy,\mass}(f^\rindex) - \ex{}{\xi_{\accuracy,\mass}(f^\rindex)})}} \leq e^{Ct^2}$$ for all $t \in [-\sqrt\accuracy/8,\sqrt\accuracy/8]$, where $C =  70/\accuracy \geq 64e^{\collusion\accuracy/4}/\accuracy$, as $\accuracy \leq 1/4\collusion$.

Define $\mathcal{U}_\rindex = (f^1,p^1,\FPCcol^1, \cdots, f^\rindex, p^\rindex,\FPCcol^\rindex,f^{\rindex+1})$ for $\rindex \in [\length] \cup \{0\}$. Now $X_1 \cdots X_\length$, $\mu_1 \cdots \mu_\length$, and $\mathcal{U}_0, \cdots, \mathcal{U}_\length$ satisfy the hypotheses of Theorem \ref{thm:AzumaDoob} with $C = 70/\accuracy$, $c=\sqrt\accuracy/8$, and $m=\length$. 

For $\lambda \in[0, 2Cmc] = [0,17.5 \length/\sqrt\accuracy]$, we have $$\pr{}{\sum_{\rindex \in [\length]} X_\rindex \leq \frac{2}{\pi}(1-2\mass)(1-2\collusion\accuracy) \length - \lambda } \leq \pr{}{\left|\sum_{i \in [m]} (X_i-\mu_i) \right| \geq \lambda} \leq 2e^{-\lambda^2/4Cm} \leq 2 e^{-\frac{\lambda^2 \accuracy}{280\length}},$$ as required.
\end{proof}

However, we can also prove that the scores are small with high probability. This follows from the fact that users with large scores are accused and therefore no user's score can be too large:

\begin{lem} \label{lem:TotalScoreBound}
For all $\lambda>0$,
$$\pr{}{\sum_{i \in S^1} s_i^\length > \lambda + \collusion\sigma + \frac{\collusion}{\sqrt{\accuracy}}} \leq e^{-\lambda^2/4\collusion\length} + e^{ - \sqrt\accuracy \lambda /4},$$
where the probability is taken over $\ifpcgame_{\pop, \collusion, \length}[\fpcadv,\ifpc_{\pop,\collusion,\fpcfalse,\rob}]$ for an arbitrary $\fpcadv$.
\end{lem}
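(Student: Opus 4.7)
The plan is to decompose the total score into two parts: what each user accrued up to (and including) the round in which they left $S^\rindex$, and what they accrued afterwards. Concretely, write
\[
\sum_{i \in S^1} s^\length_i \;=\; \sum_{i \in S^1} s^{\rindex_i - 1}_i \;+\; \sum_{i \in S^1} \left( s^\length_i - s^{\rindex_i - 1}_i \right),
\]
where $\rindex_i$ is the first round at which $i \notin S^\rindex$ (with $S^{\length+1}=\emptyset$), as in Lemma~\ref{lem:HiddenBound}. The first sum will be controlled deterministically via the accusation rule, and the second via Lemma~\ref{lem:HiddenBound}.

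For the first sum, fix $i \in S^1$. If $i$ is never accused, then $\rindex_i = \length+1$ and $s^{\rindex_i - 1}_i = s^\length_i \leq \sigma$ by the fact that the accusation rule $s^\rindex_i > \sigma$ never triggered. If instead $i$ is accused in round $k_i$, then $\rindex_i - 1 = k_i$, and since $i$ was not accused at round $k_i - 1$ we have $s^{k_i - 1}_i \leq \sigma$. Since $p^{k_i} \in [\accuracy,1-\accuracy] \cup \{0,1\}$, the increment $\answer^{k_i} \cdot \phi^{p^{k_i}}(\FPCcol^{k_i}_i)$ has absolute value at most $1/\sqrt{\accuracy}$ (it is $0$ when $p^{k_i} \in \{0,1\}$, and bounded by $\max\{\sqrt{(1-p)/p},\sqrt{p/(1-p)}\} \leq 1/\sqrt{\accuracy}$ otherwise). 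Hence $s^{\rindex_i - 1}_i \leq \sigma + 1/\sqrt{\accuracy}$ in either case. Summing over $i \in S^1$ and using $|S^1| = \collusion$ gives
\[
\sum_{i \in S^1} s^{\rindex_i - 1}_i \;\leq\; \collusion \sigma + \frac{\collusion}{\sqrt{\accuracy}}
\]
deterministically.

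For the second sum, apply Lemma~\ref{lem:HiddenBound} with $S = S^1$ (so $|S| = \collusion$) to conclude
\[
\pr{}{\sum_{i \in S^1} \left( s^\length_i - s^{\rindex_i - 1}_i \right) > \lambda} \;\leq\; e^{-\lambda^2/4\collusion\length} + e^{-\sqrt{\accuracy}\lambda/4}.
\]
(Note that for users never accused, the corresponding summand is exactly $0$, so no additional care is needed.) Combining the two bounds yields the claim. The only subtlety is the one-round jump bound of $1/\sqrt{\accuracy}$ on the score, which is the reason for the additive term $\collusion/\sqrt{\accuracy}$ in the statement; beyond that, the argument is a straightforward application of Lemma~\ref{lem:HiddenBound}.
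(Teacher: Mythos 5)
Your proof is correct and follows essentially the same route as the paper: decompose $\sum_{i \in S^1} s_i^\length$ into $\sum_i s_i^{\rindex_i-1}$ (bounded deterministically by $\collusion\sigma + \collusion/\sqrt{\accuracy}$ via the accusation rule plus the one-step jump bound $|\phi^p| \le 1/\sqrt{\accuracy}$) and $\sum_i (s_i^\length - s_i^{\rindex_i-1})$ (controlled by Lemma~\ref{lem:HiddenBound} with $S = S^1$). The only tiny point you gloss over is the boundary case $k_i = 1$, where the bound $s_i^{k_i-1} \le \sigma$ should come from the initialization $s_i^0 = 0$ rather than ``not accused at round $0$,'' but this is immediate and does not affect the argument.
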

We will set $\lambda = \sigma$ and, since $1/\sqrt{\accuracy} \leq \sigma$, we get that $\sum_{i \in S^1} s_i^\length \leq 3 \sigma n$ with high probability.
\begin{proof}
Let $\rindex_i \in [\length+1]$ be as in Lemma \ref{lem:HiddenBound} -- that is, $i \notin S^{\rindex_i}$ and $i \in S^{\rindex_i-1}$, where we define $S^{\length+1}=\emptyset$ and $S^0 = [\pop]$. By the definition of $\rindex_i$, $s^\rindex$, and $S^\rindex$, we have $s^{\rindex_i-2}_i\leq \sigma$ for all $i \in S^1$, as otherwise $i \in I^{\rindex_i-2}$ and therefore $i \notin S^{\rindex_i-1} = S^{\rindex_i-2} \backslash I^{\rindex_i-2}$. If $i \in S^1$, then $\rindex_i=1$ and $s^{\rindex_i-1}_i = 0$. Thus $$\sum_{i \in S^1} s_i^{\rindex_i-1} = \sum_{i \in S^1} s_i^{\rindex_i-2} + \answer^{\rindex_i-1} \phi^{p^{\rindex_i-1}}(\FPCcol_i^{\rindex_i-1}) \leq \sum_{i \in S^1} \sigma + \frac{1}{\sqrt{\accuracy}} \leq \collusion\sigma + \frac{\collusion}{\sqrt{\accuracy}}.$$

By  Lemma \ref{lem:HiddenBound}, $$\pr{}{\sum_{i \in S^1} s^\length_i - s^{\rindex_i-1}_i > \lambda} \leq e^{-\lambda^2/4\collusion\length} + e^{ - \sqrt\accuracy \lambda /4}.$$ The lemma follows.
\end{proof}

Now we show that the conflicting bounds of Theorem \ref{thm:ScoreBound} and Lemma \ref{lem:TotalScoreBound} imply completeness - that is, the adversary $\fpcadv$ cannot be consistent.

\begin{thm}[Completeness] \label{thm:Completeness}
At the end of $\ifpcgame_{\pop, \collusion, \length}[\fpcadv,\ifpc_{\pop,\collusion,\fpcfalse,\rob}]$ for an arbitrary $\fpcadv$, we have $\err^\length > \rob \length$ with probability at least $1-\fpcfalse^{\frac12 \irob \collusion}$, assuming $\irob\collusion \geq 1$.
\end{thm}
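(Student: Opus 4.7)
The strategy is to derive a contradiction between the correlation lower bound from Theorem \ref{thm:ScoreBound} and the upper bound on the total score from Lemma \ref{lem:TotalScoreBound}, forcing $\err^\length > \rob\length$. Theorem \ref{thm:ScoreBound} says that $\errwt\err^\length + \sum_{i\in S^1} s_i^\length$ is at least roughly $\frac{2}{\pi}(1-2\mass)(1-2\collusion\accuracy)\length$, while Lemma \ref{lem:TotalScoreBound} says $\sum_{i\in S^1} s_i^\length$ is at most roughly $\collusion\sigma$ (because every guilty user is accused as soon as their score crosses $\sigma$, and post-accusation drift is controlled by an independent concentration argument). Subtracting gives a lower bound on $\errwt\err^\length$; with the chosen parameters this lower bound will exceed $\errwt\rob\length$, and since $\errwt>0$ we conclude $\err^\length > \rob\length$.

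Concretely, I would apply Theorem \ref{thm:ScoreBound} with its deviation parameter set to $\lambda = c_1 \collusion\sigma$ and Lemma \ref{lem:TotalScoreBound} with its deviation parameter set to $\lambda' = c_2 \collusion\sigma$, for suitable absolute constants $c_1,c_2$ whose sum is at most $4$. The hypothesis $\irob\collusion\geq 1$ ensures $\lambda \leq 17.5\length/\sqrt\accuracy$, so the first application is valid. Combining the two bounds yields
$$\errwt\err^\length \;\geq\; \tfrac{2}{\pi}(1-2\mass)(1-2\collusion\accuracy)\length \;-\; (c_1+c_2+1)\collusion\sigma \;-\; \collusion/\sqrt\accuracy.$$

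The main arithmetic task is then to verify that the right-hand side exceeds $\errwt\rob\length$. Substituting $1-2\mass = \irob/2$, $\mass = 1/2-\irob/4$, $1-2\collusion\accuracy = 1-\irob/2$, and the definition of $\errwt$, the coefficient of $\length$ in $\tfrac{2}{\pi}(1-2\mass)(1-2\collusion\accuracy) - \errwt\rob$ simplifies (after clearing $1/2-\irob/4$) to $\frac{\irob^2(1/2+\irob/8)}{\pi(1/2-\irob/4)} \geq \irob^2/\pi$. Combined with $\length\geq (6\pi\collusion/\irob^2)\sigma$, the gap on the left is at least $6\collusion\sigma$, which dominates $(c_1+c_2+1)\collusion\sigma$ and easily absorbs $\collusion/\sqrt\accuracy = 2\collusion^{3/2}/\sqrt\irob \leq 2\collusion^2 \ll \collusion\sigma$ (since $\sigma = \Omega(\collusion\log(1/\fpcfalse)/\irob^2) \gg \collusion$).

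For the probability, substituting $\accuracy = \irob/(4\collusion)$, $\length = \Theta(\collusion\sigma/\irob^2)$, and $\sigma \geq 384\pi\collusion\log(32/\fpcfalse)/\irob^2$ into the exponent $\lambda^2\accuracy/(280\length)$ of Theorem \ref{thm:ScoreBound} yields $\Omega(\collusion\irob\log(1/\fpcfalse))$, so the failure probability is at most $\fpcfalse^{\irob\collusion/2}$ (with a constant to spare that swallows the leading factor $2$). The two failure terms of Lemma \ref{lem:TotalScoreBound} are both exponentially smaller in $\collusion\log(1/\fpcfalse)$ under the same substitutions, so a union bound gives the advertised $1-\fpcfalse^{\irob\collusion/2}$. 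The main obstacle throughout is the delicate constant-chasing: the robustness budget $\frac{2}{\pi}(1-2\mass)(1-2\collusion\accuracy)-\errwt\rob$ shrinks like $\irob^2$ as $\rob\to 1/2$, so $\length$ has to be $\Theta(\collusion^2/\irob^4)$, and $\lambda$ has to be chosen large enough to give the correct exponent in the failure probability while staying small enough that (a) the residual gap against $\errwt\rob\length$ remains positive and (b) $\lambda$ remains in the sub-Gaussian regime of Theorem \ref{thm:ScoreBound}.
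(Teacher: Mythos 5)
Your proposal matches the paper's proof in all essentials: both derive a contradiction between the correlation lower bound of Theorem \ref{thm:ScoreBound} and the accusation-capped upper bound of Lemma \ref{lem:TotalScoreBound}, with the same parameter choices (up to presentation, the paper's $\lambda = \irob^2\length/2\pi$ for Theorem \ref{thm:ScoreBound} is numerically $\approx 3\collusion\sigma$, matching your $c_1\collusion\sigma$) and the same arithmetic simplification of the gap to $\irob^2/\pi$. One minor inaccuracy: the hypothesis $\irob\collusion\geq 1$ is not needed to keep $\lambda$ in the valid range of Theorem \ref{thm:ScoreBound} (that holds unconditionally from the parameter settings); its actual role is in the final probability consolidation, ensuring the three failure terms can be absorbed into a single $\fpcfalse^{\irob\collusion/2}$, which you do address correctly later.
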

\begin{proof}
Suppose for the sake of contradiction that $\err^\length \leq \rob \length$.
By Lemma \ref{lem:TotalScoreBound}, $\sum_{i \in S^1} s_i^\length \leq \lambda + \collusion\sigma + \frac{\collusion}{\sqrt{\accuracy}}$ with probability at least $1-e^{-\lambda^2/4\collusion \length} - e^{ - \sqrt\accuracy \lambda /4}$. Set $\lambda = \collusion \sigma \geq \frac{\collusion}{\sqrt{\accuracy}}$. Now we assume $$\sum_{i \in S^1} s_i^\length \leq 3\collusion\sigma,$$ which holds with probability at least $1-e^{-\collusion\sigma^2/4\length} - e^{ - \sqrt\accuracy \collusion \sigma /4}$. Then \begin{equation}\label{eqn:ScoreUB}\errwt \err^\length + \sum_{i \in S^1} s^\length_i \leq  \errwt \rob \length + 3\collusion\sigma.\end{equation}

By Theorem \ref{thm:ScoreBound}, with probabilty at least $1-2e^{-\frac{\lambda^2 \accuracy}{280\length}}$, \begin{equation}\label{eqn:ScoreLB} \errwt \err^\length + \sum_{i \in S^1} s^\length_i \geq  \frac{2}{\pi}(1-2\mass)(1-2\collusion\accuracy)\length - \lambda\end{equation} for all $\lambda \in [0, 17.5 \length / \sqrt{\accuracy}]$. Set $\lambda = \irob^2 \length / 2\pi$ and assme Equation \eqref{eqn:ScoreLB} also holds.

Combining Equations \eqref{eqn:ScoreUB} and \eqref{eqn:ScoreLB} gives \begin{equation}\label{eqn:Contradiction} \frac{2}{\pi}(1-2\mass)(1-2\collusion\accuracy)\length - \frac{\irob^2}{2\pi} \length \leq \errwt \rob \length + 3\collusion\sigma.\end{equation} We claim this is a contradiction, which then holds with high probability, thus proving the theorem.

Rearranging Equation \eqref{eqn:Contradiction} gives \begin{equation}\label{eqn:Contradiction2}\frac{2}{\pi}(1-2\mass)(1-2\collusion\accuracy) \leq  \frac{\irob^2}{2\pi} + \errwt \rob + \frac{3\collusion\sigma}{\length}.\end{equation}
Our setting of parameters gives $$2\collusion \accuracy \leq \frac{\irob}{2} ~~~~~\text{and}~~~~~ \frac{3 n \sigma}{\length} \leq \frac{\irob^2}{2\pi}.$$ 
Substituting these into Equation \eqref{eqn:Contradiction2} gives \begin{equation}\label{eqn:Contradiction3} \frac{2}{\pi}(1-2\mass)\left(1 - \frac12 \irob\right) \leq \frac{\irob^2}{\pi} + \errwt\rob.\end{equation}
Now we use $1 - 2\mass = \frac12 \irob$ and $\errwt = \frac{2}{\pi} \frac{1-2\mass}{\mass} = \frac{\irob}{\pi \mass}$ to derive a contradiction from Equation \eqref{eqn:Contradiction3}:
\begin{align*}
\frac{\irob}{\pi}\left(1 - \frac12 \irob\right) \leq& \frac{\irob^2}{\pi} + \frac{\irob}{\pi \mass} \rob,\\
1 - \frac12 \irob\leq& \irob + \frac{\rob}{\mass},\\
\mass \left(1 - \frac{3}{2} \irob \right) \leq& \rob.\\
\end{align*}
Since $\mass = \frac12 -\frac14\irob$, we have $$\mass \left(1 - \frac{3}{2} \irob \right)  = \frac12 \left(1 -  \frac12 \irob \right) \left(1 - \frac{3}{2} \irob \right) > \frac12 \left( 1 - 2 \irob \right).$$ And $$\rob = \frac12\left(1 - 2\irob\right).$$
This gives a contradiction.
The total failure probability is bounded by $$e^{-\collusion\sigma^2/4\length} + e^{ - \sqrt\accuracy \collusion \sigma /4} + 2 e^{-\lambda^2 \accuracy / 280 \length} \leq \left(\frac{\fpcfalse}{32}\right)^{16\collusion} + \left(\frac{\fpcfalse}{32}\right)^{4\collusion} + 2 \left(\frac{\fpcfalse}{32}\right)^{\frac12 \irob\collusion} \leq \fpcfalse^{\frac12 \irob\collusion},$$
assuming $\irob\collusion \geq 1$.
\end{proof}

\subsection{Non-Interactive Fingerprinting Codes}

Our construction and analysis also gives a construction of traditional non-interactive fingerprinting codes. First we give a formal definition of a fingerprinting code.

\begin{definition}[(Non-Interactive) Fingerprinting Codes]
A $\collusion$-collusion resilient \emph{(non-interactive) fingerprinting code} of length $\length$ for $\pop$ users robust to a $\rob$ fraction of errors with failure probability $\fpcfail$ and false accusation probability $\fpcfalse$ is a pair of random variables $\FPCmat \in \{\pm 1\}^{\pop \times \length}$ and $\trace : \{\pm 1\}^\length \to 2^{[\pop]}$ such that the following holds. For all adversaries $\fpcadv : \{\pm 1\}^{\collusion \times \length} \to \{\pm 1\}^\length$ and $S \subset [\pop]$ with $|S|=\collusion$, $$\pr{\FPCmat,\trace,\fpcadv}{\left(  \left|\left\{ 1 \leq \rindex \leq \length : \not\exists i \in [\pop] ~~ \fpcadv(\FPCmat_S)^\rindex = \FPCcol^\rindex_i \right\}\right| \leq \rob \length\right) \wedge \left( \trace( \fpcadv(\FPCmat_S) ) = \emptyset \right)} \leq \fpcfail$$ and $$\pr{\FPCmat,\trace,\fpcadv}{\left|\trace(\fpcadv(\FPCmat_S)) \cap \left( [\pop]\backslash S \right) \right| > \fpcfalse (\pop-\collusion)} \leq \fpcfail,$$ where $\FPCmat_S \in \{\pm 1\}^{\collusion \times \length}$ contains the rows of $\FPCmat$ given by $S$.
\end{definition}


Our construction and analysis is readily adapted to the non-interactive setting. We obtain the following theorem.

\begin{theorem}[Existence of Non-Interactive Fingerprinting Codes] \label{thm:fpcthm}
For every $1 \leq \collusion \leq \pop$, $0 \leq \rob  < 1/2$,  and $0 < \fpcfalse \leq 1$, there is a $\collusion$-collusion-resilient (non-interactive) fingerprinting code of length $\length$ for $\pop$ users robust to a $\rob$ fraction of errors with failure probability $$\fpcfail \leq \min\{\fpcfalse (\pop-\collusion), 2^{-\Omega(\fpcfalse (\pop-\collusion))}\} + \fpcfalse^{\Omega\left(\irob\collusion\right)}$$ and false accusation probability $\fpcfalse$ for
$$
\length = O\left( \frac{\collusion^2 \log\left(1/\fpcfalse \right)}{\irob^4} \right).
$$
\end{theorem}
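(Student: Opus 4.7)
The plan is to mirror the interactive construction and analysis, with the adjustments appropriate for a one-shot interaction. Generate columns $\FPCcol^1, \dots, \FPCcol^\length$ exactly as in Figure \ref{fig:IFPCconstruction} (with the same $\accuracy, \mass, \sigma, \length$), assemble them into $\FPCmat \in \pmo^{\pop \times \length}$, and hand $\FPCmat_S$ to the adversary $\fpcadv$ all at once. Given the resulting output $\answer \in \pmo^\length$, the tracing algorithm returns $\trace(\answer) = \{i \in [\pop] : s_i > \sigma\}$ where $s_i = \sum_{\rindex \in [\length]} \answer^\rindex \phi^{p^\rindex}(\FPCcol_i^\rindex)$.

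Soundness transfers essentially verbatim from Section \ref{sec:Soundness}. For $i \notin S$, the response $\answer$ is independent of $(\FPCcol_i^\rindex)_\rindex$, so (treating $\answer$ as fixed) $s_i$ is a sum of independent zero-mean terms. Lemma \ref{lem:PhiSum} applied directly to the final score---bypassing Etemadi's inequality, since only one score per user is examined---gives $\pr{}{s_i > \sigma} \le \fpcfalse/4$, and the Chernoff--Markov argument of Theorem \ref{thm:Soundness} then controls the total number of falsely accused users with the claimed probability.

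Completeness reduces to showing that $\max_{i \in S} s_i^\length > \sigma$ whenever the adversary is $\rob$-consistent. Define $Y_\rindex := \errwt\, \errind{p^\rindex}{\answer^\rindex} + \answer^\rindex \sum_{i \in S} \phi^{p^\rindex}(\FPCcol_i^\rindex)$, so that $\sum_\rindex Y_\rindex \le \errwt \err^\length + \sum_{i\in S} s_i^\length$. Fixing all randomness $\omega$ except $(p^\rindex,\FPCcol^\rindex)$, the adversary's $\rindex$-th output becomes a deterministic $\pmo$-valued function $f^{\rindex,\omega}$ of $\FPCcol_S^\rindex$ alone, so Proposition \ref{prop:xiExpectation} applies pointwise, and integrating over $\omega$ gives $\ex{}{\sum_\rindex Y_\rindex} \ge \Omega(\length)$. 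For concentration, I would use the Doob martingale $M_\rindex = \ex{}{\sum_k Y_k \mid \mathcal{F}_\rindex}$ with $\mathcal{F}_\rindex = \sigma(p^{\le \rindex}, \FPCcol^{\le \rindex})$, and bound each increment's MGF by the same sub-Gaussian estimate as in Proposition \ref{prop:xiConcentration}. Plugging this into Theorem \ref{thm:AzumaDoob} yields $\sum_\rindex Y_\rindex \ge \Omega(\length) - \lambda$ with failure probability $e^{-\Omega(\lambda^2 \accuracy/\length)}$. Under consistency this forces $\sum_{i\in S} s_i^\length \ge \Omega(\length) \ge \collusion \sigma$ with the parameters of Figure \ref{fig:IFPCconstruction}, so by pigeonhole $\max_{i\in S} s_i^\length > \sigma$ and at least one guilty user is traced, matching the failure probability of Theorem \ref{thm:ifpcthm}.

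The hard part is controlling the Doob-martingale increments in the concentration step: unlike in the interactive case, $\answer^k$ for every $k$ may depend on $(p^\rindex,\FPCcol^\rindex)$, so revealing one round's randomness propagates to all future $Y_k$'s. The key estimate to verify is $\ex{}{e^{t(M_\rindex - M_{\rindex-1})} \mid \mathcal{F}_{\rindex-1}} \le e^{C t^2}$ with $C = O(1/\accuracy)$. One route is to condition on the future columns $\FPCcol^{>\rindex}$ to recast each round as an instance of the interactive MGF bound (Proposition \ref{prop:xiConcentration}) and then integrate via conditional Jensen; making this rigorous while preserving the constants is where the bulk of the care is needed.
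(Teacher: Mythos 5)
Your high-level plan is the same as the paper's (which disposes of this theorem with the one-liner that the interactive analysis is "readily adapted"): keep the generator of Figure~\ref{fig:IFPCconstruction}, hand $\FPCmat_S$ over all at once, and accuse $\{i : s_i^\length > \sigma\}$. Your soundness observation is correct and is in fact a simplification: since tracing examines only the final score $s_i^\length$ rather than the running maximum, Lemma~\ref{lem:PhiSum} applies directly and Etemadi's inequality (Theorem~\ref{thm:Etemadi}) is no longer needed. The pointwise application of Proposition~\ref{prop:xiExpectation} after conditioning on the other $\length-1$ rounds, followed by the tower property, does correctly give $\ex{}{\errwt\err^\length + \sum_{i\in S}s_i^\length}\geq\Omega(\length)$, and the downstream contradiction is actually \emph{cleaner} here than in Lemma~\ref{lem:TotalScoreBound}: if no one is traced then $\sum_{i\in S}s_i^\length\leq \collusion\sigma$ with no slack needed for the truncation-at-accusation bookkeeping.

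The gap is exactly where you flag it, but it is more serious than the write-up suggests. The Doob increment $M_\rindex-M_{\rindex-1} = \sum_k\bigl(\ex{}{Y_k\mid\mathcal{F}_\rindex}-\ex{}{Y_k\mid\mathcal{F}_{\rindex-1}}\bigr)$ is not dominated by the $k=\rindex$ term. For $k<\rindex$, the quantities $V_k := \sum_{i\in S}\phi^{p^k}(\FPCcol_i^k)$ are already $\mathcal{F}_{\rindex-1}$-measurable and can be as large as $\collusion/\sqrt{\accuracy}$, and the increment picks up $\sum_{k<\rindex}\bigl(\ex{}{\answer^k\mid\mathcal{F}_\rindex}-\ex{}{\answer^k\mid\mathcal{F}_{\rindex-1}}\bigr)V_k$ plus $O(1)$ terms from $k>\rindex$. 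Nothing in Proposition~\ref{prop:xiConcentration} controls this, because that proposition bounds a \emph{single} fresh $\xi_{\accuracy,\mass}(f)$, not a sum of already-determined $V_k$'s multiplying martingale fluctuations of the $\answer^k$'s. Your proposed route, "condition on the future columns and integrate via conditional Jensen," is circular: conditional Jensen gives $\ex{}{e^{t(M_\rindex-M_{\rindex-1})}\mid\mathcal{F}_{\rindex-1}} \leq \ex{}{e^{t\sum_{k\geq\rindex}(Y_k - \ex{}{Y_k\mid\mathcal{F}_{\rindex-1}})}\mid\mathcal{F}_{\rindex-1}}$, i.e.\ it bounds one increment's MGF by the MGF of the entire remaining sum, which is the thing you were trying to bound in the first place. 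The non-interactive concentration lemma of Tardos/Bun--Ullman--Vadhan is genuinely more delicate than the interactive Theorem~\ref{thm:AzumaDoob} precisely because $\answer^k$ is a function of the whole matrix; you need a different decomposition (or the reduction to the interactive game via resampling fake future columns, which costs roughly a $\length$ factor in the failure probability and needs its own correctness argument). As stated, Theorem~\ref{thm:fpcthm}'s concentration step is not yet proved by this proposal.
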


\section{Hardness of False Discovery} \label{sec:falsedisc}

In this section we prove our main result - that answering $O(n^2)$ adaptive queries given $n$ samples is hard. But first we must formally define the model in which we are working.

\subsection{The Statistical Query Model}

Given a distribution $\dist$ over $\bits^{\dimension}$, we would like to answer \emph{statistical queries} about $\dist$.  A statistical query on $\bits^{\dimension}$ is specified by a function $\query \from \bits^{\dimension} \to [-1,1]$ and (abusing notation) is defined to be
$$
\query(\dist) = \ex{x \getsr \dist}{\query(x)}.
$$

Our goal is to design an \emph{oracle} $\oracle$ that answers statistical queries on $\dist$ using only iid samples $x_{1},\dots,x_{\sample} \getsr \dist$.  Our focus is the case where the queries are chosen adaptively and adversarially.

Specifically, $\oracle$ is a stateful algorithm that holds a collection of samples $x_1,\dots,x_{\sample} \in \bits^{\dimension}$, takes a statistical query $\query$ as input, and returns a real-valued answer $a \in [-1,1]$.  We require that when $x_1,\dots,x_{\sample}$ are iid samples from $\dist$, the answer $a$ is close to $\query(\dist)$, and moreover that this condition holds for every query in an adaptively chosen sequence $\query^{1},\dots, \query^{\queries}$.  Formally, we define the following game between an $\oracle$ and a stateful adversary $\accadv$.
\begin{figure}[ht]
\begin{framed}
\begin{algorithmic}
\STATE{$\accadv$ chooses a distribution $\dist$ over $\bits^{\dimension}$.}
\STATE{Sample $x_1,\dots,x_{\sample} \getsr \dist$, let $x = (x_1,\dots,x_{\sample})$.}
\STATE{For $j = 1,\dots,\queries$}
\INDSTATE[1]{$\accadv$ outputs a query $\query^j$.}
\INDSTATE[1]{$\oracle(x, \query^{j})$ outputs $a^{j}$.}
\INDSTATE[1]{(As $\accadv$ and $\oracle$ are stateful, $q^j$ and $a^j$ may depend on the history $q^1,a^1,\dots,q^{j-1},a^{j-1}$.)}
\end{algorithmic}
\end{framed}
\vspace{-6mm}
\caption{$\accuracygame_{\sample, \dimension, \queries}[\oracle, \accadv]$}
\end{figure}

\begin{definition}[Accuracy] \label{def:accurateoracle}
An oracle $\oracle$ is \emph{$(\alpha,\beta,\gamma)$-accurate for $\queries$ adaptively chosen queries given $\sample$ samples in $\bits^{\dimension}$} if for every adversary $\accadv$,
$$
\pr{\accuracygame_{\sample, \dimension, \queries}[\oracle, \accadv]}{\textrm{For $(1-\beta)\queries$ choices of $j \in [\queries]$, }\left| \oracle(x, \query^{j}) - \query^j(\dist) \right| \leq \alpha}
\geq 1 - \gamma\,.
$$
As a shorthand, we will say that $\oracle$ is \emph{$(\alpha, \beta)$-accurate for $\queries$ queries} if for every $\sample, \dimension \in \N$, $\oracle$ is $(\alpha,\beta,o_{\sample}(1))$-accurate for $\queries$ queries given $\sample$ samples in $\bits^{\dimension}$.  Here, $\queries$ may depend on $\sample$ and $\dimension$ and $o_{\sample}(1)$ is a function of $\sample$ that tends to $0$.
\end{definition}

We are interested in oracles that are both accurate and computationally efficient.  We say that an oracle $\oracle$ is \emph{computationally efficient} if, when given samples $x_1,\dots,x_{\sample} \in \bits^{\dimension}$ and a query $\query \from \bits^{\dimension} \to [-1,1]$, it runs in time $\poly(\sample, \dimension, |\query|)$.  Here $\query$ will be represented as a circuit that evaluates $\query(x)$ and $|\query|$ denotes the size of this circuit.

\subsection{Encryption Schemes}
Our attack relies on the existence of a semantically secure private-key encryption scheme.  An encryption scheme is a triple of efficient algorithms $(\encgen, \encenc, \encdec)$ with the following syntax:
\begin{itemize}
\item $\encgen$ is a randomized algorithm that takes as input a security parameter $\security$ and outputs a $\security$-bit secret key.  Formally, $\sk \getsr \encgen(1^\security)$.
\item $\encenc$ is a randomized algorithm that takes as input a secret key and a message $m \in \set{-1,0,1}$ and outputs a ciphertext $\ct \in \{0,1\}^{\mathrm{poly}(\lambda)}$.  Formally, $\ct \getsr \encenc(\sk, m)$.
\item $\encdec$ is a deterministic algorithm that takes as input a secret key and a ciphertext $\ct$ and outputs a decrypted message $m'$.  If the ciphertext $\ct$ was an encryption of $m$ under the key $\sk$, then $m' = m$.  Formally, if $\ct \getsr \encenc(\sk, m)$, then $\encdec(\sk, \ct) = m$ with probability $1$.
\end{itemize}

Roughly, security of the encryption scheme asserts that no polynomial time
adversary who does not know the secret key can distinguish encryptions of $m =
0$ from encryptions of $m = 1$, even if the adversary has access to an oracle that returns the
encryption of an arbitrary message under the unknown key.  For
convenience, we will require that this security property holds simultaneously
for an arbitrary polynomial number of secret keys. The existence of an encryption scheme with
this property follows immediately from the existence an ordinary semantically secure encryption scheme.  We 
start with the stronger definition only to simplify our proofs. A secure encryption scheme exists
under the minimal cryptographic assumption that one-way functions exist. The
formal definition of security is not needed until
Section~\ref{sec:securityproofs}.

\subsection{The Attack}

The adversary is specified in Figure~\ref{fig:attack}.  
Observe that $\realgame_{\sample, \dimension}$ is only well defined for pairs $n, d \in N$ for which $1 + \lceil \log_2(2000\sample) \rceil \leq d$, so that there exists a suitable choice of $\security \in \N$.  Through this section we will assume that $\sample = \sample(\dimension)$ is a polynomial in $\dimension$ and that $\dimension$ is a sufficiently large unspecified constant, which ensures that $\realgame_{\sample, \dimension}$ is well defined.

\begin{figure}[ht]
\begin{framed}
\begin{algorithmic}
\STATE{The distribution $\dist$:}
\INDSTATE[1]{Given parameters $\dimension, \sample$, let $\pop = 2000\sample$, let $\security = \dimension - \lceil \log_2(\pop) \rceil$.}
\INDSTATE[1]{Let $(\encgen, \encenc, \encdec)$ be an encryption scheme}
\INDSTATE[1]{For $i \in [\pop]$, let $\sk_{i} \getsr \encgen(1^{\security})$ and let $y_i = (i, \sk_{i}) \in \bits^{\dimension}$.}
\INDSTATE[1]{Let $\dist$ be the uniform distribution over $\set{y_1,\dots,y_{\pop}} \subseteq \bits^{\dimension}$.}
\STATE{}
\STATE{$\oracle$ samples $x_1,\dots,x_{\sample} \getsr \dist$. Let $x = (x_{1},\dots,x_{\sample})$.}
\STATE{Let $S \subseteq [\pop]$ be the set of unique indices $i$ such that $(i, \sk_{i})$ appears in $x$.}
\STATE{}
\STATE{Attack:}
\INDSTATE[1]{Initialise a $\sample$-collusion resilient interactive fingerprinting code $\ifpc$ of length $\length$ for $\pop$ users robust to a $\rob$ fraction of errors with failure probability $\fpcfail=\negl(\sample)$ and false accusation probability $\fpcfalse=1/1000$.}
\INDSTATE[1]{Let $T^1 = \emptyset$.}
\INDSTATE[1]{For $j = 1,\dots,\length = \length(\pop)$:}
\INDSTATE[2]{Let $\FPCcol^{j} \in \pmo^\pop$ be the column given by $\ifpc$.}
\INDSTATE[2]{For $i = 1,\dots,\pop$, let $\ct^{j}_{i} = \encenc(\sk_i, \FPCcol^{j}_{i})$.}
\INDSTATE[2]{Define the query $\query^{j}(i', \sk')$ to be $\encdec(\sk', \ct^j_{i'})$ if $i' \not\in T^{j}$ and $0$ otherwise.}
\INDSTATE[2]{Let $a^{j} = \oracle(x; \query^{j})$ and round $a^j$ to $\pmo$ to obtain $\overline{a}^j$.}
\INDSTATE[2]{Give $\overline{a}^j$ to $\ifpc$ and let $I^{j} \subseteq [\pop]$ be the set of accused users and $T^j = T^{j-1} \cup I^{j}$.}
\end{algorithmic}
\end{framed}
\vspace{-6mm}
\caption{$\realgame_{\sample, \dimension}[\oracle]$}
\label{fig:attack}
\end{figure}

\subsection{Informal Analysis of the Attack} \label{sec:recoveryanalysis}
Before formally analysing the attack, we comment on the overall structure thereof.

At a high level, the attack $\realgame_{\sample,\dimension}[\oracle]$ runs the fingerprinting game $\ifpcgame_{\pop, \collusion, \length}[\fpcadv, \ifpc]$, where the oracle $\oracle$ plays the r\^{o}le of the fingerprinting adversary $\fpcadv$. Each challenge $\FPCcol^\rindex$ issued by $\ifpc$ is passed to the oracle in encrypted form as $q^\rindex$. The oracle must output an approximation $a^j$ to the true answer $$q^\rindex(\mathcal{D})=\frac{1}{\pop} \sum_{i \in [\pop] \setminus T^\rindex} c^\rindex_i.$$  In order to do this, the oracle could decrypt $q^\rindex$ to obtain $\FPCcol^\rindex$ for every $\rindex$. However, the oracle does not have all the necessary secret keys; it only has the secret keys corresponding to its sample $S$. Thus, by the security of the encryption scheme, any efficient oracle effectively can only see $\FPCcol^\rindex_{S\setminus T^\rindex}$. That is to say, if the oracle is computationally efficient, then it has the same restriction as a fingerprinting adversary $\fpcadv$.  Thus, any computationally efficient oracle must lose the fingerprinting game, meaning it cannot answer every query (or even just a $\beta = 1/2 + \Omega(1)$ fraction of the queries) accurately.

One subtly arises since ``accuracy'' for the oracle is defined with respect to the true answer $q^\rindex(\mathcal{D})=\frac{1}{\pop} \sum_{i \in [\pop] \setminus T^\rindex} c^\rindex_i,$ whereas ``accuracy'' in the fingerprinting game is defined with respect to the average over all of $\FPCcol^\rindex$, that is $\frac{1}{\pop} \sum_{i \in [N]} \FPCcol^{\rindex}_{i}$.  We deal with these subtleties by arguing that $T^{\rindex}$, which is the number of users accused by the interactive fingerprinting code prior to the $j$-th query, is small.  Here we use the fact that the fingerprinting code only allows a relatively small number of false accusations $\pop/1000$.  Therefore $|T^{\rindex}| \leq \sample + \pop/1000 \leq \pop/500$.  As a result, the definition of accuracy guaranteed by the oracle will be close enough to the definition of accuracy required for the interactive fingerprinting code to succeed in identifying the sample.

\subsection{Analysis of the Attack} \label{sec:recoveryanalysis}

In this section we prove our main result:

\begin{theorem}[Theorem \ref{thm:main1}] \label{thm:main1-formal}
Assuming one-way functions exist, for all $\rob < 1/2$, there is a function $\length(2000\sample, \rob) = O(\sample^2 / \irob^4)$ such that there is no computationally efficient oracle $\oracle$ that is $(0.99, \rob,1/2)$-accurate for $\length(2000\sample, \rob)$ adaptively chosen queries given $\sample$ samples in $\bits^{\dimension}$.
\end{theorem}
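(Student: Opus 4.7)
The plan is to argue by contradiction using the attack of Figure~\ref{fig:attack}. Suppose $\oracle$ is a computationally efficient oracle that is $(0.99,\rob,1/2)$-accurate for $\length$ adaptive queries. I will instantiate the interactive fingerprinting code of Theorem~\ref{thm:ifpcthm} with $\pop = 2000\sample$ users, collusion bound $\collusion = \sample$, robustness $\rob$, and false-accusation rate $\fpcfalse = 1/1000$; this yields length $\length = O(\sample^2/\irob^4)$ and failure/false-accusation probabilities that are $\negl(\sample)$ (assuming $\irob \geq 1/\poly(\sample)$, which is the relevant regime). I will then use $\oracle$ to build an efficient IFPC adversary $\fpcadv$ whose behavior contradicts the completeness guarantee of Theorem~\ref{thm:Completeness}.

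The first key step translates oracle accuracy into IFPC-consistency of the rounded answers. By the soundness bound (Theorem~\ref{thm:Soundness}), the running accusation set $T^\rindex$ satisfies $|T^\rindex| \leq \sample + \pop/1000 \leq \pop/500$ throughout the attack with overwhelming probability, so the true answer $\query^\rindex(\dist) = \tfrac{1}{\pop}\sum_{i \notin T^\rindex} \FPCcol^\rindex_i$ has sign $b$ and magnitude at least $1 - 1/500 > 0.99$ whenever $\FPCcol^\rindex = (b,\dots,b)$ for $b \in \pmo$. Consequently, any answer $\answer^\rindex$ within $0.99$ of $\query^\rindex(\dist)$ rounds to $\overline{\answer}^\rindex = b$, which is IFPC-consistent with $\FPCcol^\rindex$. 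Contrapositively, an inconsistent round must correspond to an inaccurate oracle answer, so the accuracy hypothesis forces $\errors^\length \leq \rob\length$ with probability at least $1/2$. Theorem~\ref{thm:Completeness} states that no IFPC adversary can achieve $\errors^\length \leq \rob\length$ with probability more than $\fpcfalse^{\Omega(\irob\sample)} = \negl(\sample)$, which yields the sought contradiction once a valid adversary is exhibited.

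The second key step constructs $\fpcadv$. Given an IFPC challenge $\FPCcol^\rindex_{S^\rindex}$, the adversary simulates the attack by encrypting each $\FPCcol^\rindex_i$ for $i \in S^\rindex$ under the corresponding known $\sk_i$ and encrypting $0$ under fresh keys for the remaining $\pop - |S^\rindex|$ indices. It assembles the query $\query^\rindex$ as in Figure~\ref{fig:attack}, passes it to $\oracle$, and returns the rounded answer to the IFPC. By a standard hybrid argument over the $\poly(\sample)$ queries and the $O(\sample)$ unseen keys, semantic security of the multi-key encryption scheme ensures that $\oracle$'s view in this simulation is computationally indistinguishable from its view in $\realgame_{\sample,\dimension}$, so the bound $\errors^\length \leq \rob\length$ carries over from the real attack up to negligible loss. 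A small amount of bookkeeping about the sample-support size (padding $S^1$ with dummy users if drawing $\sample$ iid samples from $\dist$ yields fewer than $\sample$ distinct indices, which the IFPC soundness tolerates since it holds even for $|S^1| \leq \collusion$) is routine.

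The main obstacle is the cryptographic hybrid step: because queries, ciphertexts, and accusations all depend adaptively on each other through $\length = \poly(\sample)$ rounds, I must argue that the indistinguishability error does not accumulate beyond a negligible quantity, and that the simulated $\fpcadv$ remains polynomial-time. The strong multi-key, encryption-oracle variant of semantic security assumed in Section~3.2 is designed precisely to make this adaptive hybrid go through in one shot, so the contradiction closes once both steps are combined: accuracy gives $\errors^\length \leq \rob\length$ in the real attack with probability $\geq 1/2$, the hybrid transfers this to the IFPC game with probability $\geq 1/2 - \negl(\sample)$, and completeness forbids it with probability better than $\negl(\sample)$.
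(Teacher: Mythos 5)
Your proposal follows essentially the same route as the paper: instantiate the IFPC with the same parameters, argue that oracle accuracy forces the rounded answers to be consistent, reduce the oracle to an IFPC adversary via the encryption wrapper, and let IFPC completeness supply the contradiction. The final chain you describe — accuracy $\Rightarrow$ consistency in the real game $\Rightarrow$ (by the hybrid) consistency in the ideal/IFPC game $\Rightarrow$ contradiction with completeness — matches Claims~\ref{clm:realaccurate}, \ref{clm:indist2}, \ref{clm:idealaccurate}, and~\ref{clm:idealnotaccurate}.

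There is, however, a genuine gap in your first key step. You invoke Theorem~\ref{thm:Soundness} directly to conclude that $|T^{\rindex}| \le \sample + \pop/1000$ ``throughout the attack,'' meaning in $\realgame$. But soundness is a property of $\ifpcgame$, where the adversary sees only $\FPCcol^{\rindex}_{S^{\rindex}}$; in $\realgame$ the oracle receives an encryption of \emph{every} entry of $\FPCcol^{\rindex}$, and for an innocent $i \notin S$ the oracle's answer $\answer^{\rindex}$ could, information-theoretically, depend on $\FPCcol^{\rindex}_i$, which is exactly what the soundness argument (independence of the unseen rows) rules out. So the bound on $|T^{\rindex}|$ in $\realgame$ does not come for free from soundness; it requires its own indistinguishability step. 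The paper handles this by first proving few false accusations in $\idealgame$ (Claim~\ref{clm:fewfalseideal}, where the reduction to the IFPC adversary is exact since unseen entries are encryptions of $0$), then transferring to $\realgame$ via Claim~\ref{clm:indist1}, and only then arguing accuracy $\Rightarrow$ consistency (Claim~\ref{clm:realaccurate}). Your proposal uses the hybrid only once, to carry consistency from $\realgame$ to $\idealgame$, but it is actually needed in both directions: once (ideal~$\to$~real) to establish the $T^{\rindex}$ bound that your rounding argument depends on, and once more (real~$\to$~ideal) to invoke completeness. The fix is straightforward and you clearly have all the ingredients — the IFPC adversary you build in your second step is exactly what's needed, and soundness does apply to it — but as written the proof uses a fact about $\realgame$ before it has been justified.
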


We will start by establishing that the number of falsely accused users is small.  That is, we have $|T^\length \setminus S | \leq \pop/1000$ with high probability.  This condition will follow from the security of the interactive fingerprinting code $\ifpc$.  However, security alone is not enough to guarantee that the number of falsely accused users is small, because security of $\ifpc$ applies to adversaries that only have access to $\FPCcol^{j}_{i}$ for users $i \in S \setminus T^{j}$, whereas the queries to the oracle depend on $\FPCcol^{j}_{i}$ for users $i \not\in S \setminus T^{j}$.  To remedy this problem we rely on the fact entries $\FPCcol^{j}_{i}$ for $i$ outside of $S \setminus T^{j}$ are encrypted under keys $\sk_i$ that are not known to the oracle.  Thus, a computationally efficient oracle ``does not know'' those rows.  We can formalize this argument by comparing $\realgame$ to an $\idealgame$ (Figure~\ref{fig:idealattack}) where these entries are replaced with zeros, and argue that the adversary cannot distinguish between these two attacks without breaking the security of the encryption scheme.
\begin{figure}[ht]
\begin{framed}
\begin{algorithmic}
\STATE{The distribution $\dist$:}
\INDSTATE[1]{Given parameters $\dimension, \sample$, let $\pop = 2000\sample$, and $\security = \dimension - \lceil \log_2(\pop) \rceil$.}
\INDSTATE[1]{Let $(\encgen, \encenc, \encdec)$ be an encryption scheme}
\INDSTATE[1]{For $i \in [\pop]$, let $\sk_{i} \getsr \encgen(1^{\security})$ and let $y_i = (i, \sk_{i}) \in \bits^{\dimension}$.}
\INDSTATE[1]{Let $\dist$ be the uniform distribution over $\set{y_1,\dots,y_{\pop}} \subseteq \bits^{\dimension}$.}
\STATE{}
\STATE{Choose samples $x_1,\dots,x_{\sample} \getsr \dist$, let $x = (x_{1},\dots,x_{\sample})$.}
\STATE{Let $S \subseteq [\pop]$ be the set of unique indices $i$ such that $(i, \sk_{i})$ appears in $x$.}
\STATE{}
\STATE{Recovery phase:}
\INDSTATE[1]{Initialise a $\sample$-collusion resilient interactive fingerprinting code $\ifpc$ of length $\length$ for $\pop$ users robust to a $\rob$ fraction of errors with failure probability $\fpcfail=\negl(\sample)$ and false accusation probability $\fpcfalse=1/1000$.}
\INDSTATE[1]{Let $T^1 = \emptyset$.}
\INDSTATE[1]{For $j = 1,\dots,\length = \length(\pop)$:}
\INDSTATE[2]{Let $\FPCcol^{j} \in \pmo^\pop$ be the column given by $\ifpc$.}
\INDSTATE[2]{\color{DarkRed}{For $i \in S$, let $\ct^{j}_{i} = \encenc(\sk_i, \FPCcol^{j}_{i})$, for $i \in [\pop] \setminus S$, let $\ct^{j}_{i} = \encenc(\sk_i, 0)$.}}
\INDSTATE[2]{Define the query $\query^{j}(i', \sk')$ to be $\encdec(\sk', \ct^j_{i'})$ if $i' \not\in T^{j}$ and $0$ otherwise.}
\INDSTATE[2]{Let $a^{j} = \oracle(x; \query^{j})$ and round $a^j$ to $\pmo$ to obtain $\overline{a}^j$.}
\INDSTATE[2]{Give $\overline{a}^j$ to $\ifpc$ and let $I^{j} \subseteq [\pop]$ be the set of accused users and $T^j = T^{j-1} \cup I^{j}$.}
\end{algorithmic}
\end{framed}
\vspace{-6mm}
\caption{$\idealgame_{\sample, \dimension}[\oracle]$}
\label{fig:idealattack}
\end{figure}

\begin{claim} \label{clm:fewfalseideal}
For every oracle $\oracle$, every polynomial $\sample = \sample(\dimension)$, and every sufficiently large $\dimension \in \N$,
$$
\pr{\idealgame_{\sample, \dimension}[\oracle]}{|T^{\length} \setminus S| > \pop/1000} \leq \negl(\sample)
$$
\end{claim}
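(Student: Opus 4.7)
The plan is to reduce the claim directly to the soundness guarantee of the interactive fingerprinting code (Theorem~\ref{thm:Soundness}). The key observation is that in $\idealgame$ the oracle $\oracle$'s view --- the samples $x$ together with the sequence of queries $q^1,\dots,q^\length$ --- depends on $\FPCcol^j_i$ only for $i \in S$, since for $i \notin S$ the ciphertext is an encryption of $0$. So $\idealgame$ is literally simulable by an IFPC pirate that knows only $\FPCcol^j_{S^j}$.

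Concretely, I would construct a pirate $\fpcadv$ for $\ifpcgame_{\pop, \leq\pop, \length}[\fpcadv, \ifpc]$ that internally runs $\idealgame_{\sample, \dimension}[\oracle]$: it draws the keys $\sk_1,\dots,\sk_\pop \getsr \encgen(1^\security)$ and the samples $x_1,\dots,x_\sample \getsr \dist$, extracts $S \subseteq [\pop]$, and declares $S^1 = S$ (the version of soundness in Theorem~\ref{thm:Soundness} does not require $|S^1| = \collusion$). In round $j$, upon receiving $\FPCcol^j_{S^j}$ from $\ifpc$, $\fpcadv$ constructs $q^j$ exactly as in $\idealgame$: it encrypts $\FPCcol^j_i$ under $\sk_i$ for each $i \in S^j$, encrypts $0$ under $\sk_i$ for each $i \in S \setminus S^j$ (i.e.\ already-accused members of $S$), and encrypts $0$ under $\sk_i$ for each $i \notin S$. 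It then calls $\oracle(x, q^j)$, rounds the answer to $\overline{a}^j \in \pmo$, and returns $\overline{a}^j$ to $\ifpc$.

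The central subtlety --- which is really the only thing to check --- is that $\fpcadv$ does not need $\FPCcol^j_i$ for $i \in S \cap T^j$ even though in the real $\idealgame$ these are encrypted inside $q^j$. But by definition of $q^j$, on any input $(i', \sk')$ with $i' \in T^j$ the query returns $0$ regardless of $\ct^j_{i'}$, so the oracle's computation is entirely unaffected by how we populate $\ct^j_{i'}$ for $i' \in T^j$. Consequently the joint transcript produced by $\fpcadv$ is distributed identically to $\idealgame_{\sample, \dimension}[\oracle]$, and in particular the accused set $T^\length$ in $\idealgame$ equals the accused set $I = \bigcup_j I^j$ in $\ifpcgame$, with $S^1 = S$.

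Finally, I apply Theorem~\ref{thm:Soundness} with $\fpcfalse = 1/1000$. Since $|S| \leq \sample$ and $\pop = 2000\sample$, we have $\pop - |S| \geq 1999\sample$, so
\[
\fpcfalse(\pop - |S|) \;\leq\; \fpcfalse \pop \;=\; \pop/1000 \qquad\text{and}\qquad \fpcfalse(\pop - |S|) \;\geq\; \sample.
\]
Therefore
\[
\pr{\idealgame_{\sample,\dimension}[\oracle]}{|T^\length \setminus S| > \pop/1000} \;\leq\; \pr{}{|I \setminus S^1| > \fpcfalse(\pop - |S^1|)} \;\leq\; e^{-\fpcfalse(\pop-|S^1|)/8} \;\leq\; e^{-\sample/8},
\]
which is $\negl(\sample)$. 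The only place where the proof does any real work is checking the faithfulness of the simulation; everything else is bookkeeping and parameter plugging.
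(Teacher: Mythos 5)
Your proof is correct and follows essentially the same reduction that the paper uses: build an IFPC pirate that simulates $\idealgame$, observing that the query circuit $q^j$ only hardcodes $\ct^j_{i'}$ for $i'\notin T^j$ and that for $i\notin S$ the ciphertexts are encryptions of $0$, so the pirate's restricted view $\FPCcol^j_{S^j}$ suffices to reproduce the oracle's view exactly. The only cosmetic difference is that you invoke Theorem~\ref{thm:Soundness} directly with explicit parameters $\fpcfalse=1/1000$, $\pop=2000\sample$ to get the concrete bound $e^{-\sample/8}$, whereas the paper appeals to the abstract IFPC false-accusation guarantee with $\fpcfail=\negl(\sample)$; both are instantiations of the same fact.
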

\begin{proof}
This follows straightforwardly from a reduction to the security of the fingerprinting code.  Notice that the query $\query^j$ does not depend on any entry $\FPCcol^{j}_{i}$ for $i \not\in S \setminus T^{j-1}$.  Thus, an adversary for the fingerprinting code who has access to $\FPCcol^{j}_{S \setminus T^{j-1}}$ can simulate the view of the oracle.  Since we have for any adversary $\fpcadv$
$$
\pr{\ifpcgame_{\pop, \sample, \length}[\fpcadv, \ifpc]}{\psi^{\length} > (\pop-\sample)\delta} \leq \fpcfail,
$$
we also have
$$
\pr{\idealgame_{\sample, \dimension}[\oracle]}{|T^{\length} \setminus S| > \pop/1000} \leq \negl(\sample),
$$
as desired.
\end{proof}

Now we can argue that an efficient oracle cannot distinguish between the real attack and the ideal attack.  Thus the conclusion that $|T^{\length} \setminus S| \leq \pop/1000$ with high probability must also hold in the real game.
\begin{claim}\label{clm:indist1}
Let $Z_{1}$ be the event
$
\set{|T^{\length} \setminus S| > \pop/1000}.
$
Assume $(\encgen, \encenc, \encdec)$ is a computationally secure encryption scheme and let $\sample = \sample(\dimension)$ be any polynomial.  Then, if $\oracle$ is computationally efficient, for every sufficiently large $\dimension \in \N$
$$
\left| \pr{\idealgame_{\sample, \dimension}[\oracle]}{Z_{1}} - \pr{\realgame_{\sample, \dimension}[\oracle]}{Z_{1}} \right| \leq \negl(\sample)
$$
\end{claim}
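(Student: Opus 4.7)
The plan is a standard hybrid/reduction argument: the only difference between $\realgame$ and $\idealgame$ is that, for $i \notin S$, the ciphertext $\ct^j_i$ is $\encenc(\sk_i,\FPCcol^j_i)$ in the real game and $\encenc(\sk_i,0)$ in the ideal game. Since $\oracle$ (and the fingerprinting code $\ifpc$, and hence the event $Z_1$) runs in polynomial time, any non-negligible gap between $\pr{\realgame}{Z_1}$ and $\pr{\idealgame}{Z_1}$ can be turned into a non-negligible advantage in the multi-key CPA distinguishing game, contradicting security of $(\encgen,\encenc,\encdec)$.

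Concretely, I would build an adversary $\encadv$ against the encryption scheme as follows. The challenger for $\encadv$ generates fresh keys $\sk_i \getsr \encgen(1^\security)$ for $i \in [\pop]$ and gives $\encadv$ an encryption oracle $\encoracle_i(\cdot,\cdot)$ that, on input $(i,(m_0,m_1))$ with $i \in [\pop]$, returns $\encenc(\sk_i,m_b)$ for a hidden bit $b \in \{0,1\}$ chosen once and for all. $\encadv$ itself simulates $\realgame_{\sample,\dimension}[\oracle]$ as follows: it picks the sample $S$ and, for $i \in S$, queries the encryption oracle on $(i,(x,x))$ whenever it needs a ciphertext (so the challenge bit $b$ is irrelevant for $i \in S$); for $i \notin S$, it queries $(i,(\FPCcol^j_i,0))$ to obtain $\ct^j_i$. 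All other steps (running $\ifpc$, forming queries $\query^j$, invoking $\oracle$, maintaining $T^j$) are carried out exactly as in the attack. After $\length$ rounds, $\encadv$ outputs $1$ iff $|T^\length \setminus S| > \pop/1000$.

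By construction, when $b = 1$ the simulation is distributed identically to $\realgame_{\sample,\dimension}[\oracle]$, and when $b = 0$ it is distributed identically to $\idealgame_{\sample,\dimension}[\oracle]$. Hence
\[
\left|\pr{}{\encadv=1 \mid b=1} - \pr{}{\encadv=1 \mid b=0}\right|
= \left|\pr{\realgame}{Z_1} - \pr{\idealgame}{Z_1}\right|.
\]
Since $\sample$ is polynomial in $\dimension$, $\oracle$ is computationally efficient, $\ifpc$ runs in time polynomial in $\pop$ and $\length = O(\sample^2/\irob^4)$, and the simulation makes at most $\pop\length = \poly(\sample)$ encryption-oracle queries, $\encadv$ runs in time $\poly(\dimension)$. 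The assumed multi-key CPA security of the encryption scheme (with security parameter $\security = \dimension - \lceil \log_2 \pop \rceil = \Omega(\dimension)$, which is polynomially related to $\sample$ by our choice of $\sample=\sample(\dimension)$) then forces the distinguishing advantage of $\encadv$ to be $\negl(\security) = \negl(\sample)$, which is exactly the claimed bound.

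The only nontrivial point is verifying that $\encadv$'s running time and query complexity are polynomial in the security parameter and that the simulation is perfect; both are immediate from the description of $\realgame$, $\idealgame$, and the efficiency of $\ifpc$ and $\oracle$. I expect no real obstacle beyond carefully matching the distributions in the two hybrids.
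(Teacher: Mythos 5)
Your overall strategy is the same as the paper's: reduce a gap in the probability of $Z_1$ to a distinguishing advantage against the multi-key CPA security of $(\encgen,\encenc,\encdec)$, by building an adversary $\encadv$ that simulates the attack, generates the ciphertexts $\ct^j_i$ for $i \notin S$ via the challenge oracle, and outputs $1$ iff $Z_1$ occurs. (You formulate security as left-or-right with message pair $(\FPCcol^j_i,0)$, while the paper uses a ``real-or-zeroes'' oracle, but these are interchangeable.)

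There is, however, a genuine gap in the step you flag as ``immediate'': perfect simulation. In your reduction, the \emph{challenger} generates and holds all keys $\sk_1,\dots,\sk_{\pop}$, including $\sk_i$ for $i \in S$, and $\encadv$ only sees ciphertexts through the oracle. But $\encadv$ must hand the statistical-query oracle $\oracle$ the actual sample $x = (x_1,\dots,x_n)$ where each $x_j = (u_j,\sk_{u_j})$ \emph{contains the secret key in the clear} --- the whole point of the construction is that $\oracle$ can decrypt $\ct^j_i$ for $i \in S$ using the keys in its sample. If the challenger holds $\sk_i$ for $i \in S$ and never reveals it, $\encadv$ cannot form $x$, so the simulation cannot run at all. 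Querying the oracle on $(i,(\FPCcol^j_i,\FPCcol^j_i))$ for $i \in S$ does not help; that only gives ciphertexts, not keys. The paper fixes this precisely by having $\encadv$ sample its \emph{own} fresh keys $\sk_i \getsr \encgen(1^\security)$ for $i \in S$, form $x$ from those, and invoke the encryption challenger's oracle (on keys $\ssk_i$ that $\encadv$ never sees) only for $i \notin S$. This works because (i) only the entries $i \notin S$ differ between $\realgame$ and $\idealgame$, and (ii) the marginal distribution of $\set{\sk_i}_{i\in S}$ is the same whether sampled by $\encadv$ or by a challenger. So the high-level plan is right, but you need to split key ownership between $\encadv$ (for $i \in S$) and the challenger (for $i \notin S$) for the reduction to go through.
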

The proof is straightforward from the definition of security, and is deferred to Section~\ref{sec:securityproofs}.
Combining Claims~\ref{clm:fewfalseideal} and~\ref{clm:indist1} we easily obtain the following.
\begin{claim} \label{clm:fewfalsereal}
For every computionally efficient oracle $\oracle$, every polynomial $\sample = \sample(\dimension)$, and every sufficiently large $\dimension \in \N$,
$$
\pr{\realgame_{\sample, \dimension}[\oracle]}{|T^{\length} \setminus S| > \pop/1000} \leq \negl(\sample)
$$
\end{claim}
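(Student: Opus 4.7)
The plan is simply to combine the two immediately preceding claims via the triangle inequality. Claim~\ref{clm:fewfalseideal} already provides the bound $\pr{\idealgame_{\sample, \dimension}[\oracle]}{Z_1} \leq \negl(\sample)$, where $Z_1 = \set{|T^{\length} \setminus S| > \pop/1000}$; this was obtained by a direct reduction to the security of the interactive fingerprinting code, exploiting the fact that in the ideal game the queries issued to the oracle depend only on codeword entries that a legitimate IFPC adversary is entitled to see. Claim~\ref{clm:indist1} provides the bound $\left|\pr{\realgame_{\sample, \dimension}[\oracle]}{Z_1} - \pr{\idealgame_{\sample, \dimension}[\oracle]}{Z_1}\right| \leq \negl(\sample)$, which follows from the semantic security of the encryption scheme via a standard hybrid argument over the keys $\sk_i$ for $i \notin S$ (these are the keys whose ciphertexts differ between the two games).

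Given these, the proof of Claim~\ref{clm:fewfalsereal} is immediate. For every computationally efficient oracle $\oracle$, every polynomial $\sample = \sample(\dimension)$, and every sufficiently large $\dimension$, the triangle inequality yields
\begin{align*}
\pr{\realgame_{\sample, \dimension}[\oracle]}{Z_1}
&\leq \pr{\idealgame_{\sample, \dimension}[\oracle]}{Z_1} + \left|\pr{\realgame_{\sample, \dimension}[\oracle]}{Z_1} - \pr{\idealgame_{\sample, \dimension}[\oracle]}{Z_1}\right| \\
&\leq \negl(\sample) + \negl(\sample) \\
&= \negl(\sample),
\end{align*}
which is exactly the bound being claimed.

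There is no substantive obstacle at this step: all of the real technical content (the combinatorial analysis of the interactive fingerprinting code and the cryptographic reduction to the security of the encryption scheme) has already been packaged into the two preceding claims, and the only thing left to do is glue them together. Conceptually, the role of Claim~\ref{clm:fewfalsereal} is to transfer the ideal-game guarantee of the IFPC into a real-game guarantee that holds against every \emph{efficient} oracle, which is what is needed in the subsequent arguments that the adversary $\accadv$ recovers essentially all of $S$ and that the oracle must therefore answer many queries inaccurately.
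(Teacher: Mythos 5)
Your proposal is correct and matches the paper exactly: the paper states that Claim~\ref{clm:fewfalsereal} follows by "combining Claims~\ref{clm:fewfalseideal} and~\ref{clm:indist1}" and leaves the one-line triangle-inequality computation implicit, which is precisely what you have written out.
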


Claim~\ref{clm:fewfalsereal} will be useful because it will allow us to establish that an accurate oracle must give answers that are consistent with the fingerprinting code.  That is, using $\errors^{\length}$ to denote the number of inconsistent answers $\overline{\answer}^1,\dots,\overline{\answer}^{\length}$, we will have $\errors^{\length} \ll \length/2$ with high probability.
\begin{claim}\label{clm:realaccurate}
If $\oracle$ is $(0.99, \rob,1/2)$-accurate for $\length = \length(2000\sample)$ adaptively chosen queries then, for every polynomial $\sample = \sample(\dimension)$ and every sufficiently large $\dimension \in \N$,
$$
\pr{\realgame_{\sample, \dimension}[\oracle]}{\errors^{\length} \leq \rob\length} \geq 1/2 - \negl(\sample)
$$
\end{claim}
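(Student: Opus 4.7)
The plan is to reduce the claim to a deterministic implication: whenever both (i) the set $T^j$ of accused users is small and (ii) the oracle's answer to $q^j$ satisfies $|a^j - q^j(\dist)| \leq 0.99$, the rounded answer $\overline{a}^j$ is guaranteed to be consistent with column $\FPCcol^j$ in the fingerprinting-code sense (i.e., $\overline{a}^j = \FPCcol^j_i$ for some $i \in [\pop]$). Combining this implication with the oracle's accuracy guarantee and Claim~\ref{clm:fewfalsereal} via a union bound then yields the desired probability bound.

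First I would compute the true query value in closed form. By correctness of decryption, $q^j(y_i) = \FPCcol^j_i$ for $i \notin T^j$ and $q^j(y_i) = 0$ for $i \in T^j$, so
$$q^j(\dist) = \frac{1}{\pop} \sum_{i \in [\pop] \setminus T^j} \FPCcol^j_i.$$
Next I would bound $|T^j|$ throughout the execution: since $T^j$ is monotone in $j$ and $|S| \leq \sample = \pop/2000$, Claim~\ref{clm:fewfalsereal} gives $|T^j| \leq |S| + |T^\length \setminus S| \leq 3\pop/2000 < \pop/500$ for every $j \in [\length]$, except with probability $\negl(\sample)$.

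The heart of the argument is a short case analysis. If $\FPCcol^j$ is non-constant, then both $+1$ and $-1$ appear as entries, and any $\overline{a}^j \in \pmo$ is trivially consistent. If instead $\FPCcol^j \equiv b$ for some $b \in \pmo$, then $q^j(\dist) = b \cdot (1 - |T^j|/\pop)$ has magnitude at least $1 - 1/500 > 0.99$; since $|a^j - q^j(\dist)| \leq 0.99$, the answer $a^j$ must have the same sign as $q^j(\dist)$, so $\overline{a}^j = b$, which equals $\FPCcol^j_i$ for every $i$.

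Finally, I would assemble the pieces. The $(0.99, \rob, 1/2)$-accuracy of $\oracle$ ensures that, except with probability $1/2$, the oracle is accurate on at least $(1-\rob)\length$ queries; Claim~\ref{clm:fewfalsereal} ensures $|T^j| < \pop/500$ throughout, except with probability $\negl(\sample)$. Union bounding and applying the case analysis on the good event, every accurate query produces a consistent rounded answer, so $\errors^\length$ is upper bounded by the number of inaccurate queries, which is at most $\rob\length$. The only subtle point is the numerical slack ensuring $0.99 + 1/500 < 1$, so that the sign-agreement argument in the constant-column case goes through; the rest is bookkeeping.
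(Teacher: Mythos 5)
Your proposal is correct and takes essentially the same route as the paper: decompose via the accuracy guarantee plus Claim~\ref{clm:fewfalsereal}, observe that accuracy forces the rounded answer to match the constant-column value, and union bound. The only cosmetic difference is that you argue directly from $|q^j(\dist)| \geq 1-|T^{j-1}|/\pop$ in the constant-column case, whereas the paper first passes to $\ex{i\in[\pop]}{\FPCcol^j_i}$ via a triangle-inequality step (eq.~\eqref{eq:fd1}) and then reasons with a $0.99+1/500$ slack; the content is identical.
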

\begin{proof}
In the attack, the oracle's input consists of $\sample$ samples from $\dist$, and the total number of queries issued is $\length$.   Therefore, by the assumption that $\oracle$ is $(0.99, \rob,1/2)$-accurate for $\length$ queries, we have
\begin{equation} \label{eq:fd0}
\pr{}{\textrm{For $(1-\rob)\length$ choices of $j \in [\length]$,} \atop \left| \oracle(x, \query^{j}) - \ex{(i, \sk_i) \getsr \dist}{\query^{j}(i, \sk_{i})} \right| \leq 0.99} \geq 1/2.
\end{equation}
Observe that, by construction, for every $j \in [\length]$,
\begin{align}
&\left| \ex{(i, \sk_{i}) \getsr \dist}{\query^{j}(i, \sk_{i})} - \ex{i \in [\pop]}{\FPCcol^{j}_i}\right| \notag \\
={} &\left| \left(\frac{1}{\pop} \sum_{i \in [\pop] \setminus T^{j-1}} \encdec(\sk_{i}, \ct^{j}_{i}) + \frac{1}{\pop} \sum_{i \in T^{j-1}} 0 \right) - \ex{i \in [\pop]}{\FPCcol^{j}_i}\right|  \notag \\
={} &\left| \left(\frac{1}{\pop} \sum_{i \in [\pop] \setminus T^{j-1}} \FPCcol^{j}_{i}\right) - \frac{1}{\pop} \sum_{i \in [\pop]}{\FPCcol^{j}_i}\right| \notag \\
=& \left| -\frac{1}{\pop} \sum_{i \in T^{j-1}} \FPCcol^j_i \right| \notag\\
\leq{} &\frac{\left| T^{j-1} \right|}{\pop} \notag \\
\leq& \frac{|T^{j-1} \setminus S|+|S|}{\pop} \label{eq:fd1}
\end{align}
where the second equality is because by construction $\ct^{j}_{i} \getsr \encenc(\sk_i, \FPCcol^{j}_{i})$ and the inequality is because we have $\FPCcol^{j}_i \in \pmo$.

By Claim~\ref{clm:fewfalsereal}, and the fact that $T^{j-1} \subseteq T^{\length}$, we have
$$
\pr{}{|T^{j-1} \setminus S| > \pop/1000 } \leq \negl(\sample).
$$
Noting that $\pop/1000 +\sample < \pop/500$ and combining with~\eqref{eq:fd1}, we have
\begin{equation}
\pr{}{\forall\; j \in [\length],\; \left| \ex{(i, \sk_{i}) \getsr \dist}{\query^{j}(i, \sk_{i})} - \ex{i \in [\sample]}{\FPCcol^{j}_i}\right| \leq 1/500} \geq 1 - \negl(\sample)
\label{eqn:popaccurate}
\end{equation}

Applying the triangle inequality to~\eqref{eq:fd0} and~\eqref{eqn:popaccurate}, we obtain
\begin{equation} \label{eq:fd2}
\pr{}{\textrm{For $(1-\rob)\length$ choices of $j \in [\length]$,} \atop \left| \oracle(x, \query^{j}) - \ex{i \in
[\pop]}{\FPCcol^{j}_{i}} \right| \leq 0.99+1/500} \geq 1/2 - \negl(\sample).
\end{equation}

Fix a $j \in [\length]$ such that $\answer^{j}$ is $0.99$-accurate for query $\query^j$.  If $\FPCcol^{j}_{i} = 1$ for every $i \in [\pop]$, then $\answer^j = \oracle(x, \query^{j}) \geq 1 - 0.99 - 1/500$, so the rounded answer $\overline{a}^{j} = 1$.  Similarly if $\FPCcol^{j}_{i} = -1$ for every $i \in [\pop]$, $\overline{a}^{j} = -1$.  Therefore there must exist $i \in [\pop]$ so that $\overline{\answer}^{j} = \FPCcol^{j}_{i}$.  Thus there are $(1-\rob)\length$ choices of $j \in [\length]$ for which this condition holds, so the number of errors $\errors^{\length}$ is at most $\rob\length$.  This completes the proof of the claim.
\end{proof}

As before, we can argue that the real attack and the ideal attack are computationally indistinguishable, and thus the oracle must also give consistent answers in the ideal attack.
\begin{claim}\label{clm:indist2}
Let $Z_{2}$ be the event
$
\set{\errors^{\length} \leq \rob\length}.
$
Assume $(\encgen, \encenc, \encdec)$ is a computationally secure encryption scheme and let $\sample = \sample(\dimension)$ be any polynomial.  Then if $\oracle$ is computationally efficient, for every $\dimension \in \N$
$$
\left| \pr{\idealgame_{\sample, \dimension}[\oracle]}{Z_{2}} - \pr{\realgame_{\sample, \dimension}[\oracle]}{Z_{2}} \right| \leq \negl(\sample)
$$
\end{claim}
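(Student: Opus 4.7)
The plan is to establish Claim~\ref{clm:indist2} by the same cryptographic reduction used for Claim~\ref{clm:indist1}, just with a different ``output'' event. Recall that $\realgame$ and $\idealgame$ differ \emph{only} in how the ciphertexts $\ct^j_i$ are formed for $i \in [\pop] \setminus S$: in $\realgame$ we set $\ct^j_i \getsr \encenc(\sk_i, \FPCcol^j_i)$, whereas in $\idealgame$ we set $\ct^j_i \getsr \encenc(\sk_i, 0)$. Crucially, the secret keys $\sk_i$ for $i \notin S$ never appear anywhere in the view of $\oracle$ (they are not part of any sample $x_t$, and the queries $\query^j$ depend only on ciphertexts, not on these keys). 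Hence the multi-key semantic security of $(\encgen, \encenc, \encdec)$ should imply that no efficient procedure can distinguish the two games.

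Concretely, I would construct an encryption adversary $\encadv$ that, for the $\pop - |S|$ unknown keys $\{\sk_i\}_{i \in [\pop] \setminus S}$ given by the security game, simulates the attack internally: $\encadv$ generates the keys $\{\sk_i\}_{i \in S}$ itself, samples $x$, runs $\ifpc$ to produce each column $\FPCcol^j$, and constructs $\ct^j_i$ either by encrypting directly (when $i \in S$) or by calling its encryption oracle on the pair $(\FPCcol^j_i, 0)$ (when $i \notin S$). It then invokes $\oracle(x, \query^j)$, rounds to obtain $\overline{\answer}^j$, feeds it back to $\ifpc$, and continues. At the end, $\encadv$ outputs $1$ iff the event $Z_2 = \{\errors^\length \leq \rob \length\}$ holds, which it can check directly from its knowledge of $\FPCcol^1, \dots, \FPCcol^\length$ and $\overline{\answer}^1, \dots, \overline{\answer}^\length$. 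By construction, $\Pr[\encadv = 1]$ equals $\Pr_{\realgame}[Z_2]$ in the ``left'' world of the security game and $\Pr_{\idealgame}[Z_2]$ in the ``right'' world, so semantic security gives that the difference is $\negl(\security)$.

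The only thing that needs checking is that the simulation runs in polynomial time. This uses the three facts that $\oracle$ is efficient by hypothesis, that $\ifpc$ is efficient (Theorem~\ref{thm:ifpcthm} gives an efficient construction with $\length = \poly(\sample)$ rounds and polynomial-time per-round work), and that verifying $Z_2$ requires only counting the rounds in which $\overline{\answer}^j \notin \{\FPCcol^j_1, \dots, \FPCcol^j_\pop\}$. Since $\sample = \sample(\dimension)$ is polynomial in $\dimension$ and $\security = \dimension - \lceil \log_2(2000\sample)\rceil = \Theta(\dimension)$ for large $\dimension$, we have $\sample = \poly(\security)$, so $\negl(\security) = \negl(\sample)$, which yields the bound claimed in the statement.

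I do not expect any real obstacle here: the argument is a template cryptographic reduction and the proof of Claim~\ref{clm:indist1} already established the exact same template for the event $Z_1$. The only reason the proof is not immediate from that earlier proof is that the event $Z_2$ is defined in terms of the oracle's \emph{answers} rather than the fingerprinting code's accusations, but since both events are efficiently checkable from the transcript the reduction is identical. In fact, the natural way to write this up is simply to remark that the proof is verbatim the proof of Claim~\ref{clm:indist1} with ``$Z_1$'' replaced by ``$Z_2$,'' and defer the formal details to Section~\ref{sec:securityproofs}.
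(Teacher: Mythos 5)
Your proposal is correct and matches the paper's proof, which is given in Section~\ref{sec:securityproofs} by constructing a single simulator $\encadv_c$ (Figure~\ref{fig:simulator}) that handles both $Z_1$ and $Z_2$ via the parameter $c \in \{1,2\}$, exactly as you suggest by observing the reduction is verbatim the same with the output event swapped. The only cosmetic difference is that the paper phrases multi-key security via a pair of oracles $\encoracle_0, \encoracle_1$ (encrypt the real message vs.\ encrypt $0$) rather than the left-or-right pair $(\FPCcol^j_i, 0)$ you describe, but these formulations are equivalent.
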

The proof is straightforward from the definition of security, and is deferred to Section~\ref{sec:securityproofs}.
Combining Claims~\ref{clm:realaccurate} and~\ref{clm:indist2} we easily obtain the following.
\begin{claim}\label{clm:idealaccurate}
If $\oracle$ computationally efficient and $(0.99, \rob,1/2)$-accurate for $\length = \length(2000\sample)$ adaptively chosen queries then for every polynomial $\sample = \sample(\dimension)$ and every sufficiently large $\dimension \in \N$,
$$
\pr{\idealgame_{\sample, \dimension}[\oracle]}{\errors^{\length} \leq \rob\length} \geq 1/2 - \negl(\sample).
$$
\end{claim}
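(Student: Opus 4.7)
The plan is to derive Claim~\ref{clm:idealaccurate} as an immediate corollary of the two preceding claims by a simple triangle-inequality argument on probabilities, exactly parallel to how Claim~\ref{clm:fewfalsereal} was derived from Claims~\ref{clm:fewfalseideal} and~\ref{clm:indist1}. Concretely, I would fix any computationally efficient oracle $\oracle$ that is $(0.99,\rob,1/2)$-accurate for $\length = \length(2000\sample)$ adaptively chosen queries, fix any polynomial $\sample = \sample(\dimension)$, and let $Z_2$ denote the event $\{\errors^\length \leq \rob \length\}$.

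First I would invoke Claim~\ref{clm:realaccurate}, which gives
$$
\pr{\realgame_{\sample,\dimension}[\oracle]}{Z_2} \geq 1/2 - \negl(\sample).
$$
Then I would invoke Claim~\ref{clm:indist2} applied to the same oracle $\oracle$; since $Z_2$ depends only on the transcript of the attack (through the rounded answers $\overline{\answer}^1,\dots,\overline{\answer}^\length$ and the columns $\FPCcol^1,\dots,\FPCcol^\length$ produced by $\ifpc$), it is a well-defined event in both games, and
$$
\left| \pr{\idealgame_{\sample,\dimension}[\oracle]}{Z_2} - \pr{\realgame_{\sample,\dimension}[\oracle]}{Z_2} \right| \leq \negl(\sample).
$$
Combining these two bounds by the triangle inequality yields
$$
\pr{\idealgame_{\sample,\dimension}[\oracle]}{Z_2} \geq \pr{\realgame_{\sample,\dimension}[\oracle]}{Z_2} - \negl(\sample) \geq 1/2 - \negl(\sample),
$$
which is exactly the claim, since the sum of two negligible functions of $\sample$ is negligible in $\sample$.

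There is essentially no obstacle here: the substantive work has already been done in Claims~\ref{clm:realaccurate} and~\ref{clm:indist2}. The only thing to double-check is that the hypotheses of the two claims are compatible with the hypotheses of Claim~\ref{clm:idealaccurate} (efficiency of $\oracle$, the accuracy guarantee, and $\sample$ being a polynomial in $\dimension$), which is immediate, and that ``for every sufficiently large $\dimension$'' propagates correctly (take the maximum of the thresholds required by the two earlier claims). No new estimates or concentration arguments are needed.
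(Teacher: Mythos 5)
Your proof is correct and matches the paper's own reasoning: the paper states that Claim~\ref{clm:idealaccurate} follows ``easily'' by combining Claims~\ref{clm:realaccurate} and~\ref{clm:indist2}, which is precisely the triangle-inequality argument you wrote out. No gaps.
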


However, the conclusion of~\ref{clm:idealaccurate} can easily be seen to lead to a contradiction, because the security of the fingerprinting code assures that no attacker who only has access to $\FPCcol^{j}_{S \setminus T^{j-1}}$ in each round $j = 1,\dots,\length$ can give answers that are consistent for $(1-\rob)\length$ of the columns $\FPCcol^{j}$.  Thus, we have
\begin{claim} \label{clm:idealnotaccurate}
For every oracle $\oracle$, every polynomial $\sample = \sample(\dimension)$, and every sufficiently large $\dimension \in \N$,
$$
\pr{\idealgame_{\sample, \dimension}[\oracle]}{\errors^{\length} \leq \rob\length} \leq \negl(\sample)
$$
\end{claim}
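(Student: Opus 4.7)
The plan is to reduce the claim to the completeness guarantee of the interactive fingerprinting code (Theorem~\ref{thm:Completeness}). Given any (possibly inefficient) oracle $\oracle$, I will build a fingerprinting adversary $\fpcadv$ for $\ifpcgame_{\pop, \sample, \length}[\fpcadv, \ifpc]$ whose interaction with $\ifpc$ is distributed identically to $\oracle$'s interaction in $\idealgame_{\sample, \dimension}[\oracle]$. Plugging $\fpcadv$ into Theorem~\ref{thm:Completeness} with $\collusion = \sample$ and $\fpcfalse = 1/1000$ will give the desired bound $\negl(\sample)$ on the probability that $\err^\length \le \rob\length$.

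The construction of $\fpcadv$ is a straightforward simulation. Before any interaction, $\fpcadv$ generates every key $\sk_i \getsr \encgen(1^\security)$, forms $y_i = (i, \sk_i)$, draws $x = (x_1, \dots, x_\sample)$ iid uniform from $\{y_1, \dots, y_\pop\}$, and lets $S \subseteq [\pop]$ be the distinct indices appearing in $x$. Since $|S| \le \sample$, it extends $S$ arbitrarily to a set $S^1 \supseteq S$ with $|S^1| = \sample$ (adding dummy indices if needed) and declares $S^1$ as the coalition. In round $j$, after receiving $\FPCcol^j_{S^1 \setminus T^{j-1}}$ from $\ifpc$, the adversary forms the ciphertexts exactly as in $\idealgame$: encrypt $\FPCcol^j_i$ under $\sk_i$ for $i \in S \setminus T^{j-1}$ (these bits are known since $S \setminus T^{j-1} \subseteq S^1 \setminus T^{j-1}$) and encrypt $0$ under $\sk_i$ for all other $i$. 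The key point is that the ideal query $\query^j$ zeros out the decryption for indices in $T^{j-1}$, so $\fpcadv$ never needs to know $\FPCcol^j_i$ for $i \in S \cap T^{j-1}$, i.e.\ precisely those users whose column bits the game withholds. The adversary then runs $\oracle(x, \query^j)$, rounds its answer to $\overline{a}^j \in \pmo$, and returns $\overline{a}^j$ to $\ifpc$.

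By construction, every step of $\fpcadv$'s simulation uses only information it is allowed to possess, and the joint distribution of $(\FPCmat, \overline{a}^1, \dots, \overline{a}^\length, T^1, \dots, T^\length)$ under $\ifpcgame_{\pop, \sample, \length}[\fpcadv, \ifpc]$ agrees with that under $\idealgame_{\sample, \dimension}[\oracle]$. In particular $\err^\length$ has the same distribution in both experiments. Theorem~\ref{thm:Completeness} with $\collusion = \sample$, $\fpcfalse = 1/1000$, and $\rob < 1/2$ then yields
\[
\pr{\idealgame_{\sample, \dimension}[\oracle]}{\err^\length \le \rob\length} \;=\; \pr{\ifpcgame_{\pop, \sample, \length}[\fpcadv, \ifpc]}{\err^\length \le \rob\length} \;\le\; \fpcfalse^{\frac12 \irob \sample} \;=\; \negl(\sample),
\]
as required.

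The only real subtlety—and the step I would double-check carefully—is the bookkeeping that $\fpcadv$'s query construction in each round only consults $\FPCcol^j_i$ for $i \in S \setminus T^{j-1}$. This relies on two observations: that $S \subseteq S^1$ is preserved throughout the game (any accusation of a user in $S$ simply moves that user into $T^j$, which is fine because the ideal query then ignores that user's ciphertext), and that padding with dummy indices is harmless because $\fpcadv$ never looks at the dummies' columns even though the game reveals them. With this bookkeeping in place, the reduction is perfectly faithful and the claim follows.
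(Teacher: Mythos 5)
Your proof is correct and formalizes the same reduction the paper only sketches (the paper states Claim~\ref{clm:idealnotaccurate} without a written proof, giving just the one-sentence justification that an attacker with access only to $\FPCcol^j_{S\setminus T^{j-1}}$ cannot beat the fingerprinting code's completeness guarantee, mirroring the short proof given for Claim~\ref{clm:fewfalseideal}). Your fleshed-out version is sound, and in fact the padding of $S$ up to a coalition $S^1$ of size exactly $\sample$ is a genuinely necessary bookkeeping step that the paper glosses over: Theorem~\ref{thm:Completeness} (and the score-sum analysis behind it) is stated for $|S^1|=\collusion$, whereas the soundness theorem used in the parallel Claim~\ref{clm:fewfalseideal} tolerates $|S^1|\le\collusion$, so the asymmetry you flag is real and you handle it correctly.
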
 

Putting the above claims together, we obtain the main theorem:

\begin{proof}[Proof of Theorem \ref{thm:main1-formal}]
Assume for the sake of contradiction that there were such an oracle. 
Theorem \ref{thm:ifpcthm} implies that an interactive fingerprinting code of length $O(\sample^2 / \irob^4)$ exists, so the attack can be carried out.
By Claim~\ref{clm:idealaccurate} we would have
$$
\pr{\idealgame_{\sample, \dimension}[\oracle]}{\errors^{\length} \leq \rob\length} \geq 1/2 - \negl(\sample).
$$
But, by Claim~\ref{clm:idealnotaccurate} we have
$$
\pr{\idealgame_{\sample, \dimension}[\oracle]}{\errors^{\length} \leq \rob\length} \leq \negl(\sample),
$$
which is a contradiction.
\end{proof}

Note that the constants in the $(0.99,\beta,1/2)$-accuracy assumption are arbitrary and have only been fixed for simplicity.

\subsection{An Information-Theoretic Lower Bound} \label{sec:infotheoretic}
As in~\cite{HardtU14}, we observe that the techniques underlying our computational hardness result can also be used to prove an information-theoretic lower bound when the dimension of the data is large.  At a high level, the argument uses the fact that the encryption scheme we rely on only needs to satisfy relatively weak security properties, specifically security for at most $O(n^2)$ messages.  This security property can actually be achieved against computationally unbounded adversaries provided that the length of the secret keys is $O(n^2)$.  As a result, our lower bound can be made to hold against computationally unbounded oracles, but since the secret keys have length $O(n^2)$, we will require $d = O(n^2)$.  We refer the reader to~\cite{HardtU14} for a slightly more detailed discussion, and simply state the following result.
\begin{theorem}[Theorem \ref{thm:main2}]
For all $\rob < 1/2$, there is a function $\length(2000\sample, \rob) = O(\sample^2 / \irob^4)$ such that there is no oracle $\oracle$ (even one that is computationally unbounded) that is $(0.99, \rob,1/2)$-accurate for $\length(2000\sample, \rob)$ adaptively chosen queries given $\sample$ samples in $\bits^{\dimension}$ when $\dimension \geq \length(2000\sample, \rob)$.
\end{theorem}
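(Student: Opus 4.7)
The plan is to follow the same overall strategy as the proof of Theorem \ref{thm:main1-formal}, but replace the computationally secure encryption scheme with an information-theoretically secure one, namely a one-time pad. This lets us drop the computational assumption at the cost of making the secret keys long enough to support all encryptions we will perform, which forces the dimension $\dimension$ to grow with $\length$.

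First, I would redefine the distribution $\dist$ so that each sample $y_i = (i, \sk_i)$ has $\sk_i$ drawn uniformly from $\{0,1\}^m$ where $m = O(\length)$ is large enough to one-time-pad-encrypt each of the $\length$ trits in $\{-1,0,1\}$ that the attack sends to user $i$ (e.g., two fresh key bits per round). Since $\length = O(\sample^2/\irob^4)$ by Theorem \ref{thm:ifpcthm}, this gives samples in $\bits^\dimension$ with $\dimension = \lceil \log_2 \pop \rceil + m = O(\length)$, which matches the hypothesis $\dimension \geq \length(2000\sample, \rob)$. The attack and the ideal variant are defined exactly as in Figures \ref{fig:attack} and \ref{fig:idealattack}, but with $\encenc(\sk_i, \cdot)$ interpreted as one-time-pad encryption using the $j$-th fresh block of $\sk_i$ at round $j$.

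Next, I would replay the four-step structure of the proof of Theorem \ref{thm:main1-formal}. Step (i), the analogue of Claim \ref{clm:fewfalseideal}, is unchanged: in $\idealgame$ the queries depend only on $\FPCcol^j_{S \setminus T^{j-1}}$, so by the soundness guarantee of the interactive fingerprinting code (Theorem \ref{thm:ifpcthm}) we have $|T^\length \setminus S| \leq \pop/1000$ except with probability $\negl(\sample)$. Step (ii), the analogues of Claims \ref{clm:indist1} and \ref{clm:indist2}, becomes \emph{strictly simpler}: because the one-time pad is perfectly secret and every key block is used only once, and because the sample $x$ reveals $\sk_i$ only for $i \in S$, the distributions of $\realgame$ and $\idealgame$ are \emph{identical} from the oracle's viewpoint (not merely computationally indistinguishable), regardless of the oracle's running time. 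Hence every event probability agrees exactly between the two games.

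Step (iii), the analogue of Claim \ref{clm:realaccurate}, carries over verbatim: the $(0.99,\rob,1/2)$-accuracy assumption combined with the triangle inequality using $|T^{j-1}|/\pop \leq (\pop/1000 + \sample)/\pop \leq 1/500$ (which holds by step (i) transferred to $\realgame$ via step (ii)) forces the rounded answer $\overline{\answer}^j$ to equal some $\FPCcol^j_i$ on all but a $\rob$-fraction of rounds, with probability at least $1/2 - \negl(\sample)$. Step (iv), the analogue of Claim \ref{clm:idealnotaccurate}, is again unchanged and follows from the completeness of the interactive fingerprinting code (Theorem \ref{thm:Completeness}), yielding a contradiction. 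The main obstacle is really just the bookkeeping in step (ii): one must be careful that the $\length$ encryptions per user use disjoint key material so that the pad argument applies independently to each round; given this, the conclusion is strictly cleaner than the computational case because no cryptographic reduction is needed.
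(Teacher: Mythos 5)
Your proposal is correct and takes essentially the same approach as the paper: replace the computationally secure encryption with a one-time pad, which provides perfect secrecy for a bounded number of messages so long as $d = O(\ell)$, and then rerun the four-step argument with Claims \ref{clm:indist1} and \ref{clm:indist2} upgraded from ``computationally indistinguishable'' to ``identically distributed.'' The paper itself only sketches this argument in Section \ref{sec:infotheoretic} and defers details to \cite{HardtU14}; your write-up is a faithful and correctly reasoned fleshing-out of that sketch, including the bookkeeping point that each of the $\ell$ encryptions per user must consume a disjoint block of key bits.
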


\section{Hardness of Avoiding Blatant Non Privacy} \label{sec:nonprivacy}
In this section we show how our arguments also imply that computationally efficient oracles that guarantee accuracy for adaptively chosen statistical queries must be blatantly non-private.

\subsection{Blatant Non Privacy and Sample Accuracy}
Before we can define blatant non-privacy, we need to define a notion of accuracy that is more appropriate for the application to privacy.  In contrast to Definition~\ref{def:accurateoracle} where accuracy is defined with respect to the distribution, here we define accurate with respect to the sample itself.  With this change in mind, we model blatant non-privacy via the following game.

\begin{figure}[ht]
\begin{framed}
\begin{algorithmic}
\STATE{$\bnpadv$ chooses a set $y = \{y_1,\dots,y_{2\sample}\} \subseteq \bits^{\dimension}$}
\STATE{Sample a random subsample $x \subseteq_{\mbox \tiny R} y$ of size $\sample$}
\STATE{For $j = 1,\dots,\queries$}
\INDSTATE[1]{$\bnpadv$ outputs a query $\query^j$}
\INDSTATE[1]{$\oracle(x, \query^{j})$ outputs $a^{j}$}
\INDSTATE[1]{(As $\bnpadv$ and $\oracle$ are stateful, $\query^j$ and $a^j$ may depend on $\query^1, a^1, \dots, \query^{j-1},a^{j-1}$.)}
\STATE{$\bnpadv$ outputs a set $x' \subseteq y$}
\end{algorithmic}
\end{framed}
\vspace{-6mm}
\caption{$\bnpgame_{\sample, \dimension}[\oracle, \bnpadv]$}
\end{figure}

\begin{definition}
An oracle $\oracle$ is \emph{$(\alpha,\beta, \gamma)$-sample-accurate for $\queries$ adaptively chosen queries given $\sample$ samples in $\bits^{\dimension}$} if for every adversary $\bnpadv$,
$$
\pr{\bnpgame_{\sample, \dimension, \queries}[\oracle, \bnpadv]}{\textrm{For $(1-\beta)\queries$ choices of $j \in [\queries]$, $\left| \oracle(x, \query^{j}) - \query^j(x) \right| \leq \alpha$}}
\geq 1 - \gamma\,
$$
where $\query(x) = \frac{1}{\sample} \sum_{i \in [\sample]} \query(x_i)$ is the average over the \emph{sample}.

As a shorthand, we will say that $\oracle$ is \emph{$(\alpha, \beta)$-sample-accurate for $\queries$ queries} if for every $\sample, \dimension \in \N$, $\oracle$ is $(\alpha,\beta,o_{\sample}(1))$-accurate for $\queries$ queries given $\sample$ samples in $\bits^{\dimension}$.  Here, $\queries$ may depend on $\sample$ and $\dimension$ and $o_{\sample}(1)$ is a function of $\sample$ that tends to $0$.
\end{definition}

\begin{definition}
An oracle $\oracle$ is \emph{blatantly non-private} if there exists an adversary $\bnpadv$ such that
$$
\pr{\bnpgame_{\sample, \dimension, \queries}[\oracle, \bnpadv]}{|x \triangle x'| >  \sample / 100} \leq o_{\sample}(1)
$$

\end{definition}

\subsection{Lower Bounds}
In this section we show the following theorem
\begin{theorem} \label{thm:nonprivacy1}
Assuming one-way functions exist, any computationally efficient oracle $\oracle$ that gives accurate answers to $O(\sample^2)$ adaptively chosen queries is blatantly non-private.
\end{theorem}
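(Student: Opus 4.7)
The plan is to mirror the framework of Theorem~\ref{thm:main1-formal}, using an attack that follows the template of Figure~\ref{fig:attack} but with the population size matched to the non-privacy game. The attacker $\bnpadv$ selects $y = \{y_1, \ldots, y_{2\sample}\}$ with $y_i = (i, \sk_i)$, instantiates an interactive fingerprinting code $\ifpc$ with collusion size $\sample$, population $\pop = 2\sample$, false-accusation parameter $\fpcfalse = 1/200$, and robustness $\rob = 1/2 - \Omega(1)$ (giving length $\length = O(\sample^2)$ by Theorem~\ref{thm:ifpcthm}), runs exactly the query protocol of Figure~\ref{fig:attack}, and finally outputs $x' = \{y_i : i \in T^\length\}$. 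As in Section~\ref{sec:falsedisc}, I would analyse an ideal variant $\idealbnpattack$ in which $\ct^j_i = \encenc(\sk_i, 0)$ for $i \notin S$, and then lift to the real game using the same computational-indistinguishability argument as Claim~\ref{clm:indist1}.

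Assuming the oracle is $(\alpha, \beta, 1/2)$-sample-accurate with $\alpha < 1/200$ and $\beta$ sufficiently smaller than $\rob$, two ingredients have to be combined in the ideal game. First, soundness of the IFPC (Theorem~\ref{thm:Soundness} with $\fpcfalse = 1/200$) directly yields $|T^\length \setminus S| \leq \sample/200$ with overwhelming probability. Second, I need sample accuracy to force the oracle to be sufficiently consistent with the fingerprinting code. In any constant round $j$ where $\FPCcol^j = (\pm 1)^{2\sample}$, the sample average is $\query^j(x) = \pm|S \setminus T^j|/\sample$ (thanks to the $T^j$-exception in the query definition). Whenever $|S \setminus T^j| > \alpha\sample$, sample accuracy forces the rounded answer $\overline{\answer}^j$ to take the common sign of $\FPCcol^j$, so the round is consistent; non-constant rounds are automatically consistent.

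To conclude $|S \setminus T^\length| \leq \alpha\sample$, I would let $j^*$ be the first round where $|S \setminus T^{j^*}| \leq \alpha\sample$ and bound the total number of inconsistent rounds by $\err^\length \leq \beta\length + 2\mass(\length - j^*) + o(\length)$, using a Chernoff bound on the number of constant rounds in $[j^*, \length]$ (which occur with per-round probability $2\mass$). Completeness of the IFPC (Theorem~\ref{thm:Completeness}) forces $\err^\length > \rob\length$ with high probability. Rearranging gives $\length - j^* > (\rob-\beta)\length/(2\mass) - o(\length)$, so $j^*$ is bounded away from $\length$ and hence $|S \setminus T^\length| \leq \alpha\sample \leq \sample/200$ by monotonicity. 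Adding this to the soundness bound gives $|x \triangle x'| \leq \sample/100$ in the ideal game with high probability, and the indistinguishability step transfers this to the real game.

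The hard part will be the consistency-from-accuracy argument in the second paragraph. In Theorem~\ref{thm:main1-formal}, $\query^j(\dist)$ stays close to $\pm 1$ in constant rounds regardless of the accused set because $\pop = 2000\sample$ dwarfs it; here the sample average $\query^j(x)$ instead shrinks as users are accused. The key observation is that the moment $|S\setminus T^j|/\sample$ drops below $\alpha$ is precisely the moment at which the IFPC has already reconstructed nearly all of the sample, which is exactly what yields blatant non-privacy.
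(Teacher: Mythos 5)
Your proposal correctly pinpoints the real obstacle that distinguishes this proof from Theorem~\ref{thm:main1-formal}: as users are accused, $\query^j(x)$ collapses toward zero in constant rounds, so the unscaled rounded answer is no longer forced to match $\FPCcol^j$ by accuracy alone. The paper handles this by a different device than you propose: in Figure~\ref{fig:bnpattack} it rounds the \emph{rescaled} quantity $(\sample/(\sample-|T^{j-1}|))\answer^j$ and adds an early-termination rule triggered when $|T^j| > 499\sample/500$. As long as the attack is still running, the rescaled answer tracks $\ex{i\in (x\setminus T^{j-1})}{\FPCcol^j_i}$ uniformly, so \emph{every} accurate round is consistent (Claim~\ref{clm:bnprealaccurate}), and reconstruction falls out either from early termination ($|T^L| > 499\sample/500$) or from a contradiction with Theorem~\ref{thm:Completeness} when $L=\length$.

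Your route skips both the rescaling and the early termination, arguing instead that inconsistency is confined to inaccurate rounds plus constant rounds after the first index $j^*$ at which $|S\setminus T^{j^*}|\le\alpha\sample$, and then using completeness of the IFPC to force $j^*$ to exist---which is precisely the reconstruction statement. This is sound, and in some respects tidier: a single case, and $\fpcfalse$ need not be tuned against a rescaling factor (which is why the paper goes down to $\fpcfalse = 1/20000$). A few points worth streamlining. The Chernoff step on constant rounds in $[j^*,\length]$ is both unnecessary and slightly slippery: it undercounts rounds that happen to be constant even when $p^j\notin\{0,1\}$, and needs care because $j^*$ is a stopping time. You can drop it entirely: if $j^*$ does not occur by round $\length$, then $\err^\length \le \beta\length \le \rob\length$, which already contradicts Theorem~\ref{thm:Completeness}; hence $j^*$ exists, and monotonicity of $|S\setminus T^j|$ gives $|S\setminus T^\length|\le\alpha\sample$. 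This also removes your requirement that $\beta$ be ``sufficiently smaller than $\rob$''---$\beta\le\rob$ suffices, matching the paper's parameters. One small indexing note: the query in round $j$ masks with $T^{j-1}$, so the constant-round sample average is $\pm|S\setminus T^{j-1}|/\sample$, not $\pm|S\setminus T^{j}|/\sample$.
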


The attack is defined in Figure \ref{fig:bnpattack}. Therein $\ifpc$ is a $\sample$-collusion-resilient interactive fingerprinting code of length $\length$ for $\pop=2\sample$ users robust to a $\rob$ fraction of errors with false accusation probability $\fpcfalse=1/20000$. And $(\encgen, \encenc, \encdec)$ is a computationally secure encryption scheme.

\begin{figure}[ht]
\begin{framed}
\begin{algorithmic}
\STATE{The set $y$:}
\INDSTATE[1]{Given parameters $\dimension, \sample$, let $\security = \dimension - \lceil \log_2(2\sample) \rceil$.}
\INDSTATE[1]{For $i \in [2\sample]$, let $\sk_{i} \getsr \encgen(1^{\security})$ and let $y_i = (i, \sk_{i})$.}
\STATE{}
\STATE{Attack:}
\INDSTATE[1]{Let $T^1 = \emptyset$.}
\INDSTATE[1]{For $j = 1,\dots,\length = \length(2\sample)$:}
\INDSTATE[2]{Let $\FPCcol^{j} \in \pmo^{2\sample}$ be the column given by $\ifpc$.}
\INDSTATE[2]{For $i = 1,\dots,2\sample$, let $\ct^{j}_{i} = \encenc(\sk_i, \FPCcol^{j}_{i})$.}
\INDSTATE[2]{Define the query $\query^{j}(i', \sk')$ to be $\encdec(\sk', \ct^j_{i'})$ if $i' \not\in T^{j}$ and $0$ otherwise.}
\INDSTATE[2]{Let $a^{j} = \oracle(x; \query^{j})$ and round $(\sample/(\sample - |T^{j-1}|))\answer^{j}$ to $\pmo$ to obtain $\overline{a}^j$.}
\INDSTATE[2]{Give $\overline{a}^j$ to $\ifpc$ and let $I^{j} \subseteq [\pop]$ be the set of accused users and $T^j = T^{j-1} \cup I^{j}$.}
\INDSTATE[2]{If $|T^{j}| > 499\sample/500$, let $L = j$, halt, and output $x' = \{ y_i : i \in T^{L} \}$.}
\INDSTATE[1]{Let $L = \length$, and output $x' = \{ y_i : i \in T^{L} \}$.}
\end{algorithmic}
\end{framed}
\vspace{-6mm}
\caption{$\bnpattack_{\sample, \dimension}[\oracle]$}
\label{fig:bnpattack}
\end{figure}

We will start by establishing that the number of falsely accused users is small.  That is, we have $|T^{L}\setminus x| \leq \sample/10000$ with high probability.  As in Section~\ref{sec:falsedisc}, this condition will follow from the security of the interactive fingerprinting code $\ifpc$ combined with the security of the encryption scheme, via the introduction of an ``ideal attack'' (Figure~\ref{fig:idealbnpattack}).

\begin{figure}[ht]
\begin{framed}
\begin{algorithmic}
\STATE{The set $y$:}
\INDSTATE[1]{Given parameters $\dimension, \sample$, let $\security = \dimension - \lceil \log_2(2\sample) \rceil$.}
\INDSTATE[1]{For $i \in [2\sample]$, let $\sk_{i} \getsr \encgen(1^{\security})$ and let $y_i = (i, \sk_{i})$.}
\STATE{}
\STATE{Attack:}
\INDSTATE[1]{Let $T^1 = \emptyset$.}
\INDSTATE[1]{For $j = 1,\dots,\length = \length(2\sample)$:}
\INDSTATE[2]{Let $\FPCcol^{j} \in \pmo^{2\sample}$ be the column given by $\ifpc$.}
\INDSTATE[2]{For $i = 1,\dots,2\sample$, let $\ct^{j}_{i} = \encenc(\sk_i, \FPCcol^{j}_{i})$.}
\INDSTATE[2]{\color{DarkRed}{For $i \in S$, let $\ct^{j}_{i} = \encenc(\sk_i, \FPCcol^{j}_{i})$, for $i \in [\pop] \setminus x$, let $\ct^{j}_{i} = \encenc(\sk_i, 0)$.}}
\INDSTATE[2]{Let $a^{j} = \oracle(x; \query^{j})$ and round $(\sample/(\sample - |T^{j-1}|))\answer^{j}$ to $\pmo$ to obtain $\overline{a}^j$.}
\INDSTATE[2]{Give $\overline{a}^j$ to $\ifpc$ and let $I^{j} \subseteq [\pop]$ be the set of accused users and $T^j = T^{j-1} \cup I^{j}$.}
\INDSTATE[2]{If $|T^{j}| > 499\sample/500$, let $L = j$, halt, and output $x' = \{ y_i : i \in T^{L} \}$.}
\INDSTATE[1]{Let $L = \length$, and output $x' = \{ y_i : i \in T^{L} \}$.}
\end{algorithmic}
\end{framed}
\vspace{-6mm}
\caption{$\idealbnpattack_{\sample, \dimension}[\oracle]$}
\label{fig:idealbnpattack}
\end{figure}

\begin{claim} \label{clm:bnpfewfalseideal}
For every oracle $\oracle$, every polynomial $\sample = \sample(\dimension)$, and every sufficiently large $\dimension \in \N$,
$$
\pr{\idealgame_{\sample, \dimension}[\oracle]}{|T^{L} \setminus x| > \sample/10000} \leq \negl(\sample)
$$
\end{claim}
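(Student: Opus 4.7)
The plan is to reduce directly to the soundness guarantee of the interactive fingerprinting code $\ifpc$, without invoking any cryptographic assumption (the cryptographic step will be needed only later when comparing $\bnpattack$ with $\idealbnpattack$). I will construct an IFPC adversary $\fpcadv$ that plays $\ifpcgame_{2\sample, \sample, \length}[\fpcadv, \ifpc]$ by internally simulating $\idealbnpattack_{\sample,\dimension}[\oracle]$. Concretely, $\fpcadv$ will sample its own secret keys $\sk_1,\dots,\sk_{2\sample} \getsr \encgen(1^\security)$ and fix an arbitrary size-$\sample$ set $S^1 \subseteq [2\sample]$ to play the role of the subsample indices $x$; since the IFPC guarantee holds uniformly over $S^1$, this loses nothing.

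In each round $j$, upon receiving $\FPCcol^j_{S^j}$ from $\ifpc$, $\fpcadv$ will form $\ct^j_i = \encenc(\sk_i, \FPCcol^j_i)$ for $i \in S^j$ and $\ct^j_i = \encenc(\sk_i, 0)$ for $i \in [2\sample] \setminus S^j$, build the query $\query^j$ exactly as in the attack, feed it to $\oracle$, round $(\sample/(\sample-|T^{j-1}|)) a^j$ to $\overline{a}^j \in \pmo$, and return $\overline{a}^j$ to $\ifpc$. It updates $T^j = T^{j-1} \cup I^j$ from the accusations $I^j$ reported by $\ifpc$; if the halt condition $|T^j| > 499\sample/500$ fires, it sets $L = j$ and plays trivially (e.g., always answering $1$) for the remaining rounds. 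The central observation is that the oracle's view in this simulation is \emph{identical} --- not merely computationally indistinguishable --- to its view in $\idealbnpattack$: in both games $S^j = x \setminus T^{j-1}$, and since $\query^j(i',\sk')$ ignores $\ct^j_{i'}$ whenever $i' \in T^{j-1}$, only the ciphertexts for $i \in S^j$ (which encrypt $\FPCcol^j_i$) and for $i \in [2\sample] \setminus x$ (which encrypt $0$) actually affect the query, and these are sampled identically in the two games.

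Consequently, the joint distribution of the accusation sets $(I^1,\dots,I^L)$, and hence of $T^L \setminus x$, is the same under the two games. Applying Theorem~\ref{thm:ifpcthm} with $\pop = 2\sample$, $\collusion = \sample$, and $\fpcfalse = 1/20000$, the total number of false accusations satisfies $\psi^{\length} \leq \fpcfalse(\pop - \collusion) = \sample/20000$ except with probability $\fpcfail = \negl(\sample)$. Since $L \leq \length$ yields $|T^L \setminus x| \leq \psi^L \leq \psi^{\length}$, we conclude $|T^L \setminus x| \leq \sample/20000 < \sample/10000$ with overwhelming probability, which is the claim. I do not anticipate any serious obstacle; the one point requiring care is that $\fpcadv$ can compute the rounding factor $\sample/(\sample-|T^{j-1}|)$, which is well-defined because the halt condition enforces $|T^{j-1}| \leq 499\sample/500 < \sample$ while the simulation is active, and $\fpcadv$ observes $T^{j-1}$ from the public accusations $I^j$ issued by $\ifpc$.
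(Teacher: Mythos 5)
Your proof is correct and follows essentially the same approach as the paper: both reduce directly to the soundness guarantee of the interactive fingerprinting code (Theorem~\ref{thm:ifpcthm}), by observing that the oracle's view in $\idealbnpattack$ can be perfectly simulated by an IFPC adversary who sees only $\FPCcol^j_{S^j}$, since the query $\query^j$ depends only on ciphertexts $\ct^j_{i'}$ for $i' \notin T^{j-1}$, which coincide in the two executions. Your writeup simply makes explicit the simulator, the halt-condition bookkeeping, and the fact $|T^L \setminus x| = \falseaccs^L \leq \falseaccs^\length$, all of which the paper states more tersely.
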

\begin{proof}
This follows straightforwardly from a reduction to the security of the fingerprinting code.  Notice that since the query $\query^j$ does not depend on any entry $\FPCcol^{j}_{i}$ for $i \not\in x \setminus T^{j-1}$.  Thus, an adversary for the fingerprinting code who has access to $\FPCcol^{j}_{x \setminus T^{j-1}}$ can simulate the view of the oracle.  Since we have for any adversary $\fpcadv$
$$
\pr{\ifpcgame_{\pop, \sample, \length}[\fpcadv, \ifpc]}{\psi^{\length} > \pop/20000} \leq \negl(\sample),
$$
we also have
$$
\pr{\idealbnpattack_{\sample, \dimension}[\oracle]}{|T^{L} \setminus x | > \sample/10000} \leq \negl(\sample),
$$
where we have used the fact that $|T^L \setminus x | = \falseaccs^{L} \leq \falseaccs^{\length}$.  This completes the proof.
\end{proof}

Now we can argue that an efficient oracle cannot distinguish between the real attack and the ideal attack.  Thus the conclusion that $|T^{L} \setminus x| \leq \sample/10000$ with high probability must also hold in the real game.
\begin{claim}\label{clm:bnpindist1}
Let $Z_{1}$ be the event
$
\set{|T^{L} \setminus x| > \sample/10000}
$.
Assume $(\encgen, \encenc, \encdec)$ is a computationally secure encryption scheme and let $\sample = \sample(\dimension)$ be any polynomial.  Then if $\oracle$ is computationally efficient, for every $\dimension \in \N$
$$
\left| \pr{\idealbnpattack_{\sample, \dimension}[\oracle]}{Z_{1}} - \pr{\bnpattack_{\sample, \dimension}[\oracle]}{Z_{1}} \right| \leq \negl(\sample)
$$
\end{claim}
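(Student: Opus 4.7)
The proof will proceed by a standard hybrid/reduction argument, essentially identical in structure to Claim~\ref{clm:indist1} in Section~\ref{sec:falsedisc}, modulo the cosmetic differences between $\bnpattack$ and $\realgame$ (namely, a fixed set $y$ replaces the distribution $\dist$, the subsample $x$ is now drawn from $y$, and the rounding is rescaled by $\sample/(\sample - |T^{j-1}|)$). The plan is to suppose for contradiction that for some polynomial $\sample(\dimension)$ and infinitely many $\dimension$, the distinguishing gap
\[
\left| \pr{\idealbnpattack_{\sample, \dimension}[\oracle]}{Z_1} - \pr{\bnpattack_{\sample, \dimension}[\oracle]}{Z_1} \right|
\]
is non-negligible, and exhibit an efficient adversary $\encadv$ against the (multi-key) semantic security of $(\encgen, \encenc, \encdec)$.

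The reduction $\encadv$ will proceed as follows. It receives from the encryption challenger a family of $\pop - \sample = \sample$ public parameters corresponding to unknown keys $\{\sk_i\}_{i \in [2\sample] \setminus x}$, together with a left-or-right encryption oracle for each key. It then internally chooses a random subset $x \subset [2\sample]$ of size $\sample$ and generates $\sk_i \getsr \encgen(1^\security)$ itself for $i \in x$, so that $y_i = (i, \sk_i)$ is completely known on the sample. Next, it runs the attack exactly as in $\bnpattack$, but whenever the attack would form $\ct^j_i$ for some $i \notin x$, the reduction instead queries the challenger's left-or-right oracle with the message pair $(\FPCcol^j_i, 0)$ under key $\sk_i$, and uses the returned ciphertext as $\ct^j_i$. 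Ciphertexts for $i \in x$ are generated honestly using the keys $\encadv$ knows. The interaction with $\ifpc$ and the calls to $\oracle(x, \query^j)$ proceed verbatim. At the end, $\encadv$ computes the set $T^L$ produced by the simulated attack and outputs $1$ if $|T^L \setminus x| > \sample/10000$ and $0$ otherwise.

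By construction, if the challenger is in the ``left'' mode (returning encryptions of $\FPCcol^j_i$), the simulation is distributed identically to $\bnpattack_{\sample, \dimension}[\oracle]$, and if it is in the ``right'' mode (returning encryptions of $0$), the simulation is distributed identically to $\idealbnpattack_{\sample, \dimension}[\oracle]$; in both cases the keys for $i \in x$ are generated and used honestly, exactly as in the respective games. Hence the advantage of $\encadv$ in the multi-key CPA game is precisely the assumed distinguishing gap. Since $\oracle$ is computationally efficient, $\ifpc$ is efficient (by Theorem~\ref{thm:ifpcthm}), and there are only $\length = O(\sample^2)$ rounds with $2\sample$ encryptions per round, the whole simulation runs in time polynomial in $\sample$ and $\dimension$, so $\encadv$ is an efficient adversary. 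By the assumed computational security of the encryption scheme (which, as remarked in the Encryption Schemes subsection, extends to polynomially many keys), this advantage must be negligible in the security parameter $\security = \dimension - \lceil \log_2(2\sample) \rceil$, and hence negligible in $\sample$ (since $\sample = \poly(\dimension)$ implies $\security \geq \sample^{\Omega(1)}$ for sufficiently large $\dimension$), yielding the desired contradiction.

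The only minor subtlety, and the one place worth double-checking, is that the event $Z_1$ is efficiently computable by $\encadv$: this holds because $\encadv$ knows $x$ (which it chose itself) and observes $T^L$ directly during the simulation, so no decryption of challenge ciphertexts is needed to evaluate the indicator of $Z_1$. Everything else is mechanical, and I expect no serious obstacle.
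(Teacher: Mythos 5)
Your proposal is correct and follows the same reduction strategy that the paper uses in Section~\ref{sec:securityproofs} (where it proves Claims~\ref{clm:indist1} and~\ref{clm:indist2} via a simulator $\encadv_c$ that regenerates keys for the sample, routes the remaining ciphertexts through the left-or-right oracle $\encoracle_b$, and outputs the indicator of $Z_c$), with the paper noting that the $\bnpattack$/$\idealbnpattack$ analogues are proved ``in an essentially identical fashion.'' The one cosmetic slip is the mention of ``public parameters'' --- $(\encgen,\encenc,\encdec)$ is a \emph{private-key} scheme, so the reduction has nothing but oracle access to $\encoracle_b$ for the unknown keys --- but this wording issue does not affect the substance of your argument.
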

The proof is straightforward from the definition of security, and is deferred to Section~\ref{sec:securityproofs}.
Combining Claims~\ref{clm:bnpfewfalseideal} and~\ref{clm:bnpindist1} we easily obtain the following.
\begin{claim} \label{clm:bnpfewfalsereal}
For every computionally efficient oracle $\oracle$, every polynomial $\sample = \sample(\dimension)$, and every sufficiently large $\dimension \in \N$,
$$
\pr{\bnpattack_{\sample, \dimension}[\oracle]}{|T^{L} \setminus x| > \sample/10000} \leq \negl(\sample)
$$
\end{claim}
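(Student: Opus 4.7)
The plan is to combine Claims~\ref{clm:bnpfewfalseideal} and~\ref{clm:bnpindist1} via the triangle inequality, exactly in the same spirit as the derivation of Claim~\ref{clm:fewfalsereal} from Claims~\ref{clm:fewfalseideal} and~\ref{clm:indist1} in Section~\ref{sec:falsedisc}. Specifically, letting $Z_1$ denote the event $\{|T^L \setminus x| > \sample/10000\}$, I will write
$$
\pr{\bnpattack_{\sample, \dimension}[\oracle]}{Z_1} \leq \pr{\idealbnpattack_{\sample, \dimension}[\oracle]}{Z_1} + \left| \pr{\bnpattack_{\sample, \dimension}[\oracle]}{Z_1} - \pr{\idealbnpattack_{\sample, \dimension}[\oracle]}{Z_1} \right|.
$$
Claim~\ref{clm:bnpfewfalseideal} bounds the first term by $\negl(\sample)$ (using only the security of the interactive fingerprinting code, with no computational assumption on the oracle), while Claim~\ref{clm:bnpindist1} bounds the second term by $\negl(\sample)$ under the assumption that $\oracle$ is computationally efficient and $(\encgen, \encenc, \encdec)$ is a secure encryption scheme.

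Since the sum of two negligible functions is negligible, the resulting bound is $\negl(\sample)$, as required. There is no substantive obstacle here: both ingredients are already stated, and the argument is a purely formal two-line calculation. The only thing to observe is that the event $Z_1$ is defined identically in the two games ($T^L$ and $x$ are well-defined random variables in each), so that comparing the probabilities via Claim~\ref{clm:bnpindist1} is legitimate; this matches exactly how the analogous reduction is carried out in Section~\ref{sec:falsedisc}. The proof can therefore be written in two sentences, pointing to the two preceding claims and the triangle inequality.
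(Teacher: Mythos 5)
Your proposal is correct and matches the paper's (implicit) argument exactly: the paper states that Claim~\ref{clm:bnpfewfalsereal} follows by ``combining Claims~\ref{clm:bnpfewfalseideal} and~\ref{clm:bnpindist1},'' which is precisely the triangle-inequality decomposition you spell out. Nothing further is needed.
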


By Claim~\ref{clm:bnpfewfalsereal} we have $|x' \setminus x| \leq \sample/10000$.  Now, in order to show $|x' \triangle x| \leq \sample/100$, it suffices to show that $|x \setminus x'| \leq \sample/200$.  In order to do so we begin with the following claim, which establishes that if the oracle $\oracle$ is sufficiently accurate, and $|x \setminus T^{j-1}| \leq \sample/200$, then the oracle returns a consistent answer to the query $\query^{j}$.  Recalling that we use $\theta^{j}$ to denote the number of rounded answers $\overline{\answer}^{k}$ for $1 \leq k \leq j$ that are inconsistent with $\FPCcol^{j}$, we can state the following claim.
\begin{claim}\label{clm:bnprealaccurate}
If $\oracle$ is $(1/1000, \beta, 1/2)$-sample-accurate for $\length = \length(2\sample,\rob)$ adaptively chosen queries then for every polynomial $\sample = \sample(\dimension)$, every sufficiently large $\dimension \in \N$, 
$$
\pr{\bnpattack_{\sample, \dimension}[\oracle]}{\theta^L \leq \beta L } \geq 1/2.
$$
\end{claim}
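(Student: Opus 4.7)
The plan is to mirror the argument of Claim~\ref{clm:realaccurate}, adapting it to the sample-accuracy definition.  By the sample-accuracy hypothesis, with probability at least $1/2$ the oracle satisfies $|a^j - q^j(x)| \leq 1/1000$ for at least $(1-\beta)L$ of the rounds actually issued (treating an early halt of the attack as padding with trivial queries, which only makes the accuracy condition easier to satisfy).  I will show that in every such accurate round the rounded answer $\overline{a}^j$ is consistent with $\FPCcol^j$, which gives $\theta^L \leq \beta L$ immediately.

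The rounded answer can only be inconsistent with $\FPCcol^j$ when $\FPCcol^j$ is the constant-$b$ vector for some $b \in \{\pm 1\}$ and $\overline{a}^j = -b$, so it suffices to handle constant columns.  For such a column, letting $I_x \subseteq [2\sample]$ denote the index set of the sample,
\[
q^j(x) \;=\; \frac{|I_x \setminus T^{j-1}|}{\sample}\cdot b,
\qquad\text{so}\qquad
\frac{\sample}{\sample - |T^{j-1}|}\, q^j(x) \;=\; \frac{|I_x \setminus T^{j-1}|}{\sample - |T^{j-1}|}\cdot b.
\]

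Two facts will control this expression.  First, Claim~\ref{clm:bnpfewfalsereal} gives $|T^{j-1}\setminus I_x|\leq \sample/10000$ for every $j$ with probability $1-\negl(\sample)$.  Second, the attack's halting test forces $|T^{j-1}|\leq 499\sample/500$ for every $j\leq L$, so $\sample-|T^{j-1}|\geq \sample/500$ and the scaling factor is at most $500$.  Writing $\sample-|T^{j-1}| = |I_x\setminus T^{j-1}| - |T^{j-1}\setminus I_x|$, the coefficient $|I_x\setminus T^{j-1}|/(\sample-|T^{j-1}|)$ lies in $[1,\,20/19]$, so $\frac{\sample}{\sample - |T^{j-1}|}\, q^j(x)$ has sign $b$ and magnitude at least $1$.

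The main obstacle is that the scaling factor amplifies the $1/1000$ accuracy error to at most $500 \cdot (1/1000) = 1/2$, which is a thin margin.  But this is still smaller than the magnitude-$\geq 1$ rescaled truth, so the rescaled oracle answer $\frac{\sample}{\sample-|T^{j-1}|}\, a^j$ retains sign $b$, whence $\overline{a}^j = b$ and the answer is consistent.  A union bound over the accuracy event (probability $\geq 1/2$) and the few-false-accusations event (probability $\geq 1-\negl(\sample)$) yields $\pr{}{\theta^L\leq \beta L} \geq 1/2-\negl(\sample)$, matching the statement of the claim.
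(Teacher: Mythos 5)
Your proof is correct and follows essentially the same approach as the paper: reduce to constant columns, compute the rescaled sample average $\frac{\sample}{\sample-|T^{j-1}|}q^j(x)$, control the renormalization factor using Claim~\ref{clm:bnpfewfalsereal} and the halting test, bound the amplified error by $500\cdot\frac{1}{1000}=\frac12$, and conclude sign agreement of $\overline{a}^j$. (One small numerical slip: the tight upper bound on $|I_x\setminus T^{j-1}|/(\sample-|T^{j-1}|)$ is $21/20$, not $20/19$, but since $20/19 > 21/20$ your stated bound is still a valid upper bound and the argument goes through.)
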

\begin{proof}
Observe that, by construction, for every $j \in [\length]$,
\begin{align}
\ex{x_i \in x}{\query^j(x_i)}
&={}\frac{1}{\sample} \left( \sum_{i \in (x \setminus T^{j-1})} \FPCcol^{j}_i + \sum_{i \in (x \cap T^{j-1})} 0 \right) \notag \\
&={} \ex{i \in (x \setminus T^{j-1})}{\FPCcol^{j}_{i}} \cdot \left(\frac{|x \setminus T^{j-1}|}{\sample}\right) \notag \notag
\end{align}
After renormalizing by $(\sample / \sample - |T^{j-1}|)$ we have
\begin{align}
&\left(\frac{\sample}{\sample - |T^{j-1}|} \right) \cdot \ex{i \in x}{\query^j(x_i)} \notag \\
={} &\ex{i \in (x \setminus T^{j-1})}{\FPCcol^{j}_{i}} \cdot \left(\frac{\sample}{\sample - |T^{j-1}|} \right) \cdot \left(\frac{|x \setminus T^{j-1}|}{\sample}\right) \notag \\
={} &\ex{i \in (x \setminus T^{j-1})}{\FPCcol^{j}_{i}} \cdot \left(\frac{\sample - |T^{j-1}| + |T^{j-1} \setminus x|}{\sample - |T^{j-1}|} \right) \notag \\
={} &\ex{i \in (x \setminus T^{j-1})}{\FPCcol^{j}_{i}} \cdot \left(1 + \frac{|T^{j-1} \setminus x|}{\sample - |T^{j-1}|} \right) \notag \notag
\end{align}
Since $0 \leq |T^{j-1} \setminus x| \leq \sample/10000$ (by Claim~\ref{clm:bnpfewfalsereal}), and since the algorithm terminates unless $|T^{j-1}| \leq 499\sample/500$, we obtain
\begin{align}
&\ex{i \in (x \setminus T^{j-1})}{\FPCcol^{j}_{i}} \leq \left(\frac{\sample}{\sample - |T^{j-1}|} \right)\cdot \ex{i \in x}{\query^j(x_i)} \leq \frac{21}{20} \cdot \ex{i \in (x \setminus T^{j-1})}{\FPCcol^{j}_{i}} \notag \\
\Longrightarrow{} &\left| \left(\frac{\sample}{\sample - |T^{j-1}|} \right)\cdot \ex{i \in x}{\query^j(x_i)} - \ex{i \in (x \setminus T^{j-1})}{\FPCcol^{j}_{i}}\right| \leq \frac{1}{20} \label{eq:bnp0}
\end{align}
By the assumption that $\oracle$ is $(1/1000, \rob, 1/2)$-sample-accurate, we have that, with probability at least $1/2$, for $(1-\beta) L$ choices of $j \in [L]$,
\begin{equation} \label{eq:bnp1}
\left| \answer^{j} - \ex{i \in x}{\query^j(x_i)} \right| \leq 1/1000.
\end{equation}
Now, combining~\eqref{eq:bnp0} and~\eqref{eq:bnp1}, we have
\begin{align}
&\left| \left(\frac{\sample}{\sample - |T^{j-1}|} \right) \cdot \answer^j - \ex{i \in (x \setminus T^{j-1})}{\FPCcol^{j}_{i}} \right| \notag \\
\leq{} &\left|  \left(\frac{\sample}{\sample - |T^{j-1}|} \right) \cdot \ex{i \in x}{\query^j(x_i)} - \ex{i \in (x \setminus T^{j-1})}{\FPCcol^{j}_{i}} \right| + \left|\frac{\sample}{\sample - |T^{j-1}|}  \cdot \frac{1}{1000} \right| \leq{} \frac{1}{20} + \frac{1}{2} \leq \frac{2}{3} \label{eq:bnp2}
\end{align}
for $(1-\beta)L$ choices of $j \in [L]$.

Finally, observe that if $\FPCcol^{j}_{i} = 1$ for every $i \in [2\sample]$, then we have $$\ex{i \in (x \setminus T^{j-1})}{\FPCcol^{j}_{i}} = 1,$$ and by~\eqref{eq:bnp2} we have $(\sample / (\sample - |T^{j}|)) \answer^{j} \geq 1 - 2/3 = 1/3$.  Thus, the rounded answer $\overline{\answer}^{j} = 1$.  Similarly, if $\FPCcol^{j}_{i} = -1$ for every $i \in [2\sample]$, then we have $\overline{\answer}^{j} = -1$.  This completes the proof of the claim.
\end{proof}

As before, we can argue that the real attack and the ideal attack are computationally indistinguishable, and thus the oracle must also give consistent answers in the ideal attack.
\begin{claim}\label{clm:bnpindist2}
Let $Z_{2}$ be the event
$
\set{\errors^{L} \leq \beta L}.
$
Assume $(\encgen, \encenc, \encdec)$ is a computationally secure encryption scheme and let $\sample = \sample(\dimension)$ be any polynomial.  Then if $\oracle$ is computationally efficient, for every $\dimension \in \N$
$$
\left| \pr{\idealbnpattack_{\sample, \dimension}[\oracle]}{Z_{2}} - \pr{\bnpattack_{\sample, \dimension}[\oracle]}{Z_{2}} \right| \leq \negl(\sample).
$$
\end{claim}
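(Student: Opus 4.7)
The plan is to prove Claim~\ref{clm:bnpindist2} by a direct reduction to the multi-key semantic security of $(\encgen, \encenc, \encdec)$, following exactly the template of Claims~\ref{clm:indist1} and~\ref{clm:indist2} that is deferred to Section~\ref{sec:securityproofs}. The only syntactic difference between $\bnpattack_{\sample,\dimension}[\oracle]$ and $\idealbnpattack_{\sample,\dimension}[\oracle]$ is that for each round $j \in [\length]$ and each $i \in [2\sample] \setminus x$, the ciphertext $\ct^{j}_{i}$ is $\encenc(\sk_i, \FPCcol^{j}_i)$ in the real game and $\encenc(\sk_i, 0)$ in the ideal game. Since $x$ contains only the pairs $(i,\sk_i)$ for $i \in x$, the oracle $\oracle$ never sees any of the secret keys $\sk_i$ for $i \notin x$, so security of the scheme should imply that $\oracle$ cannot distinguish the two distributions.

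Concretely, I would assume for contradiction that for infinitely many $\dimension$ the two probabilities differ by a non-negligible amount, and build an efficient adversary $\encadv$ against the encryption scheme. The adversary $\encadv$ is presented with $2\sample$ encryption oracles $\encoracle_1, \ldots, \encoracle_{2\sample}$, each under a freshly sampled unknown key $\sk_i \getsr \encgen(1^\security)$, where each oracle either faithfully encrypts its input message or always encrypts $0$. The reduction samples a uniform subsample $x \subseteq [2\sample]$ of size $\sample$, generates $\sk_i$ for $i \in x$ itself, initializes the interactive fingerprinting code $\ifpc$, and then simulates the attack round by round. In round $j$ it obtains $\FPCcol^{j}$ from $\ifpc$; for $i \in x$ it sets $\ct^{j}_i \gets \encenc(\sk_i, \FPCcol^{j}_i)$ directly, and for $i \notin x$ it queries $\encoracle_i$ on the message $\FPCcol^{j}_i$ to obtain $\ct^{j}_i$. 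The rest of the simulation (forming $\query^{j}$, running $\oracle$, rounding to $\overline{\answer}^{j}$, updating $T^{j}$) proceeds identically to both games.

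The reduction is polynomial time: $\oracle$ is assumed efficient, $\ifpc$ runs in polynomial time by Theorem~\ref{thm:ifpcthm}, and the per-round encryption and bookkeeping are polynomial. The crucial observation is that the event $Z_{2} = \set{\errors^{L} \leq \beta L}$ is efficiently decidable from data the reduction already holds: it depends only on the produced columns $\FPCcol^{1},\dots,\FPCcol^{L}$ and the rounded answers $\overline{\answer}^{1},\dots,\overline{\answer}^{L}$, both of which $\encadv$ records. If the $\encoracle_i$ faithfully encrypt, the transcript is distributed exactly as $\bnpattack_{\sample,\dimension}[\oracle]$; if they always encrypt $0$, it is distributed exactly as $\idealbnpattack_{\sample,\dimension}[\oracle]$. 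Consequently $\encadv$ outputting the indicator of $Z_{2}$ achieves distinguishing advantage equal to the assumed gap, which is non-negligible, contradicting the (multi-key) semantic security of the encryption scheme.

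The only part requiring care is the bookkeeping in the simulation: one must check that the reduction never needs a secret key $\sk_i$ for $i \notin x$ in order to evaluate $Z_{2}$, which is clear because neither the definition of $Z_{2}$ nor the computation of $\query^{j}$, $\answer^{j}$, or the updates to $\ifpc$ involve $\sk_i$ for $i \notin x$ -- the queries are defined via the ciphertexts $\ct^{j}_i$ alone, and those are obtained through the encryption oracles. This verification is routine, so no substantial obstacle arises; the proof is essentially identical to that of Claim~\ref{clm:bnpindist1} with the distinguishing event changed from $Z_{1}$ to $Z_{2}$.
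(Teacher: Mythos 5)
Your proposal is correct and follows the same route as the paper: it constructs a polynomial-time encryption adversary that simulates the attack using freshly generated keys for $i \in x$ and the provided encryption oracle for $i \notin x$, observes that the transcript is distributed exactly as $\bnpattack$ or $\idealbnpattack$ depending on the oracle bit, and outputs the efficiently computable indicator of $Z_{2}$, so the distinguishing advantage equals the gap in probabilities. This matches the reduction given in Section~\ref{sec:securityproofs} for Claims~\ref{clm:indist1} and~\ref{clm:indist2}, which the paper explicitly notes extends verbatim to the $\bnpattack/\idealbnpattack$ claims.
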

The proof is straightforward from the definition of security, and is deferred to Section~\ref{sec:securityproofs}.
Combining Claims~\ref{clm:bnprealaccurate} and~\ref{clm:bnpindist2} we easily obtain the following.
\begin{claim}\label{clm:bnpidealaccurate}
If $\oracle$ iscomputationally efficient and $(1/1000,\beta, 1/2)$-accurate for $\length = \length(2\sample, \rob)$ adaptively chosen queries then for every polynomial $\sample = \sample(\dimension)$ and every sufficiently large $\dimension \in \N$,
$$
\pr{\idealbnpattack_{\sample, \dimension}[\oracle]}{\errors^{L} \leq \beta L} \geq 1/2 - \negl(\sample).
$$
\end{claim}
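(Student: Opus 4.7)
The plan is to derive Claim~\ref{clm:bnpidealaccurate} as an immediate consequence of Claims~\ref{clm:bnprealaccurate} and~\ref{clm:bnpindist2}, exactly paralleling the way Claim~\ref{clm:idealaccurate} was deduced from Claims~\ref{clm:realaccurate} and~\ref{clm:indist2} in the false discovery section. No new ideas are needed; the work has already been done in the two preceding claims.

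First, I would invoke Claim~\ref{clm:bnprealaccurate}, which says that under the hypothesis that $\oracle$ is $(1/1000,\beta,1/2)$-sample-accurate for $\length(2\sample,\rob)$ adaptively chosen queries, the event $Z_2 = \{\errors^{L} \leq \beta L\}$ occurs in $\bnpattack_{\sample,\dimension}[\oracle]$ with probability at least $1/2$. Next, I would invoke Claim~\ref{clm:bnpindist2}, which says that since $\oracle$ is computationally efficient and $(\encgen,\encenc,\encdec)$ is a secure encryption scheme, the probability of $Z_2$ in $\bnpattack_{\sample,\dimension}[\oracle]$ and in $\idealbnpattack_{\sample,\dimension}[\oracle]$ differ by at most $\negl(\sample)$ for every sufficiently large $\dimension$. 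Combining these two by a simple triangle inequality on probabilities yields
$$
\pr{\idealbnpattack_{\sample, \dimension}[\oracle]}{Z_2} \;\geq\; \pr{\bnpattack_{\sample, \dimension}[\oracle]}{Z_2} - \negl(\sample) \;\geq\; 1/2 - \negl(\sample),
$$
which is exactly the conclusion of Claim~\ref{clm:bnpidealaccurate}.

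There is no real obstacle in this step; the substantive work lies in the two claims being combined. Claim~\ref{clm:bnprealaccurate} is the step that actually uses sample-accuracy, together with the guarantee from Claim~\ref{clm:bnpfewfalsereal} that $|T^{j-1}\setminus x|$ stays small so that the renormalization factor $\sample/(\sample-|T^{j-1}|)$ does not blow up the oracle's error. Claim~\ref{clm:bnpindist2} is a standard CPA-security reduction deferred to Section~\ref{sec:securityproofs}. Once both are in hand, the deduction here is purely a one-line probability argument, so I would simply write ``Combine Claims~\ref{clm:bnprealaccurate} and~\ref{clm:bnpindist2}'' and spell out the triangle inequality above.
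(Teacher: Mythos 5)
Your proposal is exactly the paper's own argument: the paper states before the claim that it follows ``by combining Claims~\ref{clm:bnprealaccurate} and~\ref{clm:bnpindist2},'' which is precisely the triangle-inequality deduction you spell out. Correct, and the same approach.
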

We can use Claim~\ref{clm:bnpidealaccurate} to derive a contradiction.  To do so we use the fact that the security of the fingerprinting code assures that no attacker who only has access to $\FPCcol^{j}_{x \setminus T^{j-1}}$ in each round $j = 1,\dots,\length$ can give answers that are consistent for all $\length$ of the columns $\FPCcol^{j}$.  Thus, we have
\begin{claim} \label{clm:bnpidealnotaccurate}
For every oracle $\oracle$, every polynomial $\sample = \sample(\dimension)$, and every sufficiently large $\dimension \in \N$, if $L = \length$
$$
\pr{\idealbnpattack_{\sample, \dimension}[\oracle]}{\errors^{\length} \leq \beta \length} \leq \negl(\sample)
$$
\end{claim}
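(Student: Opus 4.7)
The plan is to reduce to the completeness of the interactive fingerprinting code (Theorem \ref{thm:Completeness}). I would construct an IFPC adversary $\fpcadv$ for the game $\ifpcgame_{\pop,\sample,\length}[\fpcadv,\ifpc]$ (with $\pop=2\sample$, $\collusion=\sample$, $\fpcfalse=1/20000$, and robustness parameter $\rob=\beta$) that internally simulates $\idealbnpattack_{\sample,\dimension}[\oracle]$. Up front, $\fpcadv$ chooses $S^{1} \subseteq [\pop]$ uniformly at random of size $\sample$, which plays the role of the sample index set in the ideal attack, and draws secret keys $\sk_i \getsr \encgen(1^{\security})$ for every $i \in [\pop]$. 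At round $\rindex$, upon receiving the column delivered by $\ifpc$, $\fpcadv$ forms the ciphertexts exactly as in $\idealbnpattack$---$\ct^{\rindex}_{i} = \encenc(\sk_i, \FPCcol^{\rindex}_{i})$ for $i \in S^{1}$ and $\ct^{\rindex}_{i} = \encenc(\sk_i, 0)$ for $i \notin S^{1}$---assembles the query $\query^{\rindex}$ using the currently accused set $T^{\rindex-1}$, runs $\oracle$ to obtain $\answer^{\rindex}$, rounds $(\sample/(\sample - |T^{\rindex-1}|))\answer^{\rindex}$ to $\overline{\answer}^{\rindex} \in \pmo$, and forwards $\overline{\answer}^{\rindex}$ to $\ifpc$. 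It continues the simulation for the full $\length$ rounds regardless of whether the early-termination condition of $\idealbnpattack$ would have fired; any behaviour after that point is irrelevant to the event of interest.

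With this construction, the joint distribution of $(\FPCcol^{1},\overline{\answer}^{1},I^{1}),\dots,(\FPCcol^{\length},\overline{\answer}^{\length},I^{\length})$ in the simulation coincides with the corresponding distribution in $\idealbnpattack_{\sample,\dimension}[\oracle]$ on the event $L=\length$. In particular, the event $\{\errors^{\length} \leq \beta\length\}$ in $\idealbnpattack$ (which implicitly requires $L=\length$) is contained in the event $\{\errors^{\length} \leq \beta\length\}$ in the simulated IFPC game. Theorem \ref{thm:Completeness} applied to $\ifpc$ with $\rob=\beta$ bounds the probability of this latter event by $\fpcfalse^{\Omega(\irob\sample)} = \negl(\sample)$, which immediately gives the required bound for $\idealbnpattack$.

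The main obstacle is a subtle but crucial point about what $\fpcadv$ is entitled to see. The construction requires $\fpcadv$ to form ciphertexts $\ct^{\rindex}_{i} = \encenc(\sk_{i},\FPCcol^{\rindex}_{i})$ for \emph{every} $i \in S^{1}$, including users $i \in S^{1}\cap T^{\rindex-1}$ that have already been accused, because the oracle could in principle inspect the circuit of $\query^{\rindex}$ and decrypt any $\ct^{\rindex}_{i}$ for which it holds $\sk_i$. The IFPC game as defined, however, only exposes $\FPCcol^{\rindex}_{S^{\rindex}} = \FPCcol^{\rindex}_{S^{1}\setminus T^{\rindex-1}}$ to $\fpcadv$. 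I would resolve this by invoking the observation, made explicit in the proof of Theorem \ref{thm:ScoreBound} (and therefore inherited by Theorem \ref{thm:Completeness}), that handing $\fpcadv$ the full string $\FPCcol^{\rindex}_{S^{1}}$ at every round only strengthens the adversary and does not weaken the completeness guarantee. With this understood, Theorem \ref{thm:Completeness} applies directly to $\fpcadv$ and the claim follows.
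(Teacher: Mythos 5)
Your proposal is correct, and it formalizes the reduction that the paper leaves implicit (the paper simply states the claim as a consequence of the fingerprinting code's security). The structure---build an IFPC adversary $\fpcadv$ that simulates $\idealbnpattack$ by sampling $S^1$ uniformly, drawing fresh keys, constructing the ideal ciphertexts, running $\oracle$, and forwarding the rounded answer; then invoke Theorem~\ref{thm:Completeness} with $\pop=2\sample$, $\collusion=\sample$, $\fpcfalse=1/20000$, $\rob=\beta$---is exactly the intended argument, and your handling of the early-termination condition (continuing to round $\length$ and using the containment of events) is right.

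One small note on the ``obstacle'' you raise about $\fpcadv$ needing $\FPCcol^{\rindex}_{i}$ for $i\in S^{1}\cap T^{\rindex-1}$: under the paper's intended reading this is not actually an issue. The query $\query^{\rindex}$ returns $0$ for $i'\in T^{\rindex-1}$, and the intent (made explicit in the proof of Claim~\ref{clm:bnpfewfalseideal}, which asserts ``the query $\query^j$ does not depend on any entry $\FPCcol^{j}_{i}$ for $i \not\in x \setminus T^{j-1}$'') is that the ciphertexts $\ct^{\rindex}_{i}$ for $i\in T^{\rindex-1}$ simply never appear in the query circuit, so $\fpcadv$ can omit them without changing $\oracle$'s view. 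Still, your alternative resolution---appealing to the step in the proof of Theorem~\ref{thm:ScoreBound} where the adversary is handed all of $\FPCcol^{\rindex}_{S^1}$---is valid and arguably more robust, since it makes the reduction insensitive to exactly how the circuit for $\query^{\rindex}$ is encoded. Either way the proof goes through, and the failure probability $\fpcfalse^{\Omega(\irob\sample)}$ is indeed $\negl(\sample)$ since $\irob$ is a positive constant and $\fpcfalse$ is a fixed constant less than $1$.
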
 

Putting it together, we obtain the following theorem.
\begin{theorem}
Assuming one-way functions exist, for every $\rob < 1/2$, there is a function $\length(2\sample,\rob) = O(\sample^2/\irob^4)$ such that there is no computationally efficient oracle $\oracle$ that is $(1/1000, \rob,1/2)$-accurate for $\length(2\sample)$ adaptively chosen queries given $\sample$ samples in $\bits^{\dimension}$.
\end{theorem}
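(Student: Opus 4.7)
The plan is to argue that the attack $\bnpattack_{\sample,\dimension}[\oracle]$ of Figure~\ref{fig:bnpattack}, instantiated with the interactive fingerprinting code guaranteed by Theorem~\ref{thm:ifpcthm}, yields a contradiction for any $(1/1000,\rob,1/2)$-sample-accurate efficient oracle. First, I instantiate $\ifpc$ using Theorem~\ref{thm:ifpcthm} with $\pop=2\sample$, collusion bound $\collusion=\sample$, robustness $\rob$, false-accusation probability $\fpcfalse=1/20000$, and failure probability $\fpcfail=\negl(\sample)$, giving length $\length(2\sample,\rob)=O(\sample^2/\irob^4)$, which matches the theorem's quantitative claim.

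Next, I chain the previously established claims. Claim~\ref{clm:bnpfewfalsereal} bounds false accusations in the real attack: $|T^L\setminus x|\le \sample/10000$ except with negligible probability. Combining Claim~\ref{clm:bnprealaccurate} with the indistinguishability Claim~\ref{clm:bnpindist2}, sample accuracy forces $\errors^L\le \rob L$ with probability $\ge 1/2 - \negl(\sample)$ in the ideal attack (Claim~\ref{clm:bnpidealaccurate}). But Claim~\ref{clm:bnpidealnotaccurate} shows that achieving $\errors^\length\le \rob\length$ in the ideal attack when the attack runs to completion ($L=\length$) contradicts the completeness guarantee of $\ifpc$. Thus with probability $\ge 1/2 - \negl(\sample)$ the attack must halt early, which by construction means $|T^L|>499\sample/500$.

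Putting the two conclusions together in the real attack (via Claim~\ref{clm:bnpindist1}) and using set-theoretic arithmetic, $|T^L\cap x|\ge 499\sample/500 - \sample/10000 \ge 498\sample/500$, hence $|x\setminus T^L|\le 2\sample/500$. So the output set $x'=\{y_i:i\in T^L\}$ satisfies $|x\triangle x'|\le 2\sample/500 + \sample/10000 < \sample/100$ with probability $\ge 1/2 - \negl(\sample)$, which exhibits blatant non-privacy for the oracle. Any ``reasonable'' accurate oracle must not be blatantly non-private in this sense, so no such oracle can exist, proving the theorem.

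The main obstacle is bridging the notion of sample accuracy (closeness of $a^j$ to $\query^j(x)$) with the consistency requirement of $\ifpc$ (closeness of $\overline{a}^j$ to some entry of the full column $\FPCcol^j\in\pmo^{2\sample}$). The rescaling factor $\sample/(\sample-|T^{j-1}|)$ in the attack is precisely what compensates for the frozen contribution of users in $T^{j-1}$, and its error must be controlled using both $|T^{j-1}\setminus x|\le\sample/10000$ (from the fingerprinting code's soundness) and the early-halting guard $|T^{j-1}|\le 499\sample/500$ (which keeps the rescaling factor bounded). Tracking these dependencies consistently through the real/ideal indistinguishability arguments---so that the rescaled error stays below $2/3$ and forces $\overline{a}^j$ to match a codeword entry on every constant column---is the delicate piece, but each component is already modular thanks to the claims laid out earlier in the section.
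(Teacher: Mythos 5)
Your proof follows the paper's proof closely: you instantiate the interactive fingerprinting code via Theorem~\ref{thm:ifpcthm}, chain Claims~\ref{clm:bnpfewfalsereal}, \ref{clm:bnprealaccurate}, \ref{clm:bnpindist2}, \ref{clm:bnpidealaccurate}, and \ref{clm:bnpidealnotaccurate}, and conclude that with probability $\ge 1/2 - \negl(\sample)$ the attack halts early, which together with the false-accusation bound gives $|x \triangle x'| < \sample/100$. This is the same argument as the paper, which frames it as a two-case analysis on whether $L < \length$ or $L = \length$; in fact your probabilistic merging of the two cases is arguably cleaner, since the paper's ``Case 2 is a contradiction'' only holds if $L = \length$ occurs with high probability, which the paper leaves implicit, whereas you correctly extract $\Pr[L < \length] \ge 1/2 - \negl(\sample)$ by combining the two ideal-game claims before moving to arithmetic.

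Two small items worth flagging. First, you cite Claim~\ref{clm:bnpindist1} to transfer the early-halting conclusion to the real attack, but that claim concerns the false-accusation event $Z_1 = \{|T^L \setminus x| > \sample/10000\}$, not the event $\{L < \length\}$; you would need an analogous indistinguishability claim for the (efficiently checkable) early-halting event, or alternatively argue entirely in the ideal game and transfer only the final reconstruction event --- a gap the paper itself also leaves slightly loose (its Case~1 cites the ideal-game Claim~\ref{clm:bnpfewfalseideal} yet the output $x'$ and the blatant-non-privacy definition concern the real game). Second, your closing sentence (``any `reasonable' accurate oracle must not be blatantly non-private \dots so no such oracle can exist'') conflates nonexistence with the intended conclusion: as stated, the theorem reads as a nonexistence claim, but the argument actually establishes that any such accurate oracle must be blatantly non-private, consistent with the informal Theorem~\ref{thm:nonprivacy1}; your proof should stop at exhibiting the blatant-non-privacy adversary rather than appealing to an extraneous ``reasonableness'' axiom.
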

\begin{proof}
Assume for the sake of contradiction that there were such an oracle.  Now consider two cases.  First consider the case that $L < \length$, which means the algorithm has terminated early due to the condition $|T^{L}| \geq 499\sample/500$ being reached.  In this case we have $|x'| = |T^{L}| \geq 499\sample/500$.  However, by Claim~\ref{clm:bnpfewfalseideal}, we have that $|x' \setminus x| \leq \sample/10000$.  Therefore we have $$|x \triangle x'| = |x|-|x'|+2|x'\setminus x|  \leq \frac{\sample}{500} + \frac{2\sample}{10000} \leq \frac{\sample}{200},$$ as desired.

Now consider the case where $L = \ell$, meaning the algorithm does not terminate early.  In this case, by Claim~\ref{clm:bnpidealaccurate} we have
$$
\pr{\idealbnpattack_{\sample, \dimension}[\oracle]}{\errors^{\length}  \leq \rob L} \geq 1/2 - \negl(\sample),
$$
but by Claim~\ref{clm:bnpidealnotaccurate} we have
$$
\pr{\idealbnpattack_{\sample, \dimension}[\oracle]}{\errors^{\length} \leq \rob L} \leq \negl(\sample),
$$
which is a contradiction.  This completes the proof of the theorem.
\end{proof}

\subsection{An Information-Theoretic Lower Bound}
As we did in Section~\ref{sec:infotheoretic}, we can prove an information-theoretic analogue of our hardness result for avoiding blatant non-privacy.
\begin{theorem}
There is a function $\length(2\sample,\rob) = O(\sample^2/\irob^4)$ such that there is no oracle $\oracle$ (even a computationally unbounded one) that is $(1/1000, \rob,1/2)$-accurate for $\length(2\sample,\rob)$ adaptively chosen queries given $\sample$ samples in $\bits^{\dimension}$ where $\dimension \geq \length(2\sample,\rob)$.
\end{theorem}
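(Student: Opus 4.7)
The plan is to mirror the proof of the computational blatant non-privacy lower bound (Theorem~\ref{thm:nonprivacy1}) verbatim, but swap the computationally secure encryption scheme $(\encgen,\encenc,\encdec)$ for an information-theoretically secure one. Concretely, I will use a long-key one-time pad: $\encgen(1^\security)$ outputs a truly uniform string $\sk\in\bits^\security$, and $\encenc(\sk,m)$ for the $j$-th message encrypts by XORing $m$ (viewed as an element of $\{0,1\}$) with the $j$-th bit of $\sk$, together with enough additional bits to encode $m\in\set{-1,0,1}$ unambiguously (a constant blowup). Setting $\security = O(\length) = O(\sample^2/\irob^4)$ suffices for security against any adversary making at most $\length$ encryption queries per key, since distinct queries use disjoint blocks of fresh random key material. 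The overall sample dimension is then $\dimension = \lceil\log_2(2\sample)\rceil + \security = O(\sample^2/\irob^4)$, consistent with the hypothesis $\dimension \geq \length(2\sample,\rob)$.

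With this substitution, I re-examine the attack $\bnpattack_{\sample,\dimension}[\oracle]$ and its ideal counterpart $\idealbnpattack_{\sample,\dimension}[\oracle]$ from Section~\ref{sec:nonprivacy}. The only steps whose proofs relied on computational assumptions are Claims~\ref{clm:bnpindist1} and~\ref{clm:bnpindist2}, which asserted that the real and ideal attacks are computationally indistinguishable to the oracle. Under the one-time pad, I claim these events are \emph{identically distributed} in the two games (and hence the analogous indistinguishability claims hold with zero advantage, not merely negligible advantage). The reason: in both games, for each $i\in[2\sample]$ the key $\sk_i$ is used only inside the encryptions $\ct^j_i = \encenc(\sk_i, \cdot)$ (the samples $y_i$ given to $\oracle$ reveal $\sk_i$ only when $i\in x$). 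For $i\notin x$, the key $\sk_i$ is fresh randomness seen by $\oracle$ only through $\ct^1_i,\dots,\ct^\length_i$, and by perfect one-time-pad security these ciphertexts are uniform and independent of the plaintexts, so replacing $\FPCcol^j_i$ with $0$ for $i\notin x$ induces the same distribution on the oracle's view.

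Given this, the chain of reasoning from Section~\ref{sec:nonprivacy} transfers without change: the analogue of Claim~\ref{clm:bnpfewfalseideal} follows from the soundness of the interactive fingerprinting code guaranteed by Theorem~\ref{thm:ifpcthm}, since in the ideal game the queries depend only on $\FPCcol^j_{x\setminus T^{j-1}}$ and can be simulated by a legitimate IFPC adversary; combined with the new (perfect) indistinguishability claim, this yields $|T^{L}\setminus x|\leq \sample/10000$ with overwhelming probability in $\bnpattack_{\sample,\dimension}[\oracle]$, \emph{regardless of the oracle's running time}. The analogue of Claim~\ref{clm:bnprealaccurate} is purely arithmetic, depending only on sample-accuracy and on the bound on $|T^{j-1}\setminus x|$, and likewise carries over. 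Putting these together with the completeness half of Theorem~\ref{thm:ifpcthm} (via the analogue of Claim~\ref{clm:bnpidealnotaccurate}) yields the desired contradiction: if the oracle does not terminate early (so $L=\length$), the IFPC forces $\errors^\length > \rob\length$, while sample-accuracy forces $\errors^\length \leq \rob\length$; if it does terminate early, then $|T^L|\geq 499\sample/500$ combined with $|T^L\setminus x|\leq \sample/10000$ gives $|x\triangle x'|\leq \sample/200$, exhibiting blatant non-privacy.

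The main obstacle — or more accurately, the only place where one needs to be careful — is ensuring that the one-time pad remains secure across \emph{all} $\length$ queries and across the $2\sample$ users simultaneously, and that no key is ever reused; this is immediate by reserving a disjoint $O(1)$-bit block of $\sk_i$ for each of the $\length$ possible encryptions under $\sk_i$, which is exactly what drives the requirement $\dimension = O(\sample^2/\irob^4)$. All failure probabilities in the computational proof that were bounded by $\negl(\sample)$ become either zero or the failure probability $\fpcfail = \negl(\sample)$ of the IFPC itself, so the conclusion holds with the same probability $1-o_\sample(1)$ as before.
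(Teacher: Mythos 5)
Your proposal is correct and follows exactly the approach the paper intends (and merely sketches, deferring details to Section~\ref{sec:infotheoretic} and to~\cite{HardtU14}): replace the computational encryption with a long-key one-time pad so that the real and ideal attacks become perfectly indistinguishable, at the cost of requiring $\dimension = O(\sample^2/\irob^4)$ to accommodate the key. The rest of the blatant non-privacy argument then carries over unchanged, with the IFPC soundness and completeness guarantees applying to any (even unbounded) oracle.
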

The proof is essentially identical to what is sketched in Section~\ref{sec:infotheoretic}.

\section*{Acknowledgements} \addcontentsline{toc}{section}{Acknowledgements}
We thank Moritz Hardt and Salil Vadhan for insightful discussions during the early stages of this work. We also thank Thijs Laarhoven for bringing his work on interactive fingerprinting codes to our attention.

\addcontentsline{toc}{section}{References}
\bibliographystyle{alpha}
\bibliography{references}
\appendix

\section{Security Reductions from Sections~\ref{sec:falsedisc} and~\ref{sec:nonprivacy}} \label{sec:securityproofs}

In Section~\ref{sec:falsedisc} we made several claims comparing the probability of events in $\realgame$ to the probability of events in $\idealgame$.  Each of these claims follow from the assumed security of the encryption scheme.  In this section we restate and prove these claims.  Since the claims are all of a similar nature, the proof will be somewhat modular.  The claims in Section~\ref{sec:nonprivacy} relating $\bnpattack$ to $\idealbnpattack$ can be proven in an essentially identical fashion, and we omit these proofs for brevity.

Before we begin recall the formal definition of security of an encryption scheme.  Security is defined via a pair of oracles $\encoracle_0$ and $\encoracle_1$.  $\encoracle_1(\sk_{1},\dots,\sk_{\pop}, \cdot)$ takes as input the index of a key $i \in [\pop]$ and a message $m$ and returns $\encenc(\sk_{i}, m)$, whereas $\encoracle_{0}(\sk_{1},\dots,\sk_{\pop}, \cdot)$ takes the same input but returns $\encenc(\sk_{i}, 0)$.  The security of the encryption scheme asserts that for randomly chosen secret keys, no computationally efficient adversary can tell whether or not it is interacting with $\encoracle_{0}$ or $\encoracle_{1}$.

\begin{definition}
An encryption scheme $(\encgen, \encenc, \encdec)$ is $\emph{secure}$ if for every polynomial $\pop = \pop(\security)$, and every $\poly(\security)$-time adversary $\encadv$, if $\sk_{1},\dots,\sk_{\pop} \getsr \encgen(1^{\security})$
\begin{equation*}
\left| \pr{}{\encadv^{\encoracle_{0}(\sk_{1},\dots,\sk_{\pop}, \cdot)} = 1} 
- \pr{}{\encadv^{\encoracle_{1}(\sk_{1},\dots,\sk_{\pop}, \cdot)} = 1} \right| = \negl(\security)
\end{equation*}
\end{definition}

We now restate the relevant claims from Section~\ref{sec:falsedisc}.

\begin{claim}[Claim~\ref{clm:indist1} Restated] \label{clm:indist1restated}
Let $Z_{1}$ be the event
$
\set{\psi^{\length} > \pop/8}.
$
Assume $(\encgen, \encenc, \encdec)$ is a computationally secure encryption scheme and let $\sample = \sample(\dimension)$ be any polynomial.  Then if $\oracle$ is computationally efficient, for every $\dimension \in \N$
$$
\left| \pr{\idealgame_{\sample, \dimension}[\oracle]}{Z_{1}} - \pr{\realgame_{\sample, \dimension}[\oracle]}{Z_{1}} \right| \leq \negl(\sample)
$$
\end{claim}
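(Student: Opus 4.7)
The plan is a straightforward reduction to the semantic security of the encryption scheme. I will construct an efficient encryption adversary $\encadv$ that internally simulates the attack (either $\realgame$ or $\idealgame$, depending on which oracle $\encadv$ is given) and outputs $1$ iff the event $Z_{1}$ occurs. Since $Z_1$ depends only on quantities that $\encadv$ can observe — the set $S$ of sampled indices and the transcript of $\ifpc$ (in particular the accusations $T^\length$) — this is well-defined.

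Concretely, $\encadv$ interacts with $\encoracle_{b}(\sk_{1},\dots,\sk_{\pop},\cdot)$ for unknown keys $\sk_{1},\dots,\sk_{\pop}$ and proceeds as follows. First, it simulates the sampling step by drawing $\sample$ indices uniformly from $[\pop]$, obtaining a (multi)set of distinct indices $S \subseteq [\pop]$. For each $i \in S$, $\encadv$ generates a fresh secret key $\sk'_{i} \getsr \encgen(1^{\security})$ on its own, and sets $x$ to be the corresponding samples $(i, \sk'_{i})$; this is statistically identical to the sampling step in both games since unused keys are never revealed to $\oracle$. Then $\encadv$ runs $\ifpc$, and for each round $j = 1, \ldots, \length$ and each $i \in [\pop]$: if $i \in S$ it computes $\ct^{j}_{i} = \encenc(\sk'_{i}, \FPCcol^{j}_{i})$ directly using its own key; if $i \notin S$ it queries its encryption oracle on input $(i, \FPCcol^{j}_{i})$ to obtain $\ct^{j}_{i}$. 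It then forms $\query^{j}$ as in the attack, feeds it to $\oracle$ (using the known keys $\sk'_{i}$ for $i \in S$, which is all $\oracle$ needs), obtains $a^{j}$, rounds, and passes $\overline{a}^{j}$ to $\ifpc$, updating $T^{j}$. At the end, $\encadv$ outputs $1$ iff $|T^{\length} \setminus S| > \pop/1000$.

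If $b = 1$, every ciphertext $\ct^{j}_{i}$ is an encryption of $\FPCcol^{j}_{i}$ under $\sk_{i}$ (or $\sk'_{i}$), so the view of $(\ifpc, \oracle)$ is distributed exactly as in $\realgame_{\sample, \dimension}[\oracle]$. If $b = 0$, then for $i \notin S$ the ciphertext is instead an encryption of $0$, matching $\idealgame_{\sample, \dimension}[\oracle]$ exactly. Hence
\[
\left| \pr{}{\encadv^{\encoracle_{1}} = 1} - \pr{}{\encadv^{\encoracle_{0}} = 1} \right| = \left| \pr{\realgame}{Z_{1}} - \pr{\idealgame}{Z_{1}} \right|,
\]
and the right-hand side must be $\negl(\security) = \negl(\sample)$ by security of the encryption scheme (using $\security = \dimension - \lceil \log_2 \pop \rceil \geq \Omega(\dimension)$, which is polynomially related to $\sample$ since $\sample = \sample(\dimension)$ is polynomial).

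The only thing to verify is that $\encadv$ runs in time polynomial in $\security$. This follows because $\oracle$ is assumed computationally efficient, the interactive fingerprinting code from Theorem~\ref{thm:ifpcthm} runs in polynomial time, $\length = O(\sample^{2}/\irob^{4})$ is polynomial, and each call to $\encgen$, $\encenc$, or the encryption oracle costs $\poly(\security)$. No step is conceptually difficult; the only mild subtlety is making sure $\encadv$ knows the keys for indices in $S$ (needed so that $\oracle$'s sample $x$ is well-formed and $\oracle$ can internally decrypt) while leaving the keys for $[\pop] \setminus S$ to the encryption oracle — which is handled cleanly by having $\encadv$ generate the $S$-keys itself before interacting with its challenger.
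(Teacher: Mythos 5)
Your proof is correct and takes essentially the same approach as the paper's: construct an encryption adversary $\encadv$ that samples the user indices itself, generates its own keys only for indices in $S$, delegates encryptions for $i \notin S$ to the encryption oracle, simulates the attack transcript, and outputs $1$ iff the event occurs; then observe that $b=1$ reproduces $\realgame$ and $b=0$ reproduces $\idealgame$. This is precisely the simulator $\encadv_{c}$ the paper uses in its security reduction, with the same observation that the keys for $i\notin S$ never appear in $\oracle$'s view and so can be safely delegated to the encryption challenger.
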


\begin{claim}[Claim~\ref{clm:indist2} Restated] \label{clm:indist2restated}
Let $Z_{2}$ be the event
$
\set{\errors^{\length} \leq \rob\length}.
$
Assume $(\encgen, \encenc, \encdec)$ is a computationally secure encryption scheme and let $\sample = \sample(\dimension)$ be any polynomial.  Then if $\oracle$ is computationally efficient, for every $\dimension \in \N$
$$
\left| \pr{\idealgame_{\sample, \dimension}[\oracle]}{Z_{2}} - \pr{\realgame_{\sample, \dimension}[\oracle]}{Z_{2}} \right| \leq \negl(\sample)
$$
\end{claim}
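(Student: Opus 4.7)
The plan is a direct reduction to the semantic security of the encryption scheme. Observe that the only difference between $\realgame_{\sample, \dimension}[\oracle]$ and $\idealgame_{\sample, \dimension}[\oracle]$ lies in how the ciphertexts $\ct^j_i$ are generated for indices $i \in [\pop] \setminus S$: in the real game they encrypt the fingerprinting bit $\FPCcol^j_i$, while in the ideal game they encrypt $0$. Every other step (the sampling of $S$, the generation of keys, the execution of $\ifpc$, the invocations of $\oracle$, the rounding, the accusation updates, and the computation of $\errors^{\length}$) is syntactically identical. Since $Z_2 = \{\errors^{\length} \leq \rob\length\}$ is a function of the transcript alone, any gap in $\pr{}{Z_2}$ must be attributable to this one difference in the ciphertext distribution.

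Suppose for contradiction that for some polynomial $\sample = \sample(\dimension)$ and infinitely many $\dimension$, the two probabilities of $Z_2$ differ by at least $1/q(\sample)$ for a polynomial $q$. I would build a polynomial-time encryption adversary $\encadv$ as follows. $\encadv$ samples $S \subseteq [\pop]$ of size $\sample$ uniformly at random and generates fresh keys $\sk_i \getsr \encgen(1^\security)$ for $i \in S$ itself, so that it can form $x = ((i, \sk_i))_{i \in S}$ to hand to $\oracle$. For the remaining indices $i \in [\pop] \setminus S$, it relies on its encryption challenger (which holds the associated keys). $\encadv$ then runs the game honestly: in each round $j$ it obtains $\FPCcol^j$ from $\ifpc$, computes $\ct^j_i = \encenc(\sk_i, \FPCcol^j_i)$ directly for $i \in S$, and queries its encryption oracle at $(i, \FPCcol^j_i)$ to obtain $\ct^j_i$ for $i \in [\pop] \setminus S$. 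It then forms $\query^j$, calls $\oracle(x, \query^j)$, rounds to get $\overline{\answer}^j$, forwards to $\ifpc$, and updates $T^j$. After $\length$ rounds, $\encadv$ outputs $1$ iff $\errors^{\length} \leq \rob\length$.

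If the challenger is $\encoracle_1$, then $\ct^j_i$ for $i \notin S$ is distributed as $\encenc(\sk_i, \FPCcol^j_i)$, and the transcript $\encadv$ sees is distributed identically to $\realgame$; if the challenger is $\encoracle_0$, then $\ct^j_i$ for $i \notin S$ is an encryption of $0$, and the transcript matches $\idealgame$. Therefore the distinguishing advantage of $\encadv$ equals precisely $|\pr{\realgame}{Z_2} - \pr{\idealgame}{Z_2}| \geq 1/q(\sample)$. Because $\oracle$ is computationally efficient and $\ifpc$, $\encgen$, $\encenc$, and the rounding/counting steps all run in polynomial time, $\encadv$ runs in time $\poly(\sample) = \poly(\security)$ (using $\security = \dimension - \lceil \log_2(\pop) \rceil$ and $\sample = \poly(\dimension)$), contradicting the assumed security of the encryption scheme.

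I do not expect a substantive obstacle in this proof; the only mild subtlety is the issue that $\encadv$ must hand $\oracle$ the secret keys for $i \in S$ even though the encryption challenger nominally generates all $\pop$ keys. This is handled by the standard trick of having $\encadv$ sample those $|S| = \sample$ keys itself—this is exactly why the encryption preliminaries explicitly require security to hold against oracles holding an arbitrary polynomial number of keys, so one may view the challenger as holding only the $\pop - \sample$ keys actually used via $\encoracle_b$. The proof of Claim~\ref{clm:indist1restated} is essentially identical, substituting the event $Z_1 = \{\psi^{\length} > \pop/1000\}$ for $Z_2$, and the same template also establishes the analogous claims in Section~\ref{sec:nonprivacy}.
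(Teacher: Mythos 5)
Your proof is correct and takes essentially the same approach as the paper: build an encryption adversary $\encadv$ that samples the $|S|$ keys itself, obtains the remaining ciphertexts from the challenge oracle $\encoracle_b$, runs the game, and outputs $1$ iff $Z_2$ occurs, so that $b=1$ reproduces $\realgame$ and $b=0$ reproduces $\idealgame$. One minor inaccuracy: the sample should be $x_1,\dots,x_\sample$ drawn i.i.d.\ from $\dist$ (so $S$ is the set of distinct indices and $|S|$ may be less than $\sample$), rather than $S$ chosen uniformly of size exactly $\sample$, but this is a trivial fix that does not affect the reduction.
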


To prove both of these claims, for $c \in \{1,2\}$ we construct an adversary $\encadv_{c}$ that will attempt to use $\oracle$ to break the security of the encryption.  We construct $\encadv_{c}$ in such a way that its advantage in breaking the security of encryption is precisely the difference in the probability of the event $Z_{c}$ between $\realgame$ and $\idealgame$, which implies that the difference in probabilities is negligible.  The simulator is given in Figure~\ref{fig:simulator}

\begin{figure}[ht]
\begin{framed}
\begin{algorithmic}
\STATE{Simulate constructing and sampling from $\dist$:}
\INDSTATE[1]{Given parameters $\dimension, \sample$, let $\pop = 2000\sample$, let $\security = \dimension - \lceil \log_2(2000\sample) \rceil$.}
\INDSTATE[1]{Sample users $u_{1},\dots,u_{\sample} \getsr [\pop]$, let $S$ be the set of distinct users in the sample.}
\INDSTATE[1]{Choose new keys $\sk_{i} \getsr \encgen(1^{\security})$ for $i \in S$.}
\INDSTATE[1]{For $i \in S$, let $x_{i} = (u_{i}, \sk_{u_{i}})$, let $x = (x_1,\dots,x_{\sample})$.}
\STATE{}
\STATE{Simulate the attack:}
\INDSTATE[1]{Let $T^1 = \emptyset$.}
\INDSTATE[1]{For $j = 1,\dots,\length = \length(\pop)$:}
\INDSTATE[2]{Let $\FPCcol^{j}$ be the column given by $\ifpc$.}
\INDSTATE[2]{For $i = 1,\dots,\pop$:}
\INDSTATE[3]{If $i \in S$, let $\ct^{j}_{i} = \encenc(\sk_i, \FPCcol^{j}_{i})$, otherwise as $\encoracle$ for an encryption of $\FPCcol^{j}_{i}$ under}
\INDSTATE[3]{key $\ssk_{i}$, that is $\ct^{j}_{i} = \encoracle_b(\ssk_{1},\dots,\ssk_{\pop}, i, \FPCcol^{j}_{i})$.}
\INDSTATE[2]{Define the query $\query^{j}(i', \sk')$ to be $\encdec(\sk', \ct^j_{i'})$ if $i' \not\in T^{j}$ and $0$ otherwise.}
\INDSTATE[2]{Let $a^{j} = \oracle(x; \query^{j})$ and round $a^j$ to $\pmo$ to obtain $\overline{a}^j$.}
\INDSTATE[2]{Give $\overline{a}^j$ to $\ifpc$ and let $I^{j} \subseteq [\pop]$ be the set of accused users and $T^j = T^{j-1} \cup I^{j}$.}
\STATE{}
\STATE{Output $1$ if and only if the event $Z_{c}$ occurs}
\end{algorithmic}
\end{framed}
\vspace{-6mm}
\caption{$\encadv^{\encoracle_{b}(\ssk_{1},\dots,\ssk_{\pop}, \cdot)}_{c, \sample, \dimension}$}
\label{fig:simulator}
\end{figure}

\begin{proof}[Proof of Claims \ref{clm:indist1restated}, \ref{clm:indist2restated}]
First, observe that for $c \in \set{1,2}$, $\encadv_{c}$ is computationally efficient as long as $\ifpc$ and $\oracle$ are both computationally efficient.  It is not hard to see that our construction $\ifpc$ is efficient and efficiency of $\oracle$ is an assumption of the claim.  Also notice $\encadv$ can determine whether $Z_{c}$ has occurred efficiently.

Now we observe that when the oracle is $\encoracle_1$ (the oracle that takes as input $i$ and $m$ and returns $\encenc(\ssk_{i}, m)$), and $\ssk_{1},\dots,\ssk_{\pop}$ are chosen randomly from $\encgen(1^{\security})$, then the view of the oracle is identical to $\realgame_{\sample, \dimension}[\oracle]$.  Specifically, the oracle holds a random sample of pairs $(i, \sk_{i})$ and is shown queries that are encryptions either under keys it knows or random unknown keys.  Moreover, the messages being encrypted are chosen from the same distribution.  On the other hand, when the oracle is $\encoracle_0$ (the oracle that takes as input $i$ and $\ct$ and returns $\encenc(\ssk_{i}, 0)$), then the view of the oracle is identical to $\realgame_{\sample, \dimension}[\oracle]$.  Thus we have that for $c \in \{1,2\}$,
\begin{align*}
&\left| \pr{\idealgame_{\sample, \dimension}[\oracle]}{Z_{c}} - \pr{\realgame_{\sample, \dimension}[\oracle]}{Z_{c}} \right| \\
={} &\left| \pr{\ssk_{1},\dots,\ssk_{\pop} \getsr \encgen(1^{\security})}{\encadv_{c, \sample, \dimension}^{\encoracle_0(\ssk_{1},\dots,\ssk_{\pop},\cdot)} = 1} -  \pr{\ssk_{1},\dots,\ssk_{\pop} \getsr \encgen(1^{\security})}{\encadv_{c, \sample, \dimension}^{\encoracle_1(\ssk_{1},\dots,\ssk_{\pop},\cdot)} = 1} \right|
={} \negl(\security) = \negl(\dimension)
\end{align*}
The last equality holds because we have chosen $\pop = 2000\sample(\dimension) = \poly(\dimension)$, and therefore we have $\security = \dimension - \lceil \log \pop \rceil = \dimension - O(\log \dimension)$.  This completes the proof of both claims.
\end{proof}

\end{document}

\section{Interactive versus Non-Interactive Fingerprinting Codes}

A natural question is whether interactive fingerprinting codes give non-interactive fingerprinting codes. The answer is yes, under the assumption that the interactive fingerprinting code can output the challenges $\FPCcol^1, \FPCcol^2, \cdots \FPCcol^\length$ before receiving any answers $\answer^1, \answer^2, \cdots, \answer^\length$. This assumption holds for our construction. Thus we have also constructed non-interactive fingerprinting codes that match the parameters of \cite{Tardos03} in addition to improving the robustness results of \cite{BunUV14}.

The construction of a non-interactive fingerprinting code from an interactive fingerprinting code is specified in Figure \ref{fig:IFPCtoFPC}.
\begin{figure}[ht]
\begin{framed}
\begin{algorithmic}
\STATE{Let $\ifpc$ be a $\collusion$-collusion-resilient interactive fingerprinting code of length $\length$ for $\pop$ users robust to a $\rob$ fraction of errors with a failure probability $\fpcfail$ and false accusation probability $\fpcfalse<1/\pop$.}
\INDSTATE[1]{Assume $\ifpc$ specifies $\FPCcol^1, \FPCcol^2, \cdots \FPCcol^\length$ before receiving any answers.}
\STATE{}
\STATE{$\gen$:}
\INDSTATE[1]{Let $\FPCcol^1, \FPCcol^2, \cdots \FPCcol^\length$ be the challenges given by $\ifpc$.}
\INDSTATE[1]{Output $\FPCmat$ - the matrix with columns $\FPCcol^1, \FPCcol^2, \cdots \FPCcol^\length$.}
\STATE{}
\STATE{$\trace$:}
\INDSTATE[1]{Let $\answer = (\answer^1, \answer^2, \cdots, \answer^\length)$ be the codeword to be traced.}
\INDSTATE[1]{Continue running $\ifpc$ and give it responses $\answer^1, \answer^2, \cdots, \answer^\length$.}
\INDSTATE[1]{As soon as $\ifpc$ accuses some user $i \in [N]$ terminate $\ifpc$ and output $i$.}
\end{algorithmic}
\end{framed}
\vspace{-6mm}
\caption{A non-interactive fingerprinting code from an interactive fingerprinting code.}
\label{fig:IFPCtoFPC}
\end{figure}

\begin{theorem}
The algorithms $\gen$ and $\trace$ in Figure \ref{fig:IFPCtoFPC} are a $\collusion$-collusion-resilient (non-interactive) fingerprinting code of length $\length$ for $\pop$ users robust to a $\rob$ fraction of errors with a failure probability $\fpcfail \cdot \length$.
\end{theorem}
\begin{proof}
Suppose for the sake of contraditction that there exists a non-interactive adversary $\fpcadv$ that can defeat $(\gen,\trace)$. Let $\tilde{\fpcadv}$ be the interactive adversary defined by Figure \ref{fig:IFPCtoFPCadv}.

\begin{figure}[ht]
\begin{framed}
\begin{algorithmic}
\STATE{Let $\fpcadv$ be a non-interactive fingerprinting adversary.}
\STATE{Define an interactive fingerprinting adversary $\tilde{\fpcadv}$ as follows.}
\STATE{}
\STATE{Let $\fpcadv$ specify $S \subset [\pop]$ and set $S^1=S$.}
\STATE{For each $\rindex = 1, 2, \cdots, \length$:}
\INDSTATE[1]{Let $\FPCcol^1_{S}, \FPCcol^2_{S}, \cdots, \FPCcol^\rindex_{S}$ be the challenges received so far.}
\INDSTATE[1]{Sample $\tilde{\FPCcol}^{\rindex+1}, \tilde{\FPCcol}^{\rindex+2}, \cdots, \tilde{\FPCcol}^\length$ from the distribution of challenges given by $\ifpc$ conditioned on the first $\rindex$ challenges.}
\INDSTATE[1]{Run $\fpcadv$ on $\FPCcol^1_{S}, \FPCcol^2_{S}, \cdots, \FPCcol^\rindex_{S}, \tilde{\FPCcol}^{\rindex+1}_S, \tilde{\FPCcol}^{\rindex+2}_S, \cdots, \tilde{\FPCcol}^\length_S$.}
\INDSTATE[1]{Output $\answer^\rindex$, where $\answer$ is the output of $\fpcadv$.}
\end{algorithmic}
\end{framed}
\vspace{-6mm}
\caption{An interactive fingerprinting adversary $\tilde{\fpcadv}$  from a  non-interactive fingerprinting adversary $\fpcadv$.}
\label{fig:IFPCtoFPCadv}
\end{figure}
\end{proof}

Fix a round $\rindex \in [\length]$ and consider the probability that $\tilde{\fpcadv}$.

\end{document}
